\documentclass{article}
\usepackage[utf8]{inputenc}
\usepackage[english]{babel}

\usepackage{geometry}
\geometry{
 a4paper,
 total={150mm,230mm},
 left=30mm,
 top=30mm,
 }

\usepackage{amsmath,amsthm,amsfonts,amssymb,bm}
\usepackage{mathtools}
\usepackage{hyperref} 
\usepackage{url}
\usepackage{breakcites}
\usepackage{bbm} 
\usepackage{graphicx,subfigure,caption, float}
\usepackage{array}
\usepackage{color} 
\usepackage{colortbl}
\usepackage{newlfont}
\usepackage{mathrsfs}
\usepackage{lineno} 
\usepackage[toc,page]{appendix}
\usepackage{tikz}
\usepackage{pgfplots}
\usepackage{physics}
\usepackage{fancyhdr}
\usepackage{enumitem}
\usepackage{multirow}

\usepackage{crossreftools} 
\usepackage{apptools}

 
\addto\extrasenglish{} %
\addto\extrasenglish{} 

\usepackage{aliascnt}
\pgfplotsset{compat=1.16}   

\theoremstyle{plain}
\newtheorem{thrm}{Theorem}
    \newtheorem{theorem}{Theorem}[section]
    \newaliascnt{corollary}{theorem}
    
    \newtheorem{corollary}[corollary]{Corollary}
    \aliascntresetthe{corollary}

    \newaliascnt{lemma}{theorem}
    \newtheorem{lemma}[lemma]{Lemma}
    \aliascntresetthe{lemma}

    \newaliascnt{proposition}{theorem}
    \newtheorem{proposition}[proposition]{Proposition}
    \aliascntresetthe{proposition}

\theoremstyle{definition}
    \newaliascnt{definition}{theorem}
    \newtheorem{definition}[definition]{Definition}
    \aliascntresetthe{definition}


\theoremstyle{remark}
    \newaliascnt{remark}{theorem}
    \newtheorem{remark}[remark]{Remark}
    \aliascntresetthe{remark}

 \newaliascnt{example}{theorem}
    
    \aliascntresetthe{example}

%

\newcommand{\cP}{\mathcal{P}}
\newcommand{\cH}{\mathcal{H}}
\newcommand{\cV}{\mathcal{V}}
\newcommand{\cD}{\mathcal{D}}
\newcommand{\cO}{\mathcal{O}}

\newcommand{\cM}{\mathcal{M}}

\newcommand{\tq}{\tilde{\mathbf{q}}}
\newcommand{\tpsi}{\Tilde{\bm{\psi}}}
\newcommand{\td}{\tilde{d}}
\newcommand{\tY}{\tilde{Y}}
\newcommand{\tu}{\tilde{u}}
\newcommand{\tA}{\tilde{A}}

\newcommand{\tX}{\tilde{X}}
\newcommand{\tcO}{\tilde{\cO}}

\newcommand{\q}{\mathbf{q}}

\newcommand{\x}{\mathbf{x}}

\newcommand{\bfphi}{\bm{\phi}}
\newcommand{\bfpsi}{\bm{\psi}}

\newcommand{\bfxi}{\bm{\xi}}

\newcommand{\R}{\mathbb{R}}
\newcommand{\N}{\mathbb{N}}
\newcommand{\E}{\mathbb{E}\,}
\newcommand{\bP}{\mathbb{P}}
\newcommand{\Hmo}{\mathbf{H}^{-1}} 

\newcommand{\bLtwo}{\mathbf{L}^2}

\newcommand{\bH}{\mathbf{H}}

\newcommand{\Cd}{\mathcal{C}_{\tilde{d}}}
\newcommand{\CU}{\mathcal{C}_U}

\newcommand{\fa}{\quad \text{for all }\,}
\newcommand{\quand}{\quad \text{and} \quad }

\newcommand{\Ieps}{I_{\varepsilon}}
\newcommand{\scp}[1]{{\left\langle #1 \right\rangle}}

\DeclareMathOperator{\range}{range}
\DeclareMathOperator{\Law}{Law}

\newcommand{\ddt}{\tfrac{\text{d}}{\text{d}t}}

\newcommand{\vertiii}[1]{{\left\vert\kern-0.25ex\left\vert\kern-0.25ex\left\vert #1 
    \right\vert\kern-0.25ex\right\vert\kern-0.25ex\right\vert}}

\hyphenation{di-men-sion-al}
\hyphenation{se-mi-group}
\hyphenation{geo-stro-phic}
\hyphenation{bound-ed}
\hyphenation{mod-el}
\hyphenation{neigh-bor-hood}
    



 


%

\RequirePackage[normalem]{ulem} 
\RequirePackage{color}\definecolor{RED}{rgb}{1,0,0}\definecolor{BLUE}{rgb}{0,0,1} 


\pagestyle{plain}
\title{Linear and fractional response for nonlinear dissipative SPDEs}
\author{%
  Giulia Carigi
  \and Tobias Kuna
  \and Jochen Br\"{o}cker
  }
\date{21/10/2022}
\begin{document}
\maketitle
\begin{abstract}
    A framework to establish response theory for a class of nonlinear stochastic partial differential equations (SPDEs) is provided. More specifically, it is shown that for a certain class of observables, the averages of those observables against the stationary measure of the SPDE are differentiable (linear response) or, under weaker conditions, locally H\"{o}lder continuous (fractional response) as functions of a deterministic additive forcing. 
    The method allows to consider observables that are not necessarily differentiable.
    For such observables, spectral gap results for the Markov semigroup associated with the SPDE have recently been established that are fairly accessible.
    This is important here as spectral gaps are a major ingredient for establishing linear response.
    The results are applied to the 2D stochastic Navier--Stokes equation and the stochastic two--layer quasi--geostrophic model, an intermediate complexity model popular in the geosciences to study atmosphere and ocean dynamics. 
    The physical motivation for studying the response to perturbations in the forcings for models in geophysical fluid dynamics comes from climate change and relate to the question as to whether statistical properties of the dynamics derived under current conditions will be valid under different forcing scenarios.
\end{abstract}
\paragraph{Keywords:} SPDEs; Stochastic geophysical flow models; invariant measure.
\paragraph{AMS Subject Classification:} \textit{Primary:} 37L40, 86A08 \textit{Secondary:} 60H15, 37A60


\section{Introduction}
In this work we consider a framework suitable to establish response theory for a class of nonlinear stochastic partial differential equations including the 2D stochastic Navier--Stokes equation as well as the stochastic two--layer quasi--geostrophic model, an intermediate complexity model popular in the geosciences to study atmosphere and ocean dynamics. 
By studying response (linear and fractional) we provide an insight into how the long term statistical properties of the model of interest are affected by small changes in the parameters of the system, namely whether the statistics of observables under the current set of parameters will change little under small perturbations of the parameters, and derive a formula for the change of the statistics. In particular, by studying the response to perturbations in the parameters for models in geophysical fluid dynamics, like the two--layer quasi--geostrophic model, we give a mathematical interpretation of the question whether statistical properties derived under current conditions will be valid under future climates. For more on the relevance of linear response theory in geophysics see for example applications like \cite{AbramovMadja2012,Majda581, lucarini2017predicting} or the recent review paper \cite{GhilLucarini20}.

More specifically, consider a family of dynamical systems depending on a parameter and admitting an invariant measure. By {\em linear response} we mean the differentiability of the family of invariant measures with respect to the parameter, and by {\em fractional response} we mean H\"{o}lder continuity of the invariant measures in the parameter.
In fact, even though the invariant measures are often a very singular object, they can nonetheless change smoothly with respect to changes in the parameters, at least in a weak sense. In case of linear response in particular one aims at a \emph{response formula} that is and expression for the derivative of the invariant measure exclusively in terms of objects related to the unperturbed dynamics. In the applications this would mean that one can infer properties of the perturbed dynamics from those of the unperturbed.

Before considering response theory for dissipative SPDEs, let us first describe the known results for finite dimensional systems, where there exists a large body of mathematical literature on linear response. 
For hyperbolic systems, in absence of stochasticity, the pioneering work of Ruelle \cite{ruelle1997differentiation} ensured the differentiability of invariant measures, in particular of SRB measures which carry a certain physical interpretation. The result has been extended also to partially hyperbolic systems in \cite{dolgopyat2004differentiability} but little is known for other classes of deterministic systems, finite or infinite dimensional. In particular the existence of SRB measures for Navier-Stokes is entirely open. Equations of fluid dynamic seems out of scope to be treated with techniques used for finite dimensional dynamics.
For a review on linear response theory in deterministic systems see the survey article \cite{Baladi_Else}. 
For stochastic systems, the impact of stochastic perturbations on Ruelle's linear response has been investigated for example in \cite{lucarini2012stochastic}. Recent works \cite{Galatolo_2019} and \cite{BAHSOUN2020107011} pioneered linear response in finite dimensional random dynamical systems.
However less is known for infinite dimensional systems associated to stochastic partial differential equations.

To the best of the authors' knowledge, the only result which covers dynamical systems associated to a large class of stochastic partial differential equations is the work of Hairer and Madja \cite{HMajda10}. 
%
    The authors proved the weak differentiability of the unique invariant measures $\mu_a$ of families of Markov semigroups $\lbrace \cP_t^a \, : \, t\geq 0, a \in \R\rbrace$ on a Hilbert space $\cH$. This means that, given any $a_0\in \R$, the map 
    \begin{equation*}
        a \mapsto \langle \varphi, \mu_a \rangle = \int_\cH \varphi(x) \, \mu_a(dx)
    \end{equation*}
    is differentiable at $a_0$ for an appropriate class of test functions $\varphi$. Moreover an explicit expression for its derivative is provided. 
%
    In this paper we will reformulate the framework of \cite{HMajda10} and give a set of sufficient conditions for linear response in a general space of observables (\autoref{subsec:linear}), before providing a verifiable set of conditions for a class of SPDEs (\autoref{subsec:SPDElinear}) using a different space of test functions than in \cite{HMajda10}.

One crucial (but not necessary) condition to establish linear response for a given value $a_0$ of the parameter $a$ is that for a $t>0$ the operator $\cP_t^{a_0}$ has a spectral gap on an appropriate class of observables. This means that $\cP_t^{a_0}$ has one as simple eigenvalue and the remaining spectrum is concentrated in a disk of radius strictly smaller than one. 
As a consequence, the operator $\cP^{a_0}_{t}$ will have a resolvent on the space of observables modulo the constant functions (which constitute the eigenspace associated to the eigenvalue one).
In \cite{HMajda10} the authors consider as observables the closure of the space of smooth observables with respect to a weighted $C^1$--norm. 
Indeed in \cite{HMatt08}, the spectral gap property for such observables was proven for the 2D Navier-Stokes equations with highly degenerate noise. 
%
The framework developed here builds on a different approach to show the spectral gap developed in \cite{HMattSch11}. This result provides sufficient conditions for the spectral gap property to hold in a space of H\"{o}lder--type functions, which are therefore less regular than those used in \cite{HMajda10} and \cite{HMatt08}. This methodology is simpler to verify than the approach from \cite{HMatt08, HMajda10} as it does not require the use of Malliavin calculus for example. Furthermore, this methodology allows us to give quite concrete conditions for a wide class of SPDEs, since recently \cite{butkovsky2020} provided sufficient conditions, particularly suitable for SPDEs, for the results in \cite{HMattSch11} to apply. Since we use a different spectral gap result than \cite{HMajda10} however, we cannot deal with highly degenerate noises and we need to impose stronger conditions on the nature of the perturbations to the dynamics we study the response for.  
We focus on dissipative nonlinear and stochastic equations with an additional deterministic forcing, and study the dependence of the invariant measure of this equation on the forcing strength. In particular we obtain differentiability and a linear response formula for forcings which are in the range of the covariance of the noise (in a sense to be made precise). 
%

%
For forcings not satisfying such conditions we can nevertheless show weak H\"{o}lder continuity of the invariant measure, also referred to as fractional response (see e.g. \cite{baladi2017linear}).
This result does not provide a linear approximation of the perturbed dynamics in terms of the unperturbed one, but still ensures that a small change in the intensity of the forcing does not cause a discontinuity in the long time average behaviour of observables which are themselves at least H\"{o}lder continuous. Several results in the literature establish continuity properties in weaker topologies of the invariant measure on model parameters. For example in \cite[Section~5.5]{HMatt08} it is shown that the invariant measure of 2D Navier--Stokes equation with additive noise is locally continuous in a Wasserstein distance with respect to the model parameters. 
\subsection{Overview over results}
In Section~\ref{sec:response_method} we establish an abstract response result in the following setup.
Let $\Ieps := (a_0 - \varepsilon, a_0 + \varepsilon) \subset \R$ be an interval and $\{\cP^a \, : \, a \in \Ieps \rbrace $ a family of Markov transition kernels acting on a Hilbert space~$\cH$, with $\mu_a$ an invariant measure of $\cP^a$ for each $a \in \Ieps$.
Let $\cO$ be a Banach space of measurable functions (``observables'') on $\cH$ so that $(1 - \cP^{a}) \psi + c \in  \cO$ whenever $\psi \in \cO$ and $c \in \R$.
Finally, for $\|\cdot\|_U$ (the supremum--norm with weight function $U$) define the space $\CU := \overline{\cO}$, where the closure is with respect to the norm $\|\cdot\|_U$ (this norm is assumed to be finite on $\cO$).
Suppose the following conditions hold:
\begin{description}
        \item[(Spectral gap)] There exists $\rho<1$ such that 
    \begin{equation*}
        \|\cP^{a_0} \varphi - \scp{\varphi, \mu_{a_0}} \|_{\cO} \leq \rho \| \varphi - \scp{\varphi, \mu_{a_0}}\|_{\cO} \fa \varphi \in \cO;
    \end{equation*}
  \item[(Differentiability)]
for every $\psi \in \cO$ the map $a \mapsto (1 - \cP^a)\psi$ has values in $\CU$ and is differentiable at $a = a_0$, and furthermore
            \begin{equation*}
                \|D_a \cP^a \psi |_{a = a_0} \|_U \leq C \|\psi \|_{\cO};
            \end{equation*}
  \item[(Uniform integrability)] $\sup_{a \in \Ieps} \|\mu_a\|_U \leq C_{\varepsilon} $.
              \end{description}
 Then the following version of \cite[Theorem~2.3]{HMajda10} holds:
 \renewcommand{\thethrm}{\ref{thm:general_response_HM}}
  \begin{thrm}[Page~\pageref{thm:general_response_HM}]
        Under the stated assumptions, the map $a \mapsto \scp{ \varphi, \mu_a} $ is differentiable at $a = a_0$ for every $\varphi\in \cO$ and in particular 
        \begin{equation*}
            \left. \dv{}{a} \scp{ \varphi, \mu_a }\right|_{a = a_0} = \scp{ D_a \cP^{a}|_{a = a_0} (1 - \cP^{a_0})^{-1}(\varphi - \scp{\varphi, \mu_{a_0}}), \mu_{a_0}}. 
        \end{equation*}
    \end{thrm}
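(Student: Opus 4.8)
The plan is to carry out the ``resolvent trick'' underlying \cite[Theorem~2.3]{HMajda10}: rewrite the difference $\scp{\varphi,\mu_a}-\scp{\varphi,\mu_{a_0}}$ of averages against two different invariant measures as a pairing of the \emph{single} measure $\mu_a$ with a \emph{small} observable, so that the limit $a\to a_0$ can be taken while only one moving object has to be controlled at a time.

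First I would reduce to the case $\scp{\varphi,\mu_{a_0}}=0$: since each $\mu_a$ is a probability measure, subtracting the constant $\scp{\varphi,\mu_{a_0}}$ from $\varphi$ leaves every difference quotient unchanged, and the target formula already involves the shifted observable $\varphi-\scp{\varphi,\mu_{a_0}}$. For such $\varphi$ the (Spectral gap) assumption lets me set $\psi := (1-\cP^{a_0})^{-1}\varphi$: indeed $\cP^{a_0}$ preserves the mean-zero subspace $\cO_0 = \{g\in\cO:\scp{g,\mu_{a_0}}=0\}$ and contracts it with factor $\rho<1$ there, so $1-\cP^{a_0}$ is boundedly invertible on $\cO_0$ via its Neumann series (using that $\cO$ is a Banach space and $\cP^{a_0}$ is bounded on it), whence $\psi\in\cO_0$ with $\|\psi\|_{\cO}\le(1-\rho)^{-1}\|\varphi\|_{\cO}$. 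The key identity is then obtained from $\varphi=(1-\cP^{a_0})\psi$ together with the invariance $\scp{(1-\cP^a)\psi,\mu_a}=0$ of $\mu_a$ under $\cP^a$:
\[
  \scp{\varphi,\mu_a} = \scp{(1-\cP^{a_0})\psi - (1-\cP^{a})\psi,\ \mu_a} = \scp{(\cP^{a}-\cP^{a_0})\psi,\ \mu_a}.
\]

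Dividing by $a-a_0$ gives $\scp{\varphi,\mu_a}/(a-a_0)=\scp{R_a,\mu_a}$ with $R_a:=(\cP^{a}-\cP^{a_0})\psi/(a-a_0)$. By (Differentiability) applied to $\psi\in\cO$, $R_a\to h:=D_a\cP^a\psi|_{a=a_0}$ in $\|\cdot\|_U$ as $a\to a_0$, with $h\in\CU$. I would then split $\scp{R_a,\mu_a}=\scp{R_a-h,\mu_a}+\scp{h,\mu_a}$; using the elementary bound $|\scp{g,\mu_a}|\le\|g\|_U\|\mu_a\|_U$ and (Uniform integrability), the first term is at most $C_\varepsilon\|R_a-h\|_U\to0$. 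For the second term I need $a\mapsto\scp{h,\mu_a}$ to be continuous at $a_0$, which I would establish in two stages. On $\cO$ it follows directly from the key identity: for $g\in\cO_0$ one has $|\scp{g,\mu_a}-\scp{g,\mu_{a_0}}|\le C_\varepsilon\|(\cP^a-\cP^{a_0})(1-\cP^{a_0})^{-1}g\|_U = C_\varepsilon|a-a_0|\,\|R_a^{(g)}\|_U\to0$, and general $g\in\cO$ is handled by adding back the constant. On $\CU=\overline{\cO}^{\,\|\cdot\|_U}$ it then follows by approximating $h$ by $g_n\in\cO$ and using $|\scp{h-g_n,\mu_a}|\le C_\varepsilon\|h-g_n\|_U$ uniformly in $a\in\Ieps$ in a three-$\varepsilon$ argument. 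Collecting the three limits yields $\lim_{a\to a_0}\scp{\varphi,\mu_a}/(a-a_0)=\scp{h,\mu_{a_0}}$, and undoing the initial reduction produces the stated response formula for every $\varphi\in\cO$.

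The step I expect to be the main obstacle is this last continuity upgrade from $\cO$ to $\CU$: the response formula unavoidably pairs $\mu_{a_0}$ against the derivative $h$, which (Differentiability) only places in the completion $\CU$ and not in $\cO$, so the convenient identity cannot be invoked for $h$ directly and one must instead propagate continuity by density together with the uniform mass bound. The remaining ingredients — invertibility of $1-\cP^{a_0}$ through the Neumann series and the estimate $|\scp{g,\mu_a}|\le\|g\|_U\|\mu_a\|_U$ — are routine.
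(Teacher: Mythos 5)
Your proof is correct and follows essentially the same route as the paper: the paper merely factors the argument into \autoref{prop:spectralgap_invertibility} (Neumann--series inversion of $1 - \cP^{a_0}$ on $\ker\mu_{a_0}$), the identity~\eqref{eq:formal_inversion_(1-P)}, and \autoref{prop:weak_convergence_CU} (the uniform bound $|\scp{g,\mu_a}| \leq C_\varepsilon \|g\|_U$ plus the density/three--epsilon upgrade of weak continuity from $\cO$ to $\CU$), whereas you write the same chain out in one piece. The step you single out as the main obstacle is exactly the content of \autoref{prop:weak_convergence_CU} together with the closure argument $\overline{\ker_{\cO}\mu_{a_0}} = \ker_{\CU}\mu_{a_0}$ in the paper's proof of the theorem, so no new idea is missing.
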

H\"{o}lder continuity of the invariant measure with respect to the parameter is the subject of Theorem~\ref{thm:general_holder}.
The conditions are very similar to those of Theorem~\ref{thm:general_response_HM}, except that the differentiability condition is replaced by the requirement that the mapping $a \to \cP^{a}$ is H\"{o}lder continuous as a mapping from $\Ieps$ to $L(\cO, \CU)$, the space of linear operators between $\cO$ and $\CU$, with the operator norm.
In Section~\ref{sec:response_SPDE} we specialise the methodology to Markov processes generated by dissipative SPDE's driven by moderately degenerate additive noise and study the response with respect to the amplitude of a deterministic external forcing. 
In Theorem~\ref{thm:expconvergence} the spectral gap for observables that are basically H\"{o}lder continuous functions obtained in \cite{carigi2022exponential} is recalled. 
Yet this brings about a difficulty when proving the differentiability of the Markov kernel, as the observables are not differentiable with respect to the forcing for individual realisations of the process.
This difficulty is circumvented by using a Girsanov transform to compare the distribution of the process for different forcings; for this to work however, the forcing has to be in the range of the noise covariance.
Under that condition, Theorem~\ref{thm:responseU} establishes linear response.
Without any condition on the range of the noise covariance, we can still establish fractional response in Theorem~\ref{thm:SPDEHolder} under assumptions that are otherwise the same as for linear response (except that the forcing needs to be H\"{o}lder continuous in the parameter, only).
In Sections~\ref{sec:response_NS} and~\ref{sec:response_QG}, we apply the developed methodology to the two--dimensional stochastic Navier--Stokes equation (Theorems~\ref{thm:NSresponse} and~\ref{thm:NSHolder} for linear and fractional response, respectively), and to the stochastic two--layer quasi--geostrophic (2LQG) model (Theorems~\ref{thm:QGresponse} and~\ref{thm:QGHolder} for linear and fractional response, respectively). 
 The 2LQG equations model mid--latitude atmosphere and ocean dynamics at large scale.
 The model describes two layers of fluid one on top of the another.
The fluid experiences the Coriolis effect and an external forcing which acts only on the top layer and has a non-trivial stochastic part, accounting for example for wind shear on the surface.
For a more detailed exposition of the mathematical description see for example \cite{carigi2022exponential} and references therein.

The model equations are
\begin{align}
\label{eq:QG_stochastic_intro}
\begin{split}
    &d q_1 + J(\psi_1, q_1 + \beta y ) \, dt = \left(\nu\Delta^2\psi_1 \,+ f(a) \right) dt + d W\\
    & \partial_t q_2 + J(\psi_2, q_2 + \beta y ) = \nu\Delta^2\psi_2 - r\Delta \psi_2,
\end{split}
\end{align}
on a squared domain $\cD = [0,L]\times [0,L]\subset \R^2$ where $\psi_1, \psi_2$ are the streamfunctions and $q_1, q_2$ are the quasi--geostrophic potential vorticities of the upper and lower layer, respectively; they are related through
 \begin{equation*}
\begin{split}
     q_1 = \Delta \psi_1 + F_1(\psi_2 - \psi_1) \\
     q_2 = \Delta \psi_2 + F_2(\psi_1 - \psi_2),
\end{split}
\end{equation*}
where $F_1, F_2$ are positive constants. 
Moreover, $J$ is the Jacobian operator $J(a,b) = \nabla^{\perp}a \cdot \nabla b$, $W$ is a Wiener process with covariance $Q$ which is a nonnegative, symmetric operator which we take to be trace class in $L^2$. We also assume that $Q$ and $-\Delta$ commute.
In \cite{carigi2022exponential} it is shown that \eqref{eq:QG_stoc_vec} exhibits a spectral gap whenever the parameter $r$ in Equation~\eqref{eq:QG_stochastic_intro} (which is related to be bottom friction) is large enough (in terms of other parameters of the model, see Sec.~\ref{subsec:QGspectralgap} for details).
In this situation, we have the following result:
     \renewcommand{\thethrm}{\ref{thm:QGresponse}}
 \begin{thrm}[Page~\pageref{thm:QGresponse}]
         Consider the two--layer quasi--geostrophic equation \eqref{eq:QG_stoc_vec} with $f(a)$ continuously differentiable as a function from $\R$ into $\range Q$ with $\left|Q^{-1/2}D_af(a)\right|$ locally uniformly bounded in $a$, and invariant measure $\mu_a$. If condition \eqref{eq:condition_r} on $r$ is met, the map $a \mapsto\langle \varphi, \mu_a\rangle $ is differentiable at $a = a_0$ for every $\varphi\in \Cd$ with 
         \begin{equation*}
            \left. \dv{}{a} \langle \varphi, \mu_a \rangle\right|_{a = a_0} = \langle D_a \cP_t^{a}|_{a = a_0} (1 - \cP_t^{a_0})^{-1}(\varphi - \langle \varphi, \mu_{a_0}\rangle  , \mu_{a_0}\rangle. 
        \end{equation*}
    \end{thrm}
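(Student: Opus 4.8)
The plan is to obtain \autoref{thm:QGresponse} as an application of the abstract linear response result \autoref{thm:general_response_HM} (equivalently, of \autoref{thm:responseU} from \autoref{sec:response_SPDE}), by verifying the three abstract hypotheses --- (Spectral gap), (Differentiability), (Uniform integrability) --- for the Markov transition kernel $\cP_t^a$ associated with the two-layer quasi-geostrophic equation \eqref{eq:QG_stoc_vec}, with the space of observables $\cO$ taken to be the H\"older-type space whose closure under the weighted sup-norm is $\Cd$. First I would fix a time $t>0$ at which the spectral gap holds and recall from \autoref{thm:expconvergence} that, under condition \eqref{eq:condition_r} on the bottom friction $r$, the operator $\cP_t^{a_0}$ contracts $\cO$ modulo constants with some rate $\rho<1$; this is exactly the (Spectral gap) hypothesis and requires no new work beyond citing \cite{carigi2022exponential}. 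The (Uniform integrability) bound $\sup_{a\in\Ieps}\|\mu_a\|_U \le C_\varepsilon$ should follow from the a priori moment estimates for the 2LQG dynamics that are already used to construct the invariant measures, exploiting that the forcing $f(a)$ stays in a bounded set as $a$ ranges over the compact interval $\overline{\Ieps}$.

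The substantive step is (Differentiability): for each $\psi\in\cO$ the map $a\mapsto (1-\cP_t^a)\psi$ must take values in $\Cd$, be differentiable at $a=a_0$, and satisfy $\|D_a\cP_t^a\psi|_{a=a_0}\|_U \le C\|\psi\|_{\cO}$. Since the observables $\varphi\in\Cd$ are only H\"older continuous --- not differentiable along individual trajectories --- one cannot differentiate $\cP_t^a\psi(x)=\E[\psi(X_t^{a,x})]$ by differentiating inside the expectation in the state variable. Instead I would follow the Girsanov route flagged in the overview: write the forcing perturbation $f(a)-f(a_0)$, which by hypothesis lies in $\range Q$ with $|Q^{-1/2}D_af(a)|$ locally bounded, as an admissible Cameron--Martin shift of the driving Wiener process $W$; then $\cP_t^a\psi(x) = \E\bigl[\psi(X_t^{a_0,x})\,\mathcal{E}_t^a\bigr]$, where $\mathcal{E}_t^a$ is the Girsanov density $\exp\bigl(\int_0^t \langle Q^{-1/2}(f(a)-f(a_0)),dW_s\rangle - \tfrac12\int_0^t|Q^{-1/2}(f(a)-f(a_0))|^2 ds\bigr)$. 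Now the $a$-dependence has been moved entirely into the scalar density $\mathcal{E}_t^a$, which is genuinely differentiable in $a$; differentiating gives $D_a\cP_t^a\psi(x)|_{a=a_0} = \E\bigl[\psi(X_t^{a_0,x})\,\int_0^t\langle Q^{-1/2}D_af(a_0),dW_s\rangle\bigr]$. The required bound then comes from Cauchy--Schwarz together with a uniform moment bound on the stochastic integral (an $L^2$ martingale whose bracket is $t\,|Q^{-1/2}D_af(a_0)|^2$, finite and locally bounded by assumption) and the weighted sup-norm control of $\psi$; one also checks that the resulting function of $x$ lies in $\CU=\Cd$, using continuity of $x\mapsto\Law(X_t^{a_0,x})$ together with the weight growth.

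The main obstacle I anticipate is making the Girsanov argument fully rigorous at the level of differentiability rather than mere absolute continuity: one needs that the map $a\mapsto \mathcal{E}_t^a$ is differentiable in, say, $L^p(\bP)$ with a derivative that is uniform over initial conditions $x$ in the sense dictated by the weight $U$, and one needs Novikov-type integrability of $\mathcal{E}_t^a$ and of its derivative. This forces the quantitative hypothesis $|Q^{-1/2}D_af(a)|$ locally uniformly bounded, and also requires care that the exponential moments of the 2LQG solution (which appear when bounding $\E[\psi(X_t)\mathcal{E}_t^a]$ via H\"older with the density in some $L^q$) are compatible with the admissible weights $U$ --- this is precisely where the 2LQG-specific a priori estimates from \cite{carigi2022exponential} must be invoked. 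Once (Differentiability) is established with these constants, \autoref{thm:general_response_HM} applies verbatim: $(1-\cP_t^{a_0})$ is boundedly invertible on $\cO/\text{const}$ by the spectral gap, and the response formula
\begin{equation*}
\left.\dv{}{a}\langle\varphi,\mu_a\rangle\right|_{a=a_0} = \langle D_a\cP_t^a|_{a=a_0}(1-\cP_t^{a_0})^{-1}(\varphi-\langle\varphi,\mu_{a_0}\rangle),\mu_{a_0}\rangle
\end{equation*}
drops out immediately, which is the claimed identity.
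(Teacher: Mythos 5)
Your proposal is correct and follows essentially the same route as the paper: the paper's own proof of \autoref{thm:QGresponse} is a one-line appeal to \autoref{thm:responseU}, whose ingredients are exactly the three checks you describe --- the spectral gap from \autoref{thm:expconvergence} under \eqref{eq:condition_r}, the Girsanov-based differentiability of $a\mapsto\cP_t^a\psi$ (packaged in the paper as \autoref{thm:diffsemigroup}, including the same density, the same derivative formula, and the same moment/uniform-integrability concerns you flag), and the Lyapunov bound giving $\sup_a\scp{U,\mu_a}<\infty$. The only cosmetic slip is writing $\CU=\Cd$; in the paper $\CU$ is the strictly larger closure of $\Cd$ under $\|\cdot\|_U$, which is what makes the weaker differentiability requirement usable.
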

    %
%
Theorem~\ref{thm:NSresponse} provides a very similar result for 2D Navier--Stokes with additive noise. 
For both equations, the core condition to establish linear response is that the force lies within the range of the noise covariance. If $f$ does not take values in the range of $Q$ then in \autoref{thm:QGHolder} and \autoref{thm:NSHolder} we show that $a \mapsto \langle \varphi, \mu_a \rangle$ is H\"{o}lder continuous, this holds when $a\mapsto f(a)$ is itself only H\"{o}lder continuous. 
Spectral gap results have been demonstrated for example in \cite{butkovsky2020} for Navier--Stokes and in \cite{carigi2022exponential} for the 2LQG model.
We expect the results presented here to be straightforwardly applicable to other model with similar structure.
\paragraph{Acknowledgments}
The work presented here greatly benefited from fruitful discussions with a number of colleagues. 
In particular, we are very grateful to Benedetta Ferrario, Franco Flandoli, Valerio Lucarini, and Jeroen Wouters for criticisms, comments, suggestions, and encouragement.
GC's work was funded by the Centre for Doctoral Training in Mathematics of Planet Earth, UK (EPSRC grant agreement EP/L016613/1), by an LMS Early Career Fellowship (grant ECF1920-48), and by a postdoctoral fellowship (EPSRC grant agreement EP/W522375/1).
    
\section{General Methodology}
\label{sec:response_method}
%
  \subsection{A perturbation identity}\label{subsec:response_setup}
    Consider an open interval $I \subset \R$ and for each $a \in I$, let $\{\cP^a_t\}_{t \geq 0}$ be a Markov semigroup on $\cH$ depending on the parameter $a$, with $\mu_a$ be a corresponding invariant probability measure.
        We want to study the regularity of the averages $\scp{\varphi, \mu_a}$ with respect to $a$ for suitable observables $\varphi$.
        In particular, we aim to establish conditions that ensure differentiability (linear response) and local H\"{o}lder continuity (fractional response) of $a \mapsto \scp{\varphi, \mu_a}$.
        Let us start by showing a simple yet crucial identity.
        
        Fix a time $t$ and for notational simplicity drop the dependence on time of the semigroup.
        We consider a space $\cO$ of measurable (but not necessarily bounded) functions $\psi$ on $\cH$ so that $(1 - \cP^{a_1})\psi$ is integrable with respect to $\mu_{a_2}$ for any $a_1, a_2 \in I$.
        Further, for each $a \in I$ we let $\tcO_a$ be the space of functions $\varphi$ which can be represented in the form $\varphi = (1 - \cP^a)\psi + c$ for some $\psi \in \cO$ and constant $c \in \R$.
        Since $\scp{(1 - \cP^{a}) \psi, \mu_a} = 0$ for any $a \in I$ by the invariance of $\mu_a$, we find $c = \scp{\varphi, \mu_a}$.
        The announced identity is 
        \begin{equation}
    \label{eq:formal_inversion_(1-P)}
        \scp{\varphi, \mu_{a_2} - \mu_{a_1}} = \scp{(\cP^{a_2}- \cP^{a_1}) \psi, \mu_{a_2}},
        \end{equation}
        for any $\varphi \in \tcO_{a_1}$, where $\psi \in \cO$ is a solution of $\varphi = (1 - \cP^{a_1}) \psi + c$.
To see this, we use the invariance several times to write the right hand side of Equation~\eqref{eq:formal_inversion_(1-P)} as
        \begin{equation*}
          \begin{split}
            \scp{ (\cP^{a_2} - \cP^{a_1}) \psi, \mu_{a_2} },
            & = \scp{ (1 - \cP^{a_1}) \psi, \mu_{a_2} }
            - \scp{ (1 - \cP^{a_2}) \psi, \mu_{a_2} } \\
        & = \scp{ (1 - \cP^{a_1}) \psi, \mu_{a_2} } \\
        & = \scp{ (1 - \cP^{a_1}) \psi, \mu_{a_2} }
        - \scp{ (1 - \cP^{a_1}) \psi, \mu_{a_1} }\\
        & = \scp{ \varphi , \mu_{a_2}  - \mu_{a_1} }.
        \end{split}
        \end{equation*}
Note that all terms in this calculation are well defined due to our integrability assumption.
Equation~\eqref{eq:formal_inversion_(1-P)} demonstrates that differentiability or local H\"{o}lder continuity of $a \mapsto\scp{ \varphi, \mu_a }$ hinges on the corresponding property of the semigroup which we will consider in the next sections.
Furthermore, the fact that $\tcO_a$ depends on $a$ is an issue, and the dependence of $\psi$ on $\varphi$ will need to be clarified.
    We will address this using the following result although alternatives are certainly possible.
    \begin{proposition}
      \label{prop:spectralgap_invertibility}
        Fix $a \in I$ and suppose that $(\cO, \| \cdot \|_{\cO} )$ is a Banach space. Further, assume that $\tcO_a \subset \cO$, that $\ker \mu_a$ is a closed subset of $\cO$, and that there exists $\rho<1$ and $t>0$ such that 
        \begin{equation}
        \label{eq:ch5spectralgapO}
            \|\cP^{a}_t \varphi - \scp{\varphi, \mu_a} \|_{\cO} \leq \rho \,\| \varphi - \scp{\varphi, \mu_a} \|_{\cO} \fa \varphi \in \cO.
        \end{equation}
        Then the following holds:
        \begin{enumerate}
          \item $1 - \cP^{a}_t$ is invertible on $\ker\mu_a$ and $\|\left.(1 - \cP^{a}_t)\right|_{\ker \mu_a} ^{-1}\|_{\cO} \leq \frac{1}{1 - \rho}$;
        \item Every $\varphi \in \cO$ can be represented in the form $\varphi = (1 - \cP^{a}_t) \psi + \scp{\varphi, \mu_a}$ with $\psi  = (1 - \cP^{a}_t)^{-1} (\varphi - \scp{\varphi, \mu_a}) \in \ker\mu_a$; in particular, we have $\tcO_a = \cO$;
      \item For $\psi$ from the previous item we have the bound $\|\psi\|_{\cO} \leq \frac{1}{1 - \rho} \| \varphi - \scp{\varphi, \mu_a} \|_{\cO}$.
        \end{enumerate}
    \end{proposition}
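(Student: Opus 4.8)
The plan is to restrict $1-\cP^a_t$ to the closed subspace $\ker\mu_a$ and invert it there by a Neumann series, the spectral gap \eqref{eq:ch5spectralgapO} supplying exactly the contraction factor that makes the series converge; the three assertions then fall out almost immediately.

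First I would dispose of two preliminary points. Constant functions lie in $\cO$ (take $\psi=0$ in the definition of $\tcO_a$ and use $\tcO_a\subseteq\cO$), so the left-hand side of \eqref{eq:ch5spectralgapO} being a finite $\|\cdot\|_{\cO}$--norm forces $\cP^a_t\varphi-\scp{\varphi,\mu_a}\in\cO$, hence $\cP^a_t\varphi\in\cO$ for every $\varphi\in\cO$; thus $\cP^a_t$ is a linear operator on $\cO$. Next, $\scp{\cdot,\mu_a}$ is a linear functional on $\cO$ which is nonzero (it sends $1$ to $1$) and has closed kernel, hence is continuous, and by invariance of $\mu_a$ one has $\scp{\cP^a_t\varphi,\mu_a}=\scp{\varphi,\mu_a}$ for every $\varphi\in\cO$ --- valid for bounded $\varphi$ by the very definition of an invariant measure, and extended to all of $\cO$ by the standing integrability assumptions. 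Consequently $\cP^a_t$ maps $\ker\mu_a$ into itself, so $1-\cP^a_t$ maps $\ker\mu_a$ into itself as well, and on $\ker\mu_a$ the estimate \eqref{eq:ch5spectralgapO} reads $\|\cP^a_t\varphi\|_{\cO}\le\rho\,\|\varphi\|_{\cO}$; so $\cP^a_t$ restricts to a bounded operator on $\ker\mu_a$ of norm at most $\rho<1$.

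For the first assertion, $\ker\mu_a$ is a closed subspace of the Banach space $\cO$, hence itself complete; since $\|\cP^a_t|_{\ker\mu_a}\|_{\cO}\le\rho<1$, the Neumann series $\sum_{n\ge0}(\cP^a_t)^n$ converges in operator norm and furnishes a bounded inverse of $(1-\cP^a_t)|_{\ker\mu_a}$ with operator norm at most $\sum_{n\ge0}\rho^n=(1-\rho)^{-1}$. For the second assertion, given $\varphi\in\cO$ set $c:=\scp{\varphi,\mu_a}$; then $\varphi-c\in\ker\mu_a$, so $\psi:=(1-\cP^a_t)^{-1}(\varphi-c)\in\ker\mu_a$ is well defined and solves $(1-\cP^a_t)\psi=\varphi-c$, that is, $\varphi=(1-\cP^a_t)\psi+\scp{\varphi,\mu_a}$; in particular $\varphi\in\tcO_a$, so $\cO\subseteq\tcO_a$, which together with the hypothesis $\tcO_a\subseteq\cO$ gives $\tcO_a=\cO$. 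The third assertion is then just the operator-norm bound of the first assertion applied to $\varphi-c$, namely $\|\psi\|_{\cO}\le(1-\rho)^{-1}\|\varphi-\scp{\varphi,\mu_a}\|_{\cO}$.

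I do not expect a genuine obstacle here: the argument is the textbook Neumann series for a contraction on a Banach space. The one place that deserves attention is the preliminary step --- checking that $\cP^a_t$ preserves $\ker\mu_a$, which relies on the invariance identity $\scp{\cP^a_t\varphi,\mu_a}=\scp{\varphi,\mu_a}$ for observables in $\cO$ that need not be bounded, and on the (implicit) fact that $\scp{\cdot,\mu_a}$ is a well-defined, continuous functional on $\cO$. Once these are secured, everything else is routine.
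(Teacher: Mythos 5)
Your proof is correct, and it is the standard Neumann-series argument that the paper itself relies on (the paper omits an explicit proof, but \autoref{remark_centred1} records exactly the preliminary observations you make: constants lie in $\cO$, $\cP^a_t$ is an operator on $\cO$, and invariance makes $\ker\mu_a$ invariant under $\cP^a_t$). Your attention to the implicit integrability of elements of $\cO$ with respect to $\mu_a$, needed for $\scp{\cdot,\mu_a}$ to be well defined on unbounded observables, is well placed and consistent with the paper's standing assumptions.
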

    %
    %
%
    It follows that if the conditions in Proposition~\ref{prop:spectralgap_invertibility} are met, then Equation~\eqref{eq:formal_inversion_(1-P)} holds for all $\varphi\in \cO$.
    \begin{remark}\label{remark_centred1}
      If $\tcO_a \subset \cO$ then $\cO$ contains the constant functions.
      Furthermore $1 - \cP^{a}$ and hence also $\cP^{a}$ are operators on $\cO$.
      This means that both sides of Equation~\eqref{eq:ch5spectralgapO} are well defined.
      Furthermore, that equation implies the existence of a {\em spectral gap} between the eigenvalue~1 of $\cP^a$ (with eigenspace being the constant functions) and the remainder of the spectrum of $\cP^a$ on $\cO$ being confined to a circle of radius $\rho < 1$.
      Proposition \ref{prop:spectralgap_invertibility} then states, broadly speaking, that $(1 - \cP_t^{a})$ is invertible on $\ker\mu_a$.
      \end{remark}
\subsection{Linear Response}\label{subsec:linear}
We now focus on the differentiability in $a$ of $\scp{ \varphi, \mu_{a} }$.
In order to simplify the notation, we assume, without loss of generality, that $a = a_0$ and that the interval $I$ is of the form $\Ieps := (a_0-\varepsilon, a_0+ \varepsilon)$, for $\varepsilon>0$; we will also write $\tcO$ instead of $\tcO_{a_0}$.
     In \autoref{thm:general_response} we provide general conditions on $\cO$ and the semigroup which ensure that  $\scp{ \varphi, \mu_{a} }$ with $\varphi\in \tcO$ is differentiable in $a = a_0$. 
     If in addition the conditions of \autoref{prop:spectralgap_invertibility} hold at $a = a_0$, it will be demonstrated that $\scp{ \varphi, \mu_{a }}$ with $\varphi\in \cO$ is differentiable in $a = a_0$ (see \autoref{thm:general_response_HM}).
      Given a suitable space $\cO$, it follows from Equation~\eqref{eq:formal_inversion_(1-P)} that in order to compute the derivative of $\scp{ \varphi, \mu_a }$ at $a = a_0$ for some $\varphi\in \tcO$, we need to show that the following limit exists:
      \begin{equation}
        \label{equ:limit_zero}
        \lim_{a \to a_0}\scp{  \frac{(\cP^a- \cP^{a_0})
    \psi}{a - a_0}, \mu_a } , 
    \end{equation}
 where $\psi \in \cO$ such that $\varphi  = (1 - \cP_t^{a_0})\psi + c$. 
    Broadly speaking, if $\cP^a \psi$ is differentiable in $a = a_0$ and $\mu_a$ is weakly continuous in $a = a_0$, with some uniformity in $a$, we can expect the desired convergence to hold. 
    To formalise this idea, it turns out to be convenient in applications to introduce a further Banach space $\CU$ of measurable functions on $\cH$ with norm
    \begin{equation}
    \label{eq: def norm U}
        \|\varphi\|_{U} := \sup_{x\in \cH}  \frac{|\varphi(x)|}{U(x)},
    \end{equation} 
    where $U: \cH\to [1, \infty)$ is assumed integrable with respect to $\mu_a$ for any $a \in \Ieps$.
      (This condition is preliminary only and will need to be strengthened later, see \eqref{eq: condition U} below.)
      For every $\psi \in \cO$ consider the map $a \mapsto (1 - \cP^a) \psi$.
      Suppose it is differentiable at $a = a_0$ as a function from $\Ieps$ into $\CU$, that is, there exists an element of $\CU$ which we denote by $D_a (\cP^a \psi)|_{a = a_0}$ such that 
    \begin{equation}
    \label{eq:assump2HMadja2}
        \dfrac{ (\cP^a- \cP^{a_0})\psi}{a - a_0} =  \left. D_a (\cP^a \psi)\right|_{a = a_0} + r_a,
    \end{equation}
   where the remainder $r_a$ is a function from $\Ieps$ to $\CU$ with $\|r_a\|_{U} \to 0 $ for $a \to a_0$.
   If in fact $\tcO \subset \CU$ (as it will be the case in our applications), the differentiability of $\cP^{a}$ as a function $I_\varepsilon \rightarrow \CU$ is a weaker requirement than the differentiability as a function $I_\varepsilon \rightarrow \tcO$.
   In our applications to SPDE's, $U$ will be an unbounded function and related to how the a priori estimates depend on the initial condition.
   It might seem more natural to define $D_a (\cP^a \psi) |_{a = a_0}$ as the derivative of $ \cP^a \psi$ at $a = a_0$, rather than of $(1 - \cP^a) \psi$ as we do it here; the reason for doing it this way though is that $1 - \cP^a$ is required  to have values in $\CU$ which is not a~priori true for $\cP^a$.
   Using Equation~\eqref{eq:assump2HMadja2} in~\eqref{equ:limit_zero}, we conclude that the existence of 
    \begin{equation}
    \label{eq:assump2HMadja}
    \lim_{a \to a_0} \left( \scp{ \left. D_a (\cP^a \psi)\right|_{a = a_0}, \mu_a}
    + \scp{r_a, \mu_a} \right)
    \end{equation}
needs to be established.
   Regarding the term involving the remainder $r_a$, note that
    \begin{equation}
    \label{eq: < R_a, mu_a >}
        |\scp{ r_a, \mu_a }| \leq \scp{ \left|\tfrac{r_a}{U}U\right| , \mu_a } \leq \| r_a\|_U \scp{ U ,  \mu_a}.
    \end{equation}
     Therefore, if there is an $\varepsilon$ such that 
    \begin{equation}
    \label{eq: condition U}
        \sup_{a \in \Ieps} \scp{ U, \mu_a } < \infty,
    \end{equation}
    then $ |\scp{ r_a, \mu_a }|$ vanishes in the limit $a \to a_0$ as desired for all $a \in \Ieps$.
    (Condition~\eqref{eq: condition U} is the stronger integrability condition alluded to earlier.)
   In particular, Condition~\eqref{eq: condition U} ensures that all $\xi \in \CU$ are integrable with respect to $\mu_a$ for any $a \in I_\varepsilon$. 
   For the first term of \eqref{eq:assump2HMadja} we have to show that
   \begin{equation*}
       \scp{ \left. D_a (\cP^a \psi)\right|_{a = a_0} , \mu_a - \mu_{a_0}} \rightarrow 0 .
   \end{equation*}
   Hence if we assume that for all $\xi \in \CU$ we have 
    \begin{equation}
    \label{eq:convergenceCU}
    \scp{  \xi , \mu_a } \to \scp{ \xi , \mu_{a_0} }
    \end{equation}%
    for $a \to a_0$, then, as $\left. D_a (\cP^a \psi )\right|_{a = a_0}\in \CU$,
    \[ 
       \dfrac{\scp{ (\cP^a- \cP^{a_0})\psi, \mu_a }}{a -a_0} \to \scp{ \left. D_a (\cP^a \psi )\right|_{a = a_0} , \mu_{a_0}}
    \] 
    as desired. 
   In summary, if $\varphi \in \tcO$ (we recall that this means there exists $\psi\in \cO$ and $c \in \R$ such that $\varphi = (1 - \cP^{a_0})\psi + c$), 
   then, thanks to Equation~\eqref{eq:formal_inversion_(1-P)} and the differentiability condition\eqref{eq:assump2HMadja2} as well as the assumptions~(\ref{eq: condition U},\ref{eq:convergenceCU}), we get for $a \to a_0$
    \begin{equation*}
         \left| \dfrac{\scp{ \varphi, (\mu_a - \mu_{a_0})}}{a - a_0} - \scp{ \left. D_a (\cP^a \psi)\right|_{a = a_0}, \mu_{a_0}} \right|
         \rightarrow 0.
    \end{equation*}
We have shown the following result, a generalisation of~\cite[Theorem~2.3]{HMajda10}. 
    \begin{theorem}
    \label{thm:general_response}
    Let $\Ieps := (a_0 - \varepsilon, a_0 + \varepsilon) \subset \R$ be an interval and $\lbrace \{\cP^a_t\}_{t \geq 0} \, : \, a \in \Ieps \rbrace $ a family of Markov semigroups acting on a Hilbert space~$\cH$, with $\mu_a$ an invariant measure of $\{\cP^a_t\}_{t \geq 0}$ for each $a \in \Ieps$.
      Further, let $\cO$ be a Banach space of measurable functions on $\cH$.
    Suppose there exists a function $U: \cH \to [1, \infty)$ and some $t > 0$ such that the following conditions hold:
        \begin{enumerate}[label=(\roman*)]
            \item\label{item:semigroup differentiable} for every $\psi \in \cO$ the map $a \to (1 - \cP_t^a) \psi$ has values in  $\CU$ and is differentiable at $a = a_0$;
            \item\label{item:weak convergence} For any $\xi\in \CU$ we have $\scp{  \xi , \mu_a} \to \scp{  \xi, \mu_{a_0}}$ for $a \to a_0$; 
            \item\label{item:<U, mu>} $\sup_{a \in \Ieps} \scp{ U, \mu_a} < \infty$. 
        \end{enumerate}
        Then the map $a \mapsto \scp{ \varphi, \mu_a} $ is differentiable at $a = a_0$ for every $\varphi$ of the form $\varphi = (1 - \cP_t^{a_0})\psi + c$ with some $\psi \in \cO$, and we have the {\em response formula}
        \begin{equation*}
            \left. \dv{}{a} \scp{ \varphi, \mu_a }\right|_{a = a_0} =  \scp{ \left. D_a (\cP_t^a \psi)\right|_{a = a_0} , \mu_{a_0}}. 
        \end{equation*}
    \end{theorem}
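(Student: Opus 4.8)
The plan is to feed the perturbation identity~\eqref{eq:formal_inversion_(1-P)} into the difference quotient of $a\mapsto\scp{\varphi,\mu_a}$ and then pass to the limit, using hypotheses (i)--(iii) to control the two pieces that appear; there is no genuinely new idea beyond the discussion preceding the statement, so the proof is essentially an assembly of those pieces.

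First I would fix $\varphi=(1-\cP_t^{a_0})\psi+c$ with $\psi\in\cO$ and $c\in\R$, and check that the standing integrability hypothesis of Section~\ref{subsec:response_setup} holds in this setting: by~(i) we have $(1-\cP_t^{a})\psi\in\CU$ for every $a\in\Ieps$, and since $|\xi(x)|\le\|\xi\|_U\,U(x)$ for $\xi\in\CU$, hypothesis~(iii) gives $\scp{|(1-\cP_t^{a_1})\psi|,\mu_{a_2}}\le\|(1-\cP_t^{a_1})\psi\|_U\,\sup_{a\in\Ieps}\scp{U,\mu_a}<\infty$ for all $a_1,a_2\in\Ieps$. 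Hence $\varphi\in\tcO_{a_0}$, and~\eqref{eq:formal_inversion_(1-P)} with $a_1=a_0$, $a_2=a$ yields
\[
\frac{\scp{\varphi,\ \mu_a-\mu_{a_0}}}{a-a_0}
= \scp{\frac{(\cP_t^a-\cP_t^{a_0})\psi}{a-a_0},\ \mu_a},
\qquad a\in\Ieps\setminus\{a_0\}.
\]

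Next I would insert the first-order expansion~\eqref{eq:assump2HMadja2} supplied by~(i), namely $\frac{(\cP_t^a-\cP_t^{a_0})\psi}{a-a_0}=\left.D_a(\cP_t^a\psi)\right|_{a=a_0}+r_a$ with $\|r_a\|_U\to0$ as $a\to a_0$, splitting the right-hand side into $\scp{\left.D_a(\cP_t^a\psi)\right|_{a=a_0},\mu_a}+\scp{r_a,\mu_a}$. The remainder is killed by~\eqref{eq: < R_a, mu_a >}: $|\scp{r_a,\mu_a}|\le\|r_a\|_U\,\scp{U,\mu_a}\le\|r_a\|_U\,\sup_{a'\in\Ieps}\scp{U,\mu_{a'}}\to0$ by~(iii). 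For the leading term, since $\left.D_a(\cP_t^a\psi)\right|_{a=a_0}\in\CU$, hypothesis~(ii) applied to this element gives $\scp{\left.D_a(\cP_t^a\psi)\right|_{a=a_0},\mu_a}\to\scp{\left.D_a(\cP_t^a\psi)\right|_{a=a_0},\mu_{a_0}}$ as $a\to a_0$. Combining, the difference quotient converges and
\[
\left.\dv{}{a}\scp{\varphi,\mu_a}\right|_{a=a_0}
= \scp{\left.D_a(\cP_t^a\psi)\right|_{a=a_0},\ \mu_{a_0}},
\]
which is the claimed response formula.

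The step requiring care is not an obstacle in the usual sense but the integrability bookkeeping: one must make sure every pairing written above is finite and that the convergence in~(ii) and the bound in~(iii) are applied to functions genuinely lying in $\CU$. This is precisely why the hypothesis is the \emph{uniform} bound~(iii)/\eqref{eq: condition U} rather than mere $\mu_a$-integrability of $U$ for each fixed $a$: uniformity is what makes the perturbation identity legitimate along the whole family $\{\mu_a\}_{a\in\Ieps}$ and what forces $\scp{r_a,\mu_a}\to0$. I would also record that no spectral gap is used in this version; it is needed only to enlarge the class of admissible $\varphi$ from $\tcO$ to all of $\cO$ in the stronger Theorem~\ref{thm:general_response_HM}, via Proposition~\ref{prop:spectralgap_invertibility}.
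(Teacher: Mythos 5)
Your proposal is correct and follows essentially the same route as the paper: the argument preceding the theorem in Section~\ref{subsec:linear} is exactly the assembly of the perturbation identity~\eqref{eq:formal_inversion_(1-P)}, the expansion~\eqref{eq:assump2HMadja2}, the remainder bound~\eqref{eq: < R_a, mu_a >} with the uniform integrability~\eqref{eq: condition U}, and the weak convergence~\eqref{eq:convergenceCU} applied to $\left.D_a(\cP_t^a\psi)\right|_{a=a_0}\in\CU$. Your explicit integrability bookkeeping and the closing remark that no spectral gap is needed at this stage are both consistent with the paper's own observations following the theorem.
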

    Note that conditions~\ref{item:semigroup differentiable}~\ref{item:<U, mu>} together ensure the integrability of elements in $\tcO$ with respect to $\mu_a$ for any $a \in I_\varepsilon$.
Furthermore, condition~\ref{item:<U, mu>} implies that it is sufficient to establish condition~\ref{item:weak convergence} for $\xi$ in a dense set of $\CU$ in order to obtain it for all of $\CU$.
      %
   
    On the other hand, one might hope that, in view of the identity~\eqref{eq:formal_inversion_(1-P)}, conditions like (iii) and (i) may imply the weak convergence in condition (ii). Indeed, uniform bound on the derivative implies that the r.h.s of \eqref{eq:formal_inversion_(1-P)} is even Lipschitz continuous in the parameter $a$ for $\xi\in \tcO$. Hence, we need a condition that $\tcO$ is large enough such that one can recover convergence for $\xi \in \CU$. This is the role of the condition 2 in the following proposition.
    \begin{proposition}
    \label{prop:weak_convergence_CU}
    Consider the same setup as in Theorem~\ref{thm:general_response} and assume that condition~\ref{item:<U, mu>} of that theorem holds, but we replace conditions~(i,ii) with the following: 
    \begin{enumerate}
    \item for every $\psi \in \cO$  the map $a \mapsto (1 - \cP_t^a)\psi$ has values in $\CU$ and is differentiable at $a = a_0$, and furthermore
            \begin{equation}
            \label{eq:ch5_bound_derivative}
                \|D_a \cP_t^a \psi |_{a = a_0} \|_U \leq C \|\psi \|_{\cO}. 
            \end{equation}
          \item\label{itm:weak_convergenceCU2} 
          $\overline{\Im (1 - \cP_t^{a_0})} = \ker \mu_{a_0}$ where $1 - \cP_t^{a_0}$ is understood as a map from $\cO$ to $\CU$ and both the closure as well as the kernel are to be understood in $\CU$.
        \end{enumerate}
    Then for any $\xi \in \CU$, $\scp{ \xi, \mu_a } \to \scp{ \xi, \mu_{a_0} }$ when $a \to a_0$.
    \end{proposition}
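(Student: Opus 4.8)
The plan is to establish the weak convergence first on a convenient dense subspace of $\CU$ and then bootstrap, in three stages. \textbf{Stage 1 (convergence on $\Im(1-\cP_t^{a_0})$).} Take $\xi = (1-\cP_t^{a_0})\psi$ for some $\psi\in\cO$; by hypothesis~1 this lies in $\CU$, and $\xi\in\tcO$, so the perturbation identity~\eqref{eq:formal_inversion_(1-P)} applies with $a_1=a_0$ and $a_2=a$ — all pairings there are well defined because $(1-\cP_t^{a'})\psi\in\CU$ for every $a'\in\Ieps$ (hypothesis~1) and $\CU\subset L^1(\mu_{a''})$ for every $a''\in\Ieps$ (condition~(iii) of Theorem~\ref{thm:general_response}). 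This gives
\[
  \scp{\xi,\,\mu_a-\mu_{a_0}} \;=\; \scp{(\cP_t^{a}-\cP_t^{a_0})\psi,\,\mu_a}.
\]
Writing $(\cP_t^{a}-\cP_t^{a_0})\psi = (1-\cP_t^{a_0})\psi-(1-\cP_t^{a})\psi\in\CU$ and using $|\scp{g,\mu_a}|\le\|g\|_U\scp{U,\mu_a}$ yields
\[
  |\scp{\xi,\,\mu_a-\mu_{a_0}}| \;\le\; C_\varepsilon\,\bigl\|(1-\cP_t^{a_0})\psi-(1-\cP_t^{a})\psi\bigr\|_U, \qquad C_\varepsilon:=\sup_{a\in\Ieps}\scp{U,\mu_a}<\infty.
\]
Since $a\mapsto(1-\cP_t^{a})\psi$ is differentiable, hence continuous, at $a_0$ as a map into $\CU$, the right-hand side tends to $0$ as $a\to a_0$ (in fact it is $O(|a-a_0|)$ with prefactor $\le C\|\psi\|_\cO$ by~\eqref{eq:ch5_bound_derivative}). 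Thus $\scp{\xi,\mu_a}\to\scp{\xi,\mu_{a_0}}$ for every $\xi$ in the image of $1-\cP_t^{a_0}:\cO\to\CU$.

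\textbf{Stage 2 (density).} Fix $\xi\in\ker\mu_{a_0}$ and, using hypothesis~\ref{itm:weak_convergenceCU2}, pick $\xi_n\in\Im(1-\cP_t^{a_0})$ with $\|\xi-\xi_n\|_U\to0$. Decompose
\[
  \scp{\xi,\,\mu_a-\mu_{a_0}} = \scp{\xi-\xi_n,\,\mu_a} + \scp{\xi_n,\,\mu_a-\mu_{a_0}} + \scp{\xi_n-\xi,\,\mu_{a_0}}.
\]
The two outer terms are each at most $C_\varepsilon\|\xi-\xi_n\|_U$ by condition~(iii), while the middle term vanishes as $a\to a_0$ for each fixed $n$ by Stage~1. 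Hence $\limsup_{a\to a_0}|\scp{\xi,\mu_a-\mu_{a_0}}|\le 2C_\varepsilon\|\xi-\xi_n\|_U$ for every $n$, and letting $n\to\infty$ gives the convergence for all $\xi\in\ker\mu_{a_0}$.

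\textbf{Stage 3 (general $\xi$) and the hard point.} Any $\xi\in\CU$ splits as $\xi=(\xi-\scp{\xi,\mu_{a_0}})+\scp{\xi,\mu_{a_0}}$; the constant function $1$ lies in $\CU$ since $U\ge1$, so $\xi-\scp{\xi,\mu_{a_0}}\in\ker\mu_{a_0}$, and because each $\mu_a$ is a probability measure $\scp{\xi,\mu_a}=\scp{\xi-\scp{\xi,\mu_{a_0}},\,\mu_a}+\scp{\xi,\mu_{a_0}}\to\scp{\xi,\mu_{a_0}}$ by Stage~2. I do not expect a genuine obstacle: the only step needing care is making the perturbation identity~\eqref{eq:formal_inversion_(1-P)} rigorous in Stage~1, i.e.\ the integrability bookkeeping, which is exactly where ``values in $\CU$'' and condition~(iii) are combined. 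The actual content of the proposition is that hypothesis~\ref{itm:weak_convergenceCU2} — the ``surjectivity up to closure'' of $1-\cP_t^{a_0}:\cO\to\CU$ onto $\ker\mu_{a_0}$ — is precisely what converts the essentially automatic convergence on $\Im(1-\cP_t^{a_0})$ into convergence on all of $\CU$.
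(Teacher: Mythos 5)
Your proof is correct and follows essentially the same route as the paper's: reduce to $\xi\in\ker\mu_{a_0}$, use the uniform bound $|\scp{\xi,\mu_a}|\le C_\varepsilon\|\xi\|_U$ from condition~(iii) to pass from the dense subset $\Im(1-\cP_t^{a_0})$ to all of $\ker\mu_{a_0}$, and on that image apply the perturbation identity~\eqref{eq:formal_inversion_(1-P)} together with the differentiability (the paper writes out the derivative-plus-remainder bound explicitly where you invoke continuity, but this is the same estimate). Your explicit three-term density argument in Stage~2 is merely a spelled-out version of what the paper leaves implicit.
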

    %
        %
    \begin{proof}
      To start with, it is clear that the result is linear in $\xi$ and correct whenever $\xi$ is constant function, so it is sufficient to prove it for $\xi \in \ker(\mu_{a_0})$ as every $\xi \in \CU$ can be written as $\xi = \xi' + c$ where $\xi' \in \ker(\mu_{a_0})$.
      Furthermore, using the shorthand $C_\varepsilon := \sup_{a \in \Ieps} |\scp{ U , \mu_a }|$, which is finite by condition~\ref{item:<U, mu>}, it is easy to see that
              \begin{equation}
        \label{eq:ch5<xi,mu>}
        \left| \scp{ \xi, \mu_a } \right| \leq C_\varepsilon \|\xi\|_U,
                \end{equation}
             that is, the probability measures $\{\mu_a, a \in I\}$ are functionals on $\CU$ with uniformly bounded operator norm (the calculation is the same as in Eq.~\ref{eq: < R_a, mu_a >}).
             This implies that it is sufficient to prove the result for all $\xi$ in a dense set of $\ker(\mu_{a_0})$, for which we can take $\Im{1 - \cP^{a_0}_t}$ by condition~\ref{itm:weak_convergenceCU2}.
             Hence, we may assume $\xi = (1 - \cP^{a_0})\psi$ for some $\psi \in \cO$.
             %
             Next we use \eqref{eq:formal_inversion_(1-P)} and the differentiability to obtain
         \begin{equation*}
               \left| \scp{ \xi, \mu_a - \mu_{a_0} } \right| 
\leq |a - a_0|\,  \left| \scp{ D(\cP_t^{a}\psi)|_{a = a_0}, \mu_a } + \scp{ r_a, \mu_a } \right|.
         \end{equation*}
                 Thanks to Equations~(\ref{eq: < R_a, mu_a >},\ref{eq:ch5_bound_derivative},\ref{eq:ch5<xi,mu>}), the right hand side is bounded as follows
        \[ 
        \leq |a - a_0| C_\varepsilon (C \|\psi\|_{\cO} + \|r_a\|_U)
        \]
        for all $a\in \Ieps$.
        This demonstrates that $\scp{ \xi, \mu_a } \to \scp{ \xi, \mu_{a_0} }$ for $a \to a_0$, as desired, and even that this function is locally Lipschitz with respect to the parameter~$a$.
    \end{proof}

    \begin{remark}
        Condition 2 in \eqref{prop:weak_convergence_CU} means that $\mu_a$ is the unique eigenvector corresponding to the eigenvalue one of the adjoint operator of $\cP_t$ as an operator on the dual space $\CU'$ of $\CU$.
    \end{remark}
%
By finally bringing ~\autoref{prop:spectralgap_invertibility} together with~\autoref{prop:weak_convergence_CU}, we can reformulate \autoref{thm:general_response} with the following assumptions, which are similar to assumptions as set out in \cite{HMajda10} but for a generic family of observables. 
%
    \paragraph{Assumption~L}\label{asuL}
    Let $\Ieps := (a_0 - \varepsilon, a_0 + \varepsilon) \subset \R$ be an interval and $\lbrace \{\cP^a_t\}_{t \geq 0} \, : \, a \in \Ieps \rbrace $ a family of Markov semigroups acting on a Hilbert space~$\cH$, with $\mu_a$ an invariant measure of $\{\cP^a_t\}_{t \geq 0}$ for each $a \in \Ieps$.
      Fix some $t > 0$ and let $\cO$ be a Banach space of measurable functions on $\cH$ so that $(1 - \cP_t^{a_0}) \psi + c \in  \cO$ whenever $\psi \in \cO$ and $c \in \R$.
      Finally, for some function $U: \cH \to [1, \infty)$ define the space $\CU := \overline{\cO}$, where the closure is with respect to the norm $\|\cdot\|_U$ (which is assumed to be finite on $\cO$) in the space of measurable functions with finite $\|\cdot \|_U$.
        Suppose the following conditions hold:
      \begin{itemize}
        \item[{\crtcrossreflabel{L1}[L1]}] There exists $\rho<1$ such that 
    \begin{equation*}
        \|\cP^{a_0}_t \varphi - \scp{\varphi, \mu_{a_0}} \|_{\cO} \leq \rho \| \varphi - \scp{\varphi, \mu_{a_0}}\|_{\cO} \fa \varphi\in \cO;
    \end{equation*}
  \item[{\crtcrossreflabel{L2}[L2]}]
for every $\psi \in \cO$ the map $a \mapsto (1 - \cP_t^a)\psi$ has values in $\CU$ and is differentiable at $a = a_0$, and furthermore
            \begin{equation}
            \label{eq:ch5_bound_derivativeII}
                \|D_a \cP_t^a \psi |_{a = a_0} \|_U \leq C \|\psi \|_{\cO};
            \end{equation}
  \item[{\crtcrossreflabel{L3}[L3]}] $\sup_{a \in \Ieps} \scp{ U, \mu_a} = C_{\varepsilon}< \infty$.
              \end{itemize}
              
   \begin{remark}
        Condition \ref{L1} can be replaced by the weaker assumption that one is an isolated eigenvalue of multiplicity one of $\cP_t^{a_0}$ and $\left(\cP_t^{a_0}\right)^*$. This is equivalent to the fact that one is not in the essential spectrum in the sense of Kato, see \cite{Kato}. 
    \end{remark}    
    Now the following version of \cite[Theorem~2.3]{HMajda10} holds.
    \begin{theorem}
    \label{thm:general_response_HM}
        Suppose that \nameref{asuL} holds. 
        Then the map $a \mapsto \scp{ \varphi, \mu_a} $ is differentiable at $a = a_0$ for every $\varphi\in \cO$ and in particular 
        \begin{equation*}
            \left. \dv{}{a} \scp{ \varphi, \mu_a }\right|_{a = a_0} = \scp{ D_a \cP_t^{a}|_{a = a_0} (1 - \cP_t^{a_0})^{-1}(\varphi - \scp{\varphi, \mu_{a_0}}), \mu_{a_0}}. 
        \end{equation*}
    \end{theorem}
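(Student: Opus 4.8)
The plan is to deduce the statement from \autoref{thm:general_response} by supplying its hypotheses~(i)--(iii) from \nameref{asuL} with the help of \autoref{prop:spectralgap_invertibility} (fed by \ref{L1}) and \autoref{prop:weak_convergence_CU} (fed by \ref{L2}, \ref{L3}), and then to rewrite the resulting response formula in the claimed closed form. Throughout we work at the time $t$ fixed in \nameref{asuL} and suppress it from the notation. The first move is to apply \autoref{prop:spectralgap_invertibility} at $a=a_0$: its hypotheses hold because $(\cO,\|\cdot\|_{\cO})$ is a Banach space by assumption, $\tcO_{a_0}\subset\cO$ is precisely the closure property of $\cO$ built into \nameref{asuL}, the spectral gap \eqref{eq:ch5spectralgapO} is \ref{L1}, and $\ker\mu_{a_0}$ is closed in $\cO$ since $\varphi\mapsto\scp{\varphi,\mu_{a_0}}$ is continuous there, being a bounded functional on $\CU\supseteq\cO$ by \eqref{eq:ch5<xi,mu>} (which uses only \ref{L3}). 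This yields that $1-\cP^{a_0}$ is boundedly invertible on $\ker\mu_{a_0}$ with inverse of norm at most $(1-\rho)^{-1}$, that every $\varphi\in\cO$ has the representation $\varphi=(1-\cP^{a_0})\psi+\scp{\varphi,\mu_{a_0}}$ with $\psi:=(1-\cP^{a_0})^{-1}(\varphi-\scp{\varphi,\mu_{a_0}})\in\ker\mu_{a_0}$, and in particular $\tcO_{a_0}=\cO$.

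Next I would verify the hypotheses of \autoref{prop:weak_convergence_CU}: its integrability requirement $\sup_{a\in\Ieps}\scp{U,\mu_a}<\infty$ is \ref{L3}, its first condition (differentiability of $a\mapsto(1-\cP^a)\psi$ into $\CU$ together with the derivative bound) coincides with \ref{L2}, and there remains its condition~\ref{itm:weak_convergenceCU2}, namely $\overline{\Im(1-\cP^{a_0})}=\ker\mu_{a_0}$ as subsets of $\CU$. The inclusion ``$\subseteq$'' is immediate: $\scp{(1-\cP^{a_0})\psi,\mu_{a_0}}=0$ for all $\psi\in\cO$ by invariance, and $\ker\mu_{a_0}$ is $\CU$-closed because $\mu_{a_0}$ is a bounded functional on $\CU$ by \eqref{eq:ch5<xi,mu>}. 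For the reverse inclusion, take $\xi\in\ker\mu_{a_0}\subset\CU$; since $\CU=\overline{\cO}$ in $\|\cdot\|_U$, choose $\varphi_n\in\cO$ with $\|\varphi_n-\xi\|_U\to0$, whence $\scp{\varphi_n,\mu_{a_0}}\to\scp{\xi,\mu_{a_0}}=0$ by \eqref{eq:ch5<xi,mu>}. As $U\geq1$, the constant $\scp{\varphi_n,\mu_{a_0}}$ has $\|\cdot\|_U$-norm at most $|\scp{\varphi_n,\mu_{a_0}}|$, so $\tilde\varphi_n:=\varphi_n-\scp{\varphi_n,\mu_{a_0}}\in\cO\cap\ker\mu_{a_0}$ still satisfies $\tilde\varphi_n\to\xi$ in $\CU$; and by the previous step each $\tilde\varphi_n$ lies in $\Im(1-\cP^{a_0})$, via the representation from \autoref{prop:spectralgap_invertibility}. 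Hence $\xi\in\overline{\Im(1-\cP^{a_0})}$, and \autoref{prop:weak_convergence_CU} delivers $\scp{\xi,\mu_a}\to\scp{\xi,\mu_{a_0}}$ as $a\to a_0$ for every $\xi\in\CU$.

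Finally, hypotheses~(i)--(iii) of \autoref{thm:general_response} now hold --- (i) is the differentiability part of \ref{L2}, (ii) is the convergence just established, (iii) is \ref{L3} --- so $a\mapsto\scp{\varphi,\mu_a}$ is differentiable at $a=a_0$ for every $\varphi=(1-\cP_t^{a_0})\psi+c$ with $\psi\in\cO$, that is, for every $\varphi\in\tcO_{a_0}=\cO$, with $\left.\dv{}{a}\scp{\varphi,\mu_a}\right|_{a=a_0}=\scp{D_a(\cP_t^a\psi)|_{a=a_0},\mu_{a_0}}$. Inserting the representative $\psi=(1-\cP_t^{a_0})^{-1}(\varphi-\scp{\varphi,\mu_{a_0}})$ from \autoref{prop:spectralgap_invertibility} --- legitimate since $\varphi-\scp{\varphi,\mu_{a_0}}\in\ker\mu_{a_0}$, $\mu_{a_0}$ being a probability measure --- and reading $\psi\mapsto D_a(\cP_t^a\psi)|_{a=a_0}$ as the bounded operator $D_a\cP_t^a|_{a=a_0}$ of \eqref{eq:ch5_bound_derivativeII} yields the stated formula. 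The only step that is not pure bookkeeping is the reverse inclusion in condition~\ref{itm:weak_convergenceCU2}: one must transfer the exact, $\cO$-topology representation of elements of $\cO$ provided by \autoref{prop:spectralgap_invertibility} into a $\CU$-topology approximation valid for an arbitrary $\xi\in\ker\mu_{a_0}$, which goes through precisely because $\cO$ is dense in $\CU$ and the constant correction is $\|\cdot\|_U$-small thanks to $U\geq1$; matching \ref{L1}--\ref{L3} to the hypotheses of the two propositions and of \autoref{thm:general_response} is routine.
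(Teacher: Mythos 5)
Your proposal is correct and follows essentially the same route as the paper's proof: invoke \autoref{prop:spectralgap_invertibility} via \ref{L1} to get the representation $\varphi=(1-\cP_t^{a_0})\psi+\scp{\varphi,\mu_{a_0}}$ and the identity $\Im(1-\cP_t^{a_0})=\ker_{\cO}\mu_{a_0}$, verify condition~\ref{itm:weak_convergenceCU2} of \autoref{prop:weak_convergence_CU} by approximating $\xi\in\ker\mu_{a_0}$ in $\CU$ by mean-corrected elements of $\cO$, and then conclude through \autoref{thm:general_response}. You are somewhat more explicit than the paper in checking the hypotheses (closedness of $\ker\mu_{a_0}$, both inclusions of the closure identity), but the argument is the same.
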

    \begin{proof}
    Note that assumption~\ref{L3} together with the requirement that $\|\cdot\|_U$ is well defined on $\cO$ implies the integrability of elements of $\cO$.
    We just need to check that condition~(\ref{itm:weak_convergenceCU2}) in~\autoref{prop:weak_convergence_CU} is satisfied. 
    By~\autoref{prop:spectralgap_invertibility} we have $\Im (1 - \cP_t^{a_0}) = \ker_{\cO} \mu_{a_0}$ with $\ker_{\cO}$ understood in $\cO$.
    Taking closure with respect to $\|\cdot\|_{\CU}$ gives
    $\overline{\Im (1 - \cP_t^{a_0})} = \overline{\ker_{\cO} \mu_{a_0}}$.
    Now let $\xi \in \ker_{\CU} \mu_{a_0}$. 
    Then there exists a sequence $\xi_n \in \tcO$ so that $\xi_n \to \xi$ in $\CU$ by assumption, which implies $\xi_n - \scp{\xi_n, \mu_{a_0}} \to \xi - \scp{\xi, \mu_{a_0}} = \xi$; in other words, we may take $\xi_n \in \ker_{\cO} \mu_{a_0}$.
    This shows that $\overline {\ker_{\CU} \mu_{a_0}} = \ker_{\CU} \mu_{a_0}$, completing the proof.
    \end{proof}
\subsection{Fractional Response}
\label{fractional}
In certain contexts it may not be possible to ensure differentiability of the invariant measure.
In fact, the hardest condition to verify is typically the spectral gap, but also the differentiability of the semigroup may potentially fail. Both of these conditions depend on the family of semigroups and the space of observables considered. 
%
%

%
We consider the same setup as in Section~\ref{subsec:response_setup}.
As in Section~\ref{subsec:linear}, fix a time $t$ and for notational simplicity drop the dependence on time of the semigroup.
We want to show under which conditions the map $a \mapsto \scp{ \varphi, \mu_a}$ is $\alpha$-H\"{o}lder continuous in the interval $I$, that is there are $\alpha \leq 1$ and $C = C(\varphi, \alpha, I)$ such that for any $a_1, a_2 \in I$ we have
\begin{equation*}
      |\scp{ \varphi, \mu_{a_{1}} - \mu_{a_{2}}}| \leq C \,|a_{1} - a_{2}|^\alpha.
    \end{equation*}
Our starting point will again be Equation~\eqref{eq:formal_inversion_(1-P)}.
Let $\CU$ be as in the beginning of \autoref{subsec:linear}.
The motivation to introduce this space is the same as in that section; it allows to require weaker assumptions on the semigroup, rendering them applicable for the models we want to consider.
Using \eqref{eq:formal_inversion_(1-P)} and the definition of $\|\cdot \|_{U}$ in~\eqref{eq: def norm U}, we have
    \begin{equation}
         |\scp{ \varphi, \mu_{a_{1}} - \mu_{a_{2}}}|  = \left|\scp{ (\cP_t^{a_{1}} - \cP_t^{a_{2}})\psi, \mu_{a_{2} }} \right|
         \leq \|\cP_t^{a_{1} }\psi - \cP_t^{a_{2}}\psi\|_U \scp{ U, \mu_{a_{2} }}
         \label{eq:holderU}
    \end{equation}
    whenever $\varphi \in \tcO_{a_2}$.
    Suppose that the map $a\mapsto (1 - \cP_t^a)$ is $\alpha$--H\"{o}lder continuous as a function from $I$ to $L(\cO, \CU)$, the space of linear operators between $\cO$ and $\CU$. That is, there exists $C_1(\alpha, I)$ such that
    \begin{equation*}
      \| \cP_t^{a_{1}} - \cP_t^{a_{2} }\|_{L(\cO, \CU)}
      = \sup_{\|\psi\|_{\cO}\neq 0} \frac{\| \cP_t^{a_{1}}\psi - \cP_t^{a_{2}}\psi\|_{U}}{\|\psi \|_{\cO}} \leq C_1 |a_{1} - a_{2}|^\alpha,
    \end{equation*}
    for any $a_1, a_2 \in I$.
    From \eqref{eq:holderU} it then follows that
    \begin{equation*}
         |\scp{ \varphi, \mu_{a_{1}} - \mu_{a_{2} }}| 
         \leq C_1 \|\psi \|_{\cO}\,  |a_{1} - a_{2} |^\alpha \scp{ U, \mu_{a_{2} }}.
    \end{equation*}
    Finally if we assume $U$ to be such that $\sup_{a \in I} \scp{ U, \mu_{a}} =: C_I < \infty$, then we have
    \begin{equation}
    \label{eq:holderweak2}
        |\scp{ \varphi, \mu_{a_{1}} - \mu_{a_{2} }}| \leq C_2 \|\psi \|_{\cO}\, |a_{1} - a_{2} |^\alpha \fa \varphi\in \tcO_{a_1}
    \end{equation}
    where $C_2 = C_I C_1$, which is the desired result, except that the space of functions $\tcO_{a_1}$ for which this is valid depends on $a_1$.
    This dependence can be removed if we assume~\autoref{prop:spectralgap_invertibility} to hold for all $a \in I$.
    Using also the representation of $\psi$ from that proposition we may rewrite~\eqref{eq:holderweak2} as
    \begin{equation}
    \label{eq:holderweak3}
        |\scp{ \varphi, \mu_{a_{1}} - \mu_{a_{2}}}| \leq \frac{C_2}{1 - \rho(a_1)}\|(\varphi - \scp{ \varphi , \mu_{a_1} } ) \|_{\cO}\, |a_{1} - a_{2}|^\alpha,
    \end{equation}
    for all $\varphi \in \cO$.
     The following assumption parallels \nameref{asuL} and summarises the conditions we have used.
    \paragraph{Assumption~F}\label{asuF}
    Let $I \subset \R$ be an interval and $\lbrace \{\cP^a_t\}_{t \geq 0} \, : \, a \in I \rbrace $ a family of Markov semigroups acting on a Hilbert space~$\cH$, with $\mu_a$ an invariant measure of $\{\cP^a_t\}_{t \geq 0}$ for each $a \in I$.
      Fix some $t > 0$ and let $\cO$ be a Banach space of measurable functions on $\cH$ so that $(1 - \cP_t^{a_1}) \psi + c \in  \cO$ whenever $\psi \in \cO$ and $c \in \R$.
      Finally, for some function $U: \cH \to [1, \infty)$ define the space $\CU := \overline{\cO}$, where the closure is with respect to the norm $\|\cdot\|_U$ 
      (which is assumed to be finite on $\cO$) in the space of measurable functions with finite $\|\cdot \|_U$.
        Suppose the following conditions hold:
    \begin{itemize}
        \item[{\crtcrossreflabel{F1}[F1]}] There exists $\rho < 1$ such that for any $a\in I$
        \begin{equation*}
          \|\cP^{a}_t \varphi - \scp{ \varphi, \mu_{a} } \|_{\cO}
          \leq \rho \| \varphi - \scp{ \varphi , \mu_{a} } \|_{\cO} \fa \varphi \in \cO.
        \end{equation*}
    \item[{\crtcrossreflabel{F2}[F2]}]
 The map $a \mapsto (1 - \cP_t^a)$ is  $\alpha$--H\"{o}lder continuous for some $\alpha\leq 1$, as a function from $I$ to $L(\cO, \CU)$, that is there exists $C = C(\alpha, I)$ such that
    \begin{equation*}
      \|\cP_t^{a_{ 1}} -  \cP_t^{a_{2}}\|_{L(\cO, \CU)}
      \leq C \, |a_{1} - a_{2} |^{\alpha} .
    \end{equation*}
    \item[{\crtcrossreflabel{F3}[F3]}] $\sup_{a \in I} \scp{ U, \mu_a} = C_I < \infty$.
    \end{itemize}
    Then we have shown the following result:
    \begin{theorem}
    \label{thm:general_holder}
        Suppose that \nameref{asuF} holds. 
        Then the invariant measure $\mu_a$, as operator on $\cO$, is $\alpha$--H\"{o}lder continuous in $a \in I$, namely there exists $C$ depending on $\varphi$ such that for all $a_{1}, a_{2} \in I$ 
        \begin{equation*}
             |\scp{ \varphi, \mu_{a_{1}} - \mu_{a_{2}}}| \leq C |a_{1} - a_{2}|^\alpha \fa \varphi \in \cO.
        \end{equation*}
    \end{theorem}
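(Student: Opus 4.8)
The plan is to assemble the chain of estimates already developed in the discussion preceding the statement, making explicit where each of the three conditions of \nameref{asuF} enters and dealing with the fact that the perturbation identity is initially available only on the $a$-dependent spaces $\tcO_{a_1}$.

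First I would fix $a_1, a_2 \in I$ and $\varphi \in \tcO_{a_1}$, write $\varphi = (1 - \cP_t^{a_1})\psi + c$ with $\psi \in \cO$ (so that $c = \scp{\varphi, \mu_{a_1}}$ by invariance of $\mu_{a_1}$), and apply the identity \eqref{eq:formal_inversion_(1-P)} to obtain $\scp{\varphi, \mu_{a_1} - \mu_{a_2}} = \scp{(\cP_t^{a_1} - \cP_t^{a_2})\psi, \mu_{a_2}}$. Bounding the right-hand side through the definition \eqref{eq: def norm U} of $\|\cdot\|_U$, as in \eqref{eq:holderU}, gives $|\scp{\varphi, \mu_{a_1} - \mu_{a_2}}| \le \|\cP_t^{a_1}\psi - \cP_t^{a_2}\psi\|_U \, \scp{U, \mu_{a_2}}$. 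Then \ref{F2} controls the first factor by $C |a_1 - a_2|^\alpha \|\psi\|_\cO$ and \ref{F3} controls the second by $C_I$, yielding \eqref{eq:holderweak2}, namely $|\scp{\varphi, \mu_{a_1} - \mu_{a_2}}| \le C_2 \|\psi\|_\cO |a_1 - a_2|^\alpha$ with $C_2 = C C_I$, for every $\varphi \in \tcO_{a_1}$.

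The remaining work is to upgrade this to a statement over the fixed space $\cO$ with a constant not depending on $a_1$. Condition \ref{F1} is exactly the spectral-gap hypothesis \eqref{eq:ch5spectralgapO} of \autoref{prop:spectralgap_invertibility}, which I would apply at $a = a_1$; the other hypotheses there ($\cO$ Banach, $\tcO_{a_1}\subset\cO$) are part of the standing setup of \nameref{asuF}, and the closedness of $\ker\mu_{a_1}$ in $\cO$ is the one mild point to check, following once $\mu_{a_1}$ is continuous as a functional on $\cO$. The proposition then gives $\tcO_{a_1} = \cO$ together with the representation $\psi = (1 - \cP_t^{a_1})^{-1}(\varphi - \scp{\varphi, \mu_{a_1}})$ and the bound $\|\psi\|_\cO \le (1-\rho)^{-1}\|\varphi - \scp{\varphi, \mu_{a_1}}\|_\cO$; substituting produces \eqref{eq:holderweak3}, valid for all $\varphi \in \cO$. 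Finally, for fixed $\varphi$ I would observe that $\sup_{a_1 \in I}\|\varphi - \scp{\varphi, \mu_{a_1}}\|_\cO < \infty$, since $\|\varphi - \scp{\varphi, \mu_{a_1}}\|_\cO \le \|\varphi\|_\cO + |\scp{\varphi, \mu_{a_1}}| \, \|\mathbbm{1}\|_\cO$ and $|\scp{\varphi, \mu_{a_1}}| \le \|\varphi\|_U \scp{U, \mu_{a_1}} \le C_I \|\varphi\|_U < \infty$ by \ref{F3}; collecting the constants delivers $|\scp{\varphi, \mu_{a_1} - \mu_{a_2}}| \le C |a_1 - a_2|^\alpha$ with $C = C(\varphi)$.

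I do not expect a genuine obstacle here: the theorem is a bookkeeping consolidation of the perturbation identity \eqref{eq:formal_inversion_(1-P)}, \autoref{prop:spectralgap_invertibility}, and the three assumptions of \nameref{asuF}. The only step requiring a little care is the one isolated above, namely that \eqref{eq:formal_inversion_(1-P)} holds a priori only on the $a_1$-dependent space $\tcO_{a_1}$, so one must invoke the \emph{uniform} spectral gap \ref{F1} (rather than a gap at a single parameter value, as used in the linear-response theorem) to guarantee $\tcO_{a_1} = \cO$ and a uniform control of $\|\psi\|_\cO$ in terms of $\varphi$ simultaneously for all $a_1 \in I$.
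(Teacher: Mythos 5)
Your proposal is correct and follows essentially the same route as the paper: it assembles the chain \eqref{eq:formal_inversion_(1-P)}, \eqref{eq:holderU}, \eqref{eq:holderweak2}, then invokes \autoref{prop:spectralgap_invertibility} via the uniform gap \ref{F1} to pass to \eqref{eq:holderweak3}, and finally bounds $\|\varphi - \scp{\varphi,\mu_{a_1}}\|_{\cO}$ uniformly in $a_1$ exactly as the paper does. Your remark that one must check closedness of $\ker\mu_{a_1}$ in $\cO$ (which follows from the continuity of $\mu_{a_1}$ as a functional, itself a consequence of \ref{F3}) is a point the paper leaves implicit, but it does not alter the argument.
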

    \begin{proof}
      We have seen that under the conditions of the theorem, Equation~\eqref{eq:holderweak3} holds so it remains to establish the bound
      \begin{equation*}
        \frac{C_2}{1 - \rho(a_1)}\|(\varphi - \scp{ \varphi , \mu_{a_1} } ) \|_{\cO} \leq C
      \end{equation*}
      for some $C$ independent of $a_1$.
      By condition~\ref{F1} we may bound $\rho(a_1)$ by $\rho$ uniformly.
      Furthermore, $\|(\varphi - \scp{ \varphi , \mu_{a_1} } )        
      \|_{\cO} \leq \|\varphi\|_{\cO} + |\scp{ \varphi , \mu_{a_1} }| \|1\|_{\cO}$, where $1$ here means the function equal to~1 everywhere.
      We have also used that $|\scp{ \varphi , \mu_{a_1} }| \leq \|\varphi\|_U C_I$ on a number of occasions already.
      These facts imply that we may take $C = \frac{C_2}{1 - \rho} (\|\varphi\|_{\cO} + C_I \|1\|_{\cO} \|\varphi\|_U )$.
      \end{proof}
        For the remainder of this section we will discuss examples of spaces $(\cO, \|\cdot\|_{\cO})$ that are appropriate for the applications considered in subsequent sections.
          It turns out that rather than {\em norms} it is practical to consider {\em semi}--norms which are zero on constant functions, or in other words which are norms modulo the constant functions. 
          For instance $\|(\varphi - \scp{ \varphi , \mu_{a_1} } ) \|_{\cO} = \|\varphi \|_{\cO}$ in case that $\|\cdot\|_{\cO}$ is such a seminorm, and the constant $C$ in the proof of Theorem~\ref{thm:general_holder} simply reads as $C = \frac{C_2}{1 - \rho} \|\varphi \|_{\cO}$.
          If $\|\cdot\|$ is a norm modulo the constant functions, there are many ways to turn $\|\cdot\|$ into a proper norm; the mapping $\varphi \to |\varphi(x_0)| + \|\varphi\|$ for instance provides a norm, with $x_0$ an arbitrary element of $\cH$.
          In the present paper we will use for $\cO$ the space $\Cd$ defined as the space of measurable functions on $\cH$ with finite Lipschitz seminorm
    \begin{equation*}
        \| \varphi \|_{\td} = \sup_{x\neq y} \frac{|\varphi(x) - \varphi(y)|}{\td(x,y)}.
    \end{equation*}
    Here $\td$ is a \textit{semimetric} on $\cH$, namely $\td: \cH\times \cH \to \R_+$ is symmetric, lower semi-continuous and such that $\td(x,y) = 0 \Leftrightarrow x = y $. When the symmetry fails, we refer to $\td$ as a \textit{premetric}. 
   Consider the Wasserstein semimetric associated to $\td$, namely
   \begin{equation*}
        W_{\td}(\mu_1, \mu_2) := \inf_{\Gamma \in  \mathcal{C}(\mu_1, \mu_2)} \int \td(x, y) \; \Gamma(dx, dy) ,
    \end{equation*}
    where $\mu_1, \mu_2\in \cM_1(\cH)$ and $\mathcal{C}(\mu_1, \mu_2)$ is the set of all couplings of $\mu_1, \mu_2$.
    The inequality 
    \begin{equation}
    \label{eq: sup <f, mu> leq W_d}
       \sup_{\|\varphi\|_{\td}\leq 1} \left| \langle \varphi, \mu_1 \rangle -  \langle \varphi, \mu_2 \rangle\right| \leq  W_{\td}(\mu_1, \mu_2),
    \end{equation}
  follows directly from the definition of the Wasserstein semimetric.
  When $\td$ is a metric, $W_{\td}$ is itself a metric and inequality~\eqref{eq: sup <f, mu> leq W_d} turns out to be an equality (Kantorovich-Rubinstein formula).
  When $\td$ is merely a semimetric, like in the framework we are about to work with, we can still provide a sufficient condition for \ref{F2} in terms of the Wasserstein semimetric $W_{\td}$.
    In fact 
    \begin{align}
            \left| (\cP^{a_1}_t - \cP^{a_2}_t)\psi(x) \right| &=
           \left| \langle \psi , P^{a_1}_t(x, \cdot)\rangle - \langle \psi , P^{a_2}_t(x, \cdot)\right| \nonumber \\
            &\leq \| \psi\|_{\td} \, W_{\td}(P^{a_1}_t(x, \cdot), P^{a_2}_t(x, \cdot)).\label{eq:semigroupwass}
    \end{align}
    Then, if $a \mapsto P_t^{a}(x, \cdot)$ is H\"{o}lder continuous in the Wasserstein semimetric $W_{\td}$ in the sense that
    \begin{align*}
         W_{\td}(P_t^{a_1}(x, \cdot), P_t^{a_2}(x, \cdot)) \leq |a_1 - a_2|^\alpha C(t)U(x) \fa x\in \cH
    \end{align*}
    with $U$ satisfying \ref{F3}, we get from \eqref{eq:semigroupwass}
    \begin{equation*}
        \|(\cP^{a_1}_t - \cP^{a_2}_t)\psi \|_{U}\leq |a_1 - a_2|^\alpha \| \psi\|_{\td} C(t)
    \end{equation*}
    as desired.
    We have then shown the following result:
    \begin{corollary}
    \label{thm:generalmuHolder}
       Let $\cP_t^a$ be a family of Markov semigroups acting on $(\Cd, \|\cdot\|_{\td})$ with transition probabilities $P_t^a$. Assume \ref{F1}, \ref{F3} hold, as well as condition
       \begin{itemize}
          \item[{\crtcrossreflabel{F2'}[F2']}] there exists a positive function $C = C(\alpha, t, \varepsilon)$ such that 
        \begin{align*}
                 W_{\td}(P_t^{a_1}(x, \cdot), P_t^{a_2}(x, \cdot)) \leq |a_1 - a_2|^\alpha C(t)U(x).
        \end{align*}
       \end{itemize}
      Then \ref{F2} holds and thus the conclusions of \autoref{thm:general_holder}. 
     \end{corollary}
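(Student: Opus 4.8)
The plan is to reduce condition \ref{F2} to condition \ref{F2'} by means of the Kantorovich-type bound \eqref{eq: sup <f, mu> leq W_d}, after which \autoref{thm:general_holder} applies directly. This is essentially a bookkeeping of the displayed estimates in the paragraph immediately preceding the statement, so I will just organise them into a proof.

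First I would fix $\psi \in \Cd$ and $x \in \cH$ and recall that, since $\cP_t^a \psi(x) = \langle \psi, P_t^a(x,\cdot)\rangle$, the quantity $(\cP_t^{a_1} - \cP_t^{a_2})\psi(x)$ is the difference of the integrals of $\psi$ against the two probability measures $P_t^{a_1}(x,\cdot)$ and $P_t^{a_2}(x,\cdot)$. Applying \eqref{eq: sup <f, mu> leq W_d} to these two measures — note that even though $\td$ is only a semimetric we still have the inequality direction, which is all we need — yields, exactly as in \eqref{eq:semigroupwass}, the pointwise bound $|(\cP_t^{a_1} - \cP_t^{a_2})\psi(x)| \leq \|\psi\|_{\td}\, W_{\td}(P_t^{a_1}(x,\cdot), P_t^{a_2}(x,\cdot))$.

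Next I would insert condition \ref{F2'} into this bound, divide both sides by $U(x) \geq 1$, and take the supremum over $x \in \cH$, which gives $\|(\cP_t^{a_1} - \cP_t^{a_2})\psi\|_U \leq |a_1 - a_2|^\alpha\, C(t)\,\|\psi\|_{\td}$. Since each $\cP_t^a$ maps $\Cd$ into $\Cd \subset \CU = \overline{\Cd}$, the left-hand side is a genuine element of $\CU$, so dividing by $\|\psi\|_{\cO} = \|\psi\|_{\td}$ (over $\psi$ of nonzero seminorm) and taking the supremum produces $\|\cP_t^{a_1} - \cP_t^{a_2}\|_{L(\Cd,\CU)} \leq C(t)\,|a_1 - a_2|^\alpha$ for all $a_1, a_2 \in I$; this is precisely \ref{F2}, with the Hölder constant $C(t)$ (which may of course also carry the permitted dependence on $I$, i.e.\ on $\varepsilon$). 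With \ref{F1}, \ref{F2}, \ref{F3} now all verified on $\cO = \Cd$, \autoref{thm:general_holder} immediately gives the asserted $\alpha$-Hölder continuity of $a \mapsto \langle \varphi, \mu_a \rangle$ for every $\varphi \in \Cd$, completing the proof.

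There is no substantive obstacle here; the only point deserving a line of care is that the codomain of $\cP_t^{a_1} - \cP_t^{a_2}$ genuinely lies in the closed space $\CU$ rather than merely in the larger space of measurable functions with finite $\|\cdot\|_U$, and this is guaranteed by the hypothesis that each $\cP_t^a$ preserves $\Cd$ together with $\Cd \subset \CU$.
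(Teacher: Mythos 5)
Your proposal is correct and follows essentially the same route as the paper: the pointwise Kantorovich-type bound $|(\cP_t^{a_1}-\cP_t^{a_2})\psi(x)|\leq\|\psi\|_{\td}\,W_{\td}(P_t^{a_1}(x,\cdot),P_t^{a_2}(x,\cdot))$, followed by insertion of the Wasserstein H\"{o}lder estimate and division by $U(x)$, is exactly the computation in equation \eqref{eq:semigroupwass} and the surrounding paragraph. Your additional remark that the image of $\cP_t^{a_1}-\cP_t^{a_2}$ must actually land in $\CU$ is a small point of care the paper leaves implicit, but it does not alter the argument.
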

 \begin{remark}\label{rmk:weaktriangular}
 It may appear that the result is covered in \cite[section~4.1]{HMattSch11} but as $\td$ is only a semimetric the two results are not directly linked in general. In \cite{HMattSch11} the authors show that if $\td$ satisfies a weak form of the triangular inequality, if conditions \ref{F2'} and \ref{F3} are satisfied and if there exists $\rho<1$ such that 
 \begin{equation}
 \label{eq:contraction}
     W_{\td}(\cP_t^{a_1} \nu_1, \cP_t^{a_1}\nu_2)\leq \rho W_{\td}(\nu_1, \nu_2), \fa \nu_1, \nu_2\in \cM_1(\cH),
 \end{equation}
 then 
    \begin{equation*}
        W_{\td}(\mu_{a_1}, \mu_{a_2}) \leq 2|a_1 - a_2| C(t) C_I.
    \end{equation*}
 Thanks to \eqref{eq: sup <f, mu> leq W_d}, this result will imply our result \autoref{thm:generalmuHolder}.
 However condition \eqref{eq:contraction} and the spectral gap result \ref{F1}
 do not imply one another. 
    \end{remark}

%
\section{Methodology for SPDEs}
\label{sec:response_SPDE}
%

In this section we aim to give sufficient conditions suitable for dissipative SPDEs which imply the conditions \nameref{asuL} considered in \autoref{asuL} to obtain Linear Response and \nameref{asuF} considered in \autoref{asuF} to obtain Fractional response. First, we give a precise description of the dissipative SPDEs.
 Let $(\cH, |\cdot|)$ and $(\cV, \| \cdot \|)$ be Hilbert spaces with $\cV \subset \subset \cH$ (i.e.\ $\cV$ is compactly contained in $\cH$) and $\cV$ is dense in $\cH$.
 Further, w.l.o.g. $\|v\| \geq |v|$.
 This implies that $\cH = \cH' \subset \subset \cV'$, that $|v| \geq \|v\|_{\cV'}$, and that $\cH$ is dense in $\cV'$.
 Consider the stochastic equation 
  \begin{equation}
  \label{eq:ch5generalsde}
      dX +  AX\, dt= \left(  F(X) + f(a)  \right) \, dt + dW, \quad X(0) = x
  \end{equation}
  where $A:\cV \to \cV'$ is a nonnegative linear operator, $f(a)\in \cV'$ is a deterministic forcing dependent on the parameter $a \in I$, $W$ is a Wiener process on the probability space $(\Omega, \mathcal{F}, \bP)$ with values in $\cH$ and trace class covariance operator $Q:\cH \to \cH$, and $F:\cV \to \cV'$ a nonlinear continuous function. We consider Equation~\eqref{eq:ch5generalsde} as an equation in $\cV'$, and we assume that there exists a unique solution for any initial condition $X(0) = x\in \cH$. 
  For any $a \in I$ we assume the solution to be in $C([0,T]; \cH)\cap L^2(0,T; \cV)$ for all $T >0$ and a.a.~$\omega\in \Omega$, to be continuous with respect to the initial condition (as a function into $C([0,T]; \cH)$), and depend continuously on the forcing $f(a)$ (as a function $\cV'\to \cH$). 
  We define the Markov semigroup $\cP_t$ on $B_b(\cH)$ by
  \( \cP^a_t \varphi (x) = \E \, \varphi (X_a(t, \cdot; x))\)
  and denote the associated transition probabilities as $P^a_t(x, \cdot)$. Given the regularity of the solutions the associated semigroup is Feller and we assume it admits an invariant measure $\mu_a$.
  
  Denote with $L_2(\cH)$ the space of Hilbert--Schmidt operators, so that $Q^{1/2}\in L_2(\cH)$, since $Q$ is trace class, and define $L_2^0 (\cH)$ to be the space of elements $T \in L_2(\cH)$ such that
  \begin{equation*}
      \| T \|_{L_2^0}^2 = \sum_{k\in \N} |T Q^{1/2} e_k|^2
  \end{equation*}
is finite, where $\lbrace e_k, \, k\in \N \rbrace$ is an orthonormal eigenbasis of $\cH$. 
 
 In the next subsections we will discuss conditions under which \nameref{asuL} holds. 
 Most importantly, we will have to require that the forcing $f(a)$ is in the range of the noise covariance $Q$ in order to show that the semigroup is differentiable with respect to the parameter.
 We will then move to fractional response and provide verifiable conditions suitable for SPDEs of the form \eqref{eq:ch5generalsde}, which do not require that $f(a)$ lies in the range of $Q$. 
 %
 \subsection{Spectral gap}
 
   
We saw in \autoref{sec:response_method} that for both differentiability and H\"{o}lder continuity we want to show the operator $(1 - \cP_t^{a})$ to be invertible on an appropriate set of observables. By \autoref{prop:spectralgap_invertibility} having a spectral gap for $\lbrace \cP^a_t\rbrace_{t\geq 0}$ is a sufficient condition to ensure invertibility. 
We discuss two examples of appropriate spaces $\cO$ over which we have the spectral gap.

    In the original work of Hairer and Majda \cite{HMajda10} the set $\cO$ is the closure of the space of cylinder functions $C^\infty_0(\cH)$ under the following norm 
    \begin{equation}
    \label{eq:V,W norm}
        \| \varphi \|_{1; V,W} = \sup_{x\in \cH} \left( \frac{|\varphi(x)|}{V(x)} +  \frac{\|D\varphi (x)\|}{W(x)}\right),
    \end{equation}
    where $V, W: \cH \to [1, \infty)$ are two continuous functions.
    Their choice for $\cO$ fits with the results in \cite{HMatt08}; indeed that work ensures the spectral gap condition in the norm \eqref{eq:V,W norm} for a large class of hypoelliptic diffusions and in particular for the 2D stochastic Navier--Stokes equations with highly degenerate noise. 
    
    As the authors in \cite{HMajda10} observe, it can be proved that if we quotient this space by the space of all constant functions, then there is a distance function $d_{V,W}$ such that this norm is equivalent to the Lipschitz norm corresponding to $d_{V,W}$ i.e.
    \begin{equation*}
        \| \varphi \|_{1; V,W} = \sup_{x\neq y} \frac{|\varphi(x) - \varphi(y)|}{d_{V,W}(x,y)}.
    \end{equation*}
    Then this choice of space fits also in the framework developed for fractional response in \autoref{thm:generalmuHolder} as long as conditions \ref{F2'} and \ref{F3} hold. 
    
    Our analysis will be based on a different spectral gap result, which makes use of another space of observables, or more precisely of a different metric in the definition of the Lipschitz norm. 
   Indeed, a generalised form of Harris' theorem (\cite{HMattSch11,butkovsky2020}) ensures the semigroup $\cP_t^a$ exhibits a spectral gap in the Lipschitz seminorm corresponding to a semimetric $\td$ of the form 
     \begin{equation}
        \label{ch5:def td}
        \td(x,y) = \sqrt{d(x,y) (1 + V(x) + V(y))},
    \end{equation}
     where $d\leq 1$ is another semimetric on $\cH$ (satisfying appropriate conditions) and $V:\cH \to [0, \infty]$ is a Lyapunov function for $\cP_t^a$, namely it satisfies the following:
    \begin{definition}[\cite{HMattSch11}]
   \label{def:LyapunovfctHMS}
        A measurable function $V: \cH \to \R_+$  is called \emph{Lyapunov function} for  a Markov semigroup $\{\cP_t, t \geq 0\}$ if there exist positive constants $C$, $\gamma$, $K$ such that 
        \begin{equation}
        \label{eq:ch4lyapunovcondition}
            \cP_t V(x) \leq C e^{-\gamma t} V(x) + K \fa x\in \cH, \, t\geq 0. 
        \end{equation}
   \end{definition}
     As we want the semimetric $\td$ in definition~\eqref{ch5:def td} to be independent of the choice of the parameter $a$, we assume that $V$ is a Lyapunov function for $\cP_t^a$ for any choice of the parameter $a$, namely that for any $a \in \Ieps$ there exists $K_a, C_a, \gamma_a$ such that
       \begin{equation*}
       \cP_t^a V(x) \leq C_a e^{- \gamma_a t}V(x) + K_a.
   \end{equation*}
   Then we define the observable space $(\cO, \| \cdot \|_{\cO})$ as $\Cd$, the space of measurable functions $\varphi$ with
   \begin{equation*}
        \| \varphi \|_{\td} = \sup_{x\neq y} \frac{|\varphi(x) - \varphi(y)|}{\td(x,y)} = \sup_{x\neq y} \frac{|\varphi(x) - \varphi(y)|}{ \sqrt{d(x,y) (1 + V(x) + V(y))}} <\infty.
   \end{equation*}
    \begin{remark}\label{rmk:quotientspace}
       As $\|\varphi\|_{\td} = 0$ only implies $\varphi$ is a constant function, $\|\cdot \|_{\td}$ is only a seminorm. However it is enough to quotient $\Cd$ by the space of all constant functions in order to make sure it is a Banach space as desired. 
      Equivalently we could have changed the seminorm to $\| \varphi\|_{\td} + |\varphi(x_0)|$, for $x_0\in \cH$, to ensure it is a norm, but we chose not do do so for simplicity's sake, and for better comparison with the available literature.
      \end{remark} 
    We further need a space $\CU$ of measurable functions which is a Banach space under the norm \eqref{eq: def norm U} with  $U: \cH \to [1, \infty)$ and such that $\Cd \subset \CU$. Let $U$ be such that $U \geq \sqrt{1 + V}$ and define $\CU$ as the closure of $\Cd$ in the space of all measurable functions with finite $\|\cdot \|_U$ norm.
    To ensure that $\CU$ is well defined it is sufficient to prove that there exists a constant $k >0$ such that 
     \begin{equation*}
         \| \varphi \|_U \leq k \| \varphi \|_{\td} \fa \varphi \in \Cd.
     \end{equation*}
    This follows directly (see \cite[Proposition~5.1.6.]{thesis}) from the definition \eqref{ch5:def td} of $\td$, the requirement $U \geq \sqrt{1 + V}$ and the integrability of $V$ with respect to any invariant measure $\mu_{a_0}$.
        Indeed it can be shown (see \cite[Lemma~4.1]{butkovsky2014}) that a Lyapunov function is integrable with respect to any invariant measure of the semigroup.
       In our context this implies that $V$ will be integrable with respect to any invariant measure  $\mu_a$ of $\{\cP_t^a\}$ for any $a\in \Ieps$.
       Then we also have an explicit upper bound for its integral which will be useful later on to ensure the integrability of $U$ as well: by the invariance of $\mu_a$ with respect to $\lbrace \cP^a_t, t\geq 0\rbrace$ and by \eqref{eq:ch4lyapunovcondition} we have that 
    \begin{equation*}
        \langle V , \, \mu_a \rangle =  \langle \cP^a_t V , \, \mu_a\rangle \leq  C_a e^{-\gamma_a t} \langle V , \, \mu_a \rangle  +  K_a,
    \end{equation*}
    and therefore
    \begin{equation}
    \label{eq:ch4Vintegrable}
         \langle V , \, \mu_a \rangle \leq \frac{K_a}{(1 - C_a e^{-\gamma_a t})} .
    \end{equation}
Recently \cite{carigi2022exponential} provided a set of verifiable conditions (see \nameref{asuA} below) for nonlinear dissipative SPDEs as \eqref{eq:ch5generalsde}, inspired by the results in \cite{butkovsky2020}, which are sufficient to get a spectral gap in $\Cd$ where $\td$ is defined in \eqref{ed:deftd} below. Let us set the notations to state these conditions and explain the main idea of the proof.
    Let $Y_a$ be the solution of 
    \begin{align*}
        dY_a + AY_a \, dt = \left(  F(Y_a) + f(a)\right)\, dt  + d W_Y, \quad Y_a(0) = y\neq x,
    \end{align*}
    that is, $Y_a$ satisfies the same equation~\eqref{eq:ch5generalsde} as $X_a$ but with a different initial condition and a different Wiener process $W_Y$.
    Therefore $\Law Y_a(t) = P^{a}_t(y, \cdot)$.
    We want to show that the distance in the Wasserstein semimetric between $\Law X_a(t)$ and $\Law Y_a(t)$ becomes small for large enough $t$, as this will then imply the desired results in \autoref{thm:expconvergence}.
    %
    %
    To this end consider an intermediate process
    \begin{equation}
    \label{eq:ch4equationYtilde}
        d\tY_a + A\tY_a \, dt = \left(  F(\tY_a) + f(a)  + G(X_a, \tY_a)\right) \, dt + dW, \quad \tY_a(0) = y,
    \end{equation}
    which we assume has a solution of the type described in connection with Equation~\eqref{eq:ch5generalsde} at the beginning of this section, where $G$ is a so-called control function.
    This control is chosen so that the distance between $X_a(t)$ and $\tY_a(t)$ in the semimetric $\td$ gets small for large enough $t$ and the Wasserstein semimetric between their laws gets small as well.
    However, $\Law \tY_a(t) \neq P^{a}_t(y, \cdot)$ due to the presence of the extra control term.
    If the control function $G$ can be taken in a suitable finite~dimensional subspace of $\cH$, then the distance between $\Law \tY_a(t)$ and $P^{a}_t(y, \cdot)$  can be investigated using the Girsanov theorem (for details see \cite{butkovsky2020,carigi2022exponential}).
    
    Under the following conditions the argument sketched here can be made rigorous:
    %
    %
    %
    
    %
    %
  
    %
    \paragraph{Assumption~A}\label{asuA} 
    Let $\cH_n$ be an $n$-dimensional subspace of $\cH$ with $\Pi_n$ the orthogonal projection onto $\cH_n$.
    The covariance operator $Q$ commutes with $\Pi_n$, and furthermore $Q_n := \Pi_n Q$ is invertible on $\cH_n$.
     Let $\Ieps = (a_0 - \varepsilon, a_0 + \varepsilon) \subset \R$ and fix $a\in \Ieps$. Given a solution $t \to X_{a}(t)$ of \eqref{eq:ch5generalsde}, there exists a measurable function $G: \cH\times \cH \to \cH_n$ such that the controlled equation \eqref{eq:ch4equationYtilde} has a unique solution $\tY_{a}$ in the sense described at the beginning of the current section. 
    %
    In addition, we require the following:
    \begin{itemize}
        \item[{\crtcrossreflabel{A1}[ref:A1]}] There exist $\kappa_0>0$ and $\kappa_1\geq 0$ independent of $a$ such that for all $t\geq 0$
        \begin{equation*}
            |X_{a}(t) - \tY_{a}(t)|^2\leq |x - y|^2 \exp(- \kappa_0 t + \kappa_1 \int_0^t \| X_{a}(s) \|^2 \, ds). 
        \end{equation*}
        \item[{\crtcrossreflabel{A2}[ref:A2]}] There exist $\kappa_2> 0$, $\kappa_\varepsilon\geq 0$ independent of $a$, and for each $\gamma>0$ a random variable $\Xi^a_\gamma$ such that 
        \begin{equation*}
            |X_{a}(t)|^2 + \kappa_2 \int_0^t  \| X_{a}(s) \|^2 \, ds \leq  |x|^2 + \kappa_\varepsilon t + \Xi^{a}_\gamma ,\quad t\geq 0
        \end{equation*}
        with $\kappa_0\kappa_2 > \kappa_1 \kappa_{\varepsilon} $ and 
        \begin{equation}
        \label{eq:A2 martingale estimate}
            \bP(\Xi^{a}_\gamma \geq R) \leq e^{-2\gamma R}, \fa R \geq 0.
        \end{equation}
        \item[{\crtcrossreflabel{A3}[ref:A3]}] There exists a positive constant $c > 0$, independent of $a$, such that for each $t\geq 0$ and $s\in [0,t]$
        \begin{equation*}
            |G(X_a(s), \tY_a(s))|^2 \leq c | X_a(s) - \tY_a(s) |^2.
        \end{equation*}
        \item[{\crtcrossreflabel{A4}[ref:A4]}] There exists a measurable function $V: \cH \to \R_+$, independent of $a$, such that for some $\gamma_\varepsilon, K_\varepsilon>0$
        so that the following estimate holds for $t\geq s\geq 0$ and all $a \in I$:
        \begin{equation*}
            \E V(X_a(t)) \leq \E V(X_a(s)) + \int_s^t \left( -\gamma_\varepsilon \E V(X_a(\tau)) + K_\varepsilon \right)\; d\tau .
        \end{equation*}
        Furthermore, for any $M>0$ the function $x\mapsto |x|^2$ is bounded on the sublevel sets $\lbrace V \leq M \rbrace$.
    \end{itemize}
      
        As a direct consequence of \ref{ref:A4}, the function $V$ is a Lyapunov function for the semigroup $\cP_t^a$ with
        \begin{equation}
        \label{eq:ourLyapunov}
            \cP_t^a V(x) \leq V(x) e^{- t\gamma_\varepsilon} +\tfrac{K_\varepsilon}{\gamma_\varepsilon} \fa x\in \cH,\, t\geq 0.
        \end{equation}
    Furthermore, as seen in \cite{carigi2022exponential, butkovsky2020}, \nameref{asuA} 
    is sufficient to show spectral gap on a space $\Cd$ of observables for a semimetric $\td$ defined as follows.
    Given $\kappa_1, \kappa_2$ and $\gamma$ as in \ref{ref:A1} and \ref{ref:A2}, set
        \begin{equation}
        \label{eq:QGupsilonalpha0}
         \upsilon = \frac{\kappa_1}{\kappa_2} \quand \alpha_0 =  \frac{1}{2}\wedge \frac{2\gamma}{\upsilon + 2\gamma}.
        \end{equation} 
    For $\alpha\in (0, \alpha_0)$ define the premetric $\theta_\alpha$ 
    \begin{equation}
    \label{eq:ch4 theta(u,v)}
        \theta_\alpha(x,y) := |x - y|^{2\alpha} e^{\alpha \upsilon |x|^2}
    \end{equation}
      and, given $N\in \N$, define the semimetric $d_N$ as 
    \begin{equation}
        \label{eq:ch4defdN}
        d_N (x,y) := N \theta_\alpha(x,y) \wedge N\theta_\alpha(y,x) \wedge 1 
    \end{equation}
    and finally the semimetric considered is given by
    \begin{equation}
    \label{ed:deftd}
    \td(x,y) := \sqrt{d_N(x,y)(1 + V(x) + V(y))}
     \end{equation}
     for a suitably chosen $N$. 
    %
   Then the following result holds:
    \begin{theorem}[{\cite{carigi2022exponential}}]
    \label{thm:expconvergence}
    Consider $X_a(t)$ solution of \eqref{eq:ch5generalsde} with associated Markov semigroup $\cP^{a}_t$ and invariant measure $\mu_a$. Suppose \nameref{asuA} holds and define the space $\Cd$ with $\td$ as in \eqref{ed:deftd}.
        Then the following holds: 
        \begin{enumerate}
            \item \emph{Exponential stability}: there exists positive constants $r,\, C,\, t_0$, independent of $a$, such that 
        \begin{equation*}
            W_{\td}(P^a_t(x, \cdot), \mu_a) \leq C(1 + V(x)) e^{-r t} \fa x\in \cH, t\geq t_0. 
        \end{equation*}
            \item \emph{Spectral gap}: There exists and $\rho<1$ such that 
        \begin{equation*}
            \| \cP^{a}_{t} \varphi  - \langle \varphi , \mu_a \rangle  \|_{\td} \leq  \rho \| \varphi - \langle \varphi, \mu_a \rangle  \|_{\td} \fa \varphi\in \Cd,\, t \geq 0.
        \end{equation*}
    \end{enumerate}
      
    \end{theorem}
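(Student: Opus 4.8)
Since this statement merely recalls a result of \cite{carigi2022exponential}, the plan is to reduce it to the generalised Harris theorem of \cite{HMattSch11}, in the SPDE--adapted form of \cite{butkovsky2020}. That theorem takes two inputs: a drift (Lyapunov) condition, and a local contraction for a suitable semimetric on sublevel sets of the Lyapunov function. The drift condition is immediate: condition~\ref{ref:A4} is precisely \eqref{eq:ourLyapunov}, $\cP_t^a V(x) \leq e^{-\gamma_\varepsilon t} V(x) + K_\varepsilon/\gamma_\varepsilon$, with constants independent of $a \in \Ieps$. All the work is therefore in producing, for a fixed time $t_\ast$ and every level $R$, a bound $\Wtd{P_{t_\ast}^a(x,\cdot)}{P_{t_\ast}^a(y,\cdot)} \leq \epsilon_R$ with $\epsilon_R$ strictly below the trivial bound, valid for all $x,y \in \{V \leq R\}$ uniformly in $a$; here $\td$ is \eqref{ed:deftd}, built from the truncated semimetric $d_N$ of \eqref{eq:ch4defdN} for a suitably large $N$.

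For that local contraction I would use the three--process coupling already set up before the statement: fix $x \neq y$ and a common driving noise, let $X_a$ start at $x$, let the controlled process $\tY_a$ of \eqref{eq:ch4equationYtilde} start at $y$, and separately let $Y_a$ solve the uncontrolled equation from $y$, so that $\Law Y_a(t) = P_t^a(y,\cdot)$. Condition~\ref{ref:A1} bounds $|X_a(t) - \tY_a(t)|^2$ by $|x-y|^2\exp(-\kappa_0 t + \kappa_1 \int_0^t \|X_a(s)\|^2\,ds)$, so, taking expectations of $\theta_\alpha(X_a(t),\tY_a(t)) = |X_a(t)-\tY_a(t)|^{2\alpha} e^{\alpha\upsilon|X_a(t)|^2}$ and inserting the a priori estimate~\ref{ref:A2}: with $\upsilon = \kappa_1/\kappa_2$ the factor $\exp(\alpha\kappa_1\int_0^t\|X_a\|^2)\,e^{\alpha\upsilon|X_a(t)|^2}$ is dominated, up to $e^{\alpha\upsilon(|x|^2 + \kappa_\varepsilon t)}$, by $e^{\alpha\upsilon\Xi^a_\gamma}$. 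Choosing $\gamma$ large and $\alpha < \alpha_0$ so that $\alpha\upsilon < 2\gamma$ and the rate $\alpha\kappa_0$ beats the accumulated drift --- exactly where $\kappa_0\kappa_2 > \kappa_1\kappa_\varepsilon$ and the definition \eqref{eq:QGupsilonalpha0} of $\alpha_0$ enter --- the tail bound \eqref{eq:A2 martingale estimate} makes $\E\theta_\alpha(X_a(t),\tY_a(t))$ small for $t$ large, uniformly over $\{V\le R\}$ and over $a$. The gap between $\tY_a$ and $Y_a$ is then closed with Girsanov: since $G$ takes values in the finite--dimensional $\cH_n$ on which $Q_n$ is invertible and $|G|^2 \lesssim |X_a - \tY_a|^2$ by~\ref{ref:A3}, the density of $\Law\tY_a(t)$ against $P_t^a(y,\cdot)$ has moments controlled by the same estimates, and the standard argument of \cite{butkovsky2020,carigi2022exponential} converts the smallness of $\E\theta_\alpha(X_a(t),\tY_a(t))$ into $\Wtd{P_t^a(x,\cdot)}{P_t^a(y,\cdot)}$ being a genuine fraction below the trivial bound on $\{V\le R\}$ --- once $N$ has been fixed large enough that the truncation $\wedge\,1$ in \eqref{eq:ch4defdN} does not swallow the gain.

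With the drift condition and this $d_N$--contraction on sublevel sets in hand, the generalised Harris theorem supplies $\bar\rho < 1$ and $T > 0$, uniform in $a$, with $\Wtd{\cP_T^a\nu_1}{\cP_T^a\nu_2} \leq \bar\rho\,\Wtd{\nu_1}{\nu_2}$ for all $\nu_1,\nu_2 \in \cM_1(\cH)$ and $\td$ exactly \eqref{ed:deftd}. The exponential--stability claim follows by taking $\nu_1 = \delta_x$, $\nu_2 = \mu_a$, iterating (using invariance of $\mu_a$), writing $t = kT + s$, bounding $\Wtd{\cP_s^a\delta_x}{\mu_a} \leq C(1 + V(x))$ via the product coupling together with $\cP_s^a V(x) \leq V(x) + K_\varepsilon/\gamma_\varepsilon$ and $\langle V, \mu_a\rangle < \infty$ from \eqref{eq:ch4Vintegrable}, and setting $r = -\log\bar\rho/T$, $t_0 = T$. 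The spectral--gap claim follows by duality: applying the contraction to $\delta_x,\delta_y$ gives $\Wtd{P_T^a(x,\cdot)}{P_T^a(y,\cdot)} \leq \bar\rho\,\td(x,y)$, whence for $\varphi \in \Cd$
\[
|\cP_T^a\varphi(x) - \cP_T^a\varphi(y)| = \left|\scp{\varphi, P_T^a(x,\cdot)} - \scp{\varphi, P_T^a(y,\cdot)}\right| \leq \|\varphi\|_{\td}\,\Wtd{P_T^a(x,\cdot)}{P_T^a(y,\cdot)} \leq \bar\rho\,\|\varphi\|_{\td}\,\td(x,y),
\]
using \eqref{eq: sup <f, mu> leq W_d}; since $\|\cdot\|_{\td}$ annihilates constants this reads $\|\cP_T^a\varphi - \scp{\varphi,\mu_a}\|_{\td} \leq \bar\rho\,\|\varphi - \scp{\varphi,\mu_a}\|_{\td}$, the spectral gap at the fixed time $T$ (and, by iteration, for $t \geq t_0$), which is what the results of \autoref{sec:response_method} require.

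The main obstacle is the local contraction of the second paragraph: one must interlock the \emph{conditional} contraction~\ref{ref:A1} (whose rate is consumed by the random integral $\int_0^t\|X_a\|^2$), the \emph{sharp} exponential tail~\ref{ref:A2}, and the \emph{Girsanov} cost~\ref{ref:A3}, and tune $\alpha < \alpha_0$, $\gamma$, $N$ and $t_\ast$ simultaneously so that after all three losses the Wasserstein distance genuinely contracts on sublevel sets, with every constant independent of $a \in \Ieps$. A further technical point is that $\td$ is only a lower semi--continuous semimetric obeying a \emph{weak} triangle inequality, so classical Kantorovich--Wasserstein theory does not apply verbatim and one must work inside the distance--like--function framework of \cite{HMattSch11}; this is precisely why the result is imported wholesale from \cite{carigi2022exponential} rather than re--derived here.
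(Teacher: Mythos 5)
Your proposal is correct and follows essentially the same route the paper itself indicates: the paper imports this theorem from \cite{carigi2022exponential} without proof, but the sketch preceding \nameref{asuA} (coupling $X_a$ with the controlled process $\tY_a$, using \ref{ref:A1}--\ref{ref:A2} to contract $\theta_\alpha$ for $\alpha<\alpha_0$, closing the gap between $\Law\tY_a(t)$ and $P_t^a(y,\cdot)$ via Girsanov through \ref{ref:A3}, and feeding the resulting local contraction plus the Lyapunov drift \eqref{eq:ourLyapunov} into the generalised Harris theorem of \cite{HMattSch11,butkovsky2020}) is exactly what you reconstruct. Your final duality step from the Wasserstein contraction to the seminorm spectral gap via \eqref{eq: sup <f, mu> leq W_d} is also the standard passage the paper relies on.
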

    

By definition, the semimetric $d_N$ is comparable to the $\alpha_0$--power of the original metric on the space, with $\alpha_0$ as in \eqref{eq:QGupsilonalpha0}. Therefore, \autoref{thm:expconvergence} is giving us that, $\cP^a_t$ exhibits a spectral gap on the set of observables which are $\alpha_0$--H\"{o}lder continuous over the level sets of the Lyapunov function $V$. 
%



\subsection{Linear Response}
\label{subsec:SPDElinear}
   For the space of observables $(\Cd, \| \cdot \|_{\td})$ we want to show under which condition the model introduced in \eqref{eq:ch5generalsde} exhibits linear response with respect to the parameter $a$. We will do that by giving sufficient conditions for \nameref{asuL} to hold. 
   First of all we note that \autoref{thm:expconvergence} ensures that Assumption~\ref{L1} holds. 
     As observed above, the observables in $\Cd$ are not differentiable but locally H\"{o}lder with exponent $\alpha < \frac{1}{2}$ with respect to the norm $|\cdot|$ on $\cH$.
     The lack of differentiability of the observables effectively requires some regularization property of the semigroup to ensure \ref{L2} holds.
%
    %
   \begin{theorem}
   \label{thm:diffsemigroup}
        Set $\Ieps = (a_0 - \varepsilon, a_0 + \varepsilon)\subset \R$, $\varepsilon>0$.  Let $X_a(t)$ be the solution of \eqref{eq:ch5generalsde}
        with the map $a\mapsto f(a)$ continuously differentiable as a map from $\Ieps$ into $\range Q$ with
        \begin{equation}
        \label{eq:conditionDf}
            \sup_{a\in \Ieps} |D_a Q^{-1/2}f(a)| < \infty,
        \end{equation}
        and let $\left(\cP_t^a\right)_{t\geq 0}$ be the associated semigroup. 
        Let $V$ be a Lyapunov function for $\left(\cP_t^a\right)_{t\geq 0}$ for any $a\in \R$ and consider a function $U: \cH \to [1, \infty)$ such that $U\geq \sqrt{1+ V}$. 
        Then Assumption~\ref{L2} holds.
        Furthermore the derivative $\left.\left(D_a \cP_t^a\right)\right|_{a = a_0}$ has the explicit formulation 
       \begin{equation*}
            \left. \left( D_a \cP^a \psi \right) \right|_{a = a_0}(x) =  \E\,  \left [\,\psi(X_{a_0}(t, x))  (Q^{-1} \left(D_a f\right)|_{a = a_0}, W(t))\, \right].
        \end{equation*}
    \end{theorem}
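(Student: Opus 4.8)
The plan is to prove this by a Girsanov change of measure. The key observation is that if $f(a)$ takes values in $\range Q$ and we write $f(a) = Q^{1/2} g(a)$ with $g(a) = Q^{-1/2} f(a)$, then for two nearby parameter values $a$ and $a_0$ the solution $X_a$ driven by $W$ has the same law as the solution $X_{a_0}$ driven by a shifted Wiener process. Concretely, the process $\widetilde W_a(t) := W(t) + \int_0^t Q^{-1/2}(f(a_0) - f(a))\,ds$ is, under a new measure $\bP_a$ defined by the density
\begin{equation*}
  \frac{d\bP_a}{d\bP} = \exp\!\left( \int_0^t \scp{Q^{-1/2}(f(a) - f(a_0)), dW(s)} - \tfrac12 \int_0^t |Q^{-1/2}(f(a) - f(a_0))|^2\, ds \right),
\end{equation*}
a Wiener process with the same covariance $Q$ (this uses exactly that $f(a) - f(a_0) \in \range Q^{1/2}$, which is where condition \eqref{eq:conditionDf} enters — it guarantees the Novikov condition and hence that $\bP_a$ is a probability measure). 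Under $\bP_a$ the solution $X_{a_0}$ driven by $\widetilde W_a$ satisfies the equation with forcing $f(a)$, so $\cP_t^a \psi(x) = \E^{\bP_a}[\psi(X_{a_0}(t,x))] = \E^{\bP}[\psi(X_{a_0}(t,x)) M_t^a]$ where $M_t^a$ is the density above. This is the standard trick for bypassing non-differentiability of observables: the $a$-dependence is moved entirely onto the smooth scalar multiplier $M_t^a$.

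From there the structure is straightforward. One writes
\begin{equation*}
  \frac{\cP_t^a \psi(x) - \cP_t^{a_0}\psi(x)}{a - a_0} = \E\!\left[ \psi(X_{a_0}(t,x)) \, \frac{M_t^a - 1}{a - a_0} \right],
\end{equation*}
and since $a \mapsto M_t^a$ is smooth with $M_t^{a_0} = 1$, one expects the difference quotient to converge to $\psi(X_{a_0}(t,x)) \cdot D_a M_t^a|_{a=a_0}$. A direct differentiation of the exponent gives $D_a M_t^a|_{a=a_0} = \int_0^t \scp{Q^{-1/2} D_a f|_{a=a_0}, dW(s)} = \scp{Q^{-1} D_a f|_{a=a_0}, W(t)}$ (using that $Q^{1/2}$ commutes with the relevant projections and rewriting the Itô integral against $dW$ of a deterministic integrand as an inner product with $W(t)$), which yields the claimed formula. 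To make this rigorous one needs: (i) an $L^2(\bP)$ or $L^p(\bP)$ bound on $(M_t^a - 1)/(a-a_0)$ uniform for $a$ near $a_0$, which follows from the continuous differentiability of $a \mapsto f(a)$ together with the uniform bound \eqref{eq:conditionDf} and standard exponential-moment estimates for stochastic integrals of deterministic integrands; and (ii) a bound on $\E[\psi(X_{a_0}(t,x))^q]$ controlled by $\|\psi\|_{\td}^q$ times a power of $(1 + V(x))$, which comes from the fact that $\psi \in \Cd$ implies $|\psi(z) - \psi(z_0)| \lesssim \|\psi\|_{\td}\sqrt{(1+V(z)+V(z_0))}$ for a reference point $z_0$, together with the Lyapunov estimate \eqref{eq:ourLyapunov} for $\E V(X_{a_0}(t,x))$. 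Combining (i) and (ii) via Cauchy–Schwarz (or Hölder) gives both the convergence of the difference quotient in the $\|\cdot\|_U$-norm — here one needs the resulting bound to be of the form $\text{const}\cdot U(x)$, which dictates the choice $U \geq \sqrt{1+V}$ and possibly a slightly larger power absorbed into $U$ — and the operator bound $\|D_a \cP_t^a \psi|_{a=a_0}\|_U \leq C\|\psi\|_{\td}$ required by \ref{L2}.

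The main obstacle I expect is controlling the remainder in the difference quotient \emph{uniformly in $x$ after dividing by $U(x)$}, i.e. genuinely establishing membership in $\CU$ and not merely pointwise convergence. The issue is that the Girsanov density $M_t^a$ has moments that may degrade as $|x|$ grows if the exponent's martingale part is correlated with the solution; here, however, the integrand $Q^{-1/2}(f(a)-f(a_0))$ is \emph{deterministic}, so $M_t^a$ is independent of $x$ and its moments are genuinely uniform — this is the crucial simplification that makes the argument work and should be emphasized. The remaining delicate point is then purely the interplay between the growth of $\psi(X_{a_0}(t,x))$ (governed by $\sqrt{1+V}$ through the $\Cd$-seminorm) and the requirement that the final bound be linear in $U$; one resolves this by taking $U$ large enough relative to $\sqrt{1+V}$ (a fixed power suffices) and invoking the Lyapunov bound \eqref{eq:ourLyapunov}, together with uniform integrability of the martingale terms, to close the estimate. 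Assembling these pieces, the differentiability in $\CU$ and the bound \eqref{eq:ch5_bound_derivativeII} follow, establishing \ref{L2}, and the explicit derivative formula drops out of the computation of $D_a M_t^a|_{a=a_0}$.
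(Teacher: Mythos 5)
Your proposal follows essentially the same route as the paper's proof: a Girsanov change of measure transfers the parameter dependence onto the scalar density, whose derivative at $a_0$ gives the stated formula, and the convergence of the difference quotient in $\|\cdot\|_U$ is closed by Cauchy--Schwarz, uniform (in $a$) moment bounds on the density difference quotient (which is indeed independent of $x$ because the integrand is deterministic), the bound $|\psi(z)-\psi(x)|\le\|\psi\|_{\td}\sqrt{1+V(z)+V(x)}$, and the Lyapunov estimate. The one point to make explicit is that centering $\psi$ at the initial condition $x$ is legitimate because the density has mean one and $m^{a_0}(t)$ has mean zero, so constants drop out of the difference quotient; with that, $U\geq\sqrt{1+V}$ suffices exactly as stated and no larger power is needed.
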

    \begin{proof}
        We want to ensure that there exists a function $\left.\left(D_a \cP_t^a \psi \right)\right|_{a = a_0} \in \CU$ such that 
        \begin{equation*}
            \lim_{a \to a_0} \frac{\| \cP^a_t\psi - \cP^{a_0}_t \psi - (a - a_0) \left.\left(D_a \cP_t^a \psi \right)\right|_{a = a_0} \|_U}{|a - a_0|} =0 
        \end{equation*}
        for any $\psi\in \Cd$. 
        Equivalently, given the definition of $\| \cdot \|_U$, we have to show that
        \begin{equation*}
            \lim_{a \to a_0} \sup_{x\in \cH} \frac{1}{U(x)}\left| \E \frac{\psi(X_a(t, x)) - \psi(X_{a_0}(t, x)) - (a- a_0)\left.\left(D_a \cP_t^a \psi\right)\right|_{a = a_0} }{a - a_0} \right| =0.
        \end{equation*}
        Since $\psi$ is not differentiable this will not follow directly from the differentiability of the solution $X$ with respect to the parameter. To go around this problem we introduce the It\^{o} process 
        \begin{equation*}
            d\tilde{W}^a : = \left(f(a) - f(a_0) \right)\, dt + d W.
        \end{equation*}
       By \eqref{eq:conditionDf} and the mean value theorem, the integral
        \begin{equation*}
            \int_0^T |Q^{-1/2}(f(a) - f(a_0))|^2\, ds \leq T\sup_{\tilde{a}\in \Ieps} | D_a Q^{-1/2} f(\tilde{a})|^2
        \end{equation*}
        is well defined.
        Then, by Girsanov's theorem in Hilbert spaces (see e.g.~\cite[Theorem~10.14]{DaPratoZab2014}), the process $\tilde{W}$ is a $Q$-Wiener process on $(\Omega, \tilde{\bP})$ where $\tilde{\bP}$ is a probability measure absolutely continuous with respect to $\bP$ with density 
        \begin{equation}
        \label{eq: density girsanov}
           \frac{ d\tilde{\bP} }{d\bP}(t,a) = \exp( M^a(t)   -  \tfrac{1}{2}\langle M^a\rangle_t )
        \end{equation} 
        where 
        \begin{equation}
        \begin{split}
            M^a(t) = \left(  Q^{-1}  ( f(a) - f(a_0)), W(t) \right),\\
            \langle M^a\rangle_t = t | Q^{-1/2}  ( f(a) - f(a_0))|^2.
        \end{split}\label{eq:def_M(t,a)}
        \end{equation}
        %
        %
        By definition of $\Tilde{W}$, the solution $\tX_{a_0}$ of 
        \begin{equation*}
            d\tX_{a_0} + A\tX_{a_0} \, dt= (F(\tX_{a_0})+ f(a_0 ))\, dt + d\tilde{W}^{a_0} , \quad \tX_{a_0}(0) = x
        \end{equation*}
        is equivalent to the solution $X_a$ of \eqref{eq:ch5generalsde} and $ \E \, \psi(X_a(t, x))  =  \tilde{\E} \, \psi(\tX_{a_0}(t, x))  $.
        Moreover it follows from \eqref{eq: density girsanov} that
        \begin{equation}
        \label{eq: girsanov expectation}
            \tilde{\E} \, \psi(\tX_{a_0}(t, x)) = \E \left[ \psi(X_{a_0}(t, x))  \frac{ d\tilde{\bP} }{d\bP}(t,a) \right] .
        \end{equation}
        Now, taking \textit{formally} the derivative of \eqref{eq: girsanov expectation} at $a_0$ we have 
        %
         \begin{equation*}
            \left. D_a  \E \, \psi(X_a(t, x)) \right|_{a = a_0} = \E \left[ \psi(X_{a_0}(t, x))  D_a\frac{ d\tilde{\bP} }{d\bP}(t,a_0)\right].
        \end{equation*}
        We have to make sure this candidate is indeed the derivative of $\cP_t^a\psi$ at $a_0$, namely to ensure that
        \begin{equation}\label{eq:defmain}
            \sup_{x} \frac{1}{U(x)}\left| \E \left[ \frac{\psi(X_a(t, x)) - \psi(X_{a_0}(t, x))}{a - a_0} - \psi(X_{a_0}(t, x)) D_a\frac{ d\tilde{\bP} }{d\bP}(t,a_0) \right] \right| 
        \end{equation}
        converges to zero when $a$ approaches $a_0$.
        Let us start then by defining the process
       \begin{align}
           m^a(t):= \left( D_a Q^{-1} f (a), W(t)\right), \fa a\in \Ieps \label{def:m(t,a)}
       \end{align}
       which has mean zero and 
       \begin{equation*}
           \E |m^a(t)|^2 = t |D_a Q^{-1/2} f(a)|^2.
       \end{equation*}
       By the differentiability of the exponential function it follows that, almost surely, the density function \eqref{eq: density girsanov} is differentiable and the derivative at $a_0$ is
        \begin{equation*}
            D_a\frac{ d\tilde{\bP} }{d\bP}(t,a_0) = \left(  D_a Q^{-1} f(a_0), W(t)\right) = m^{a_0}(t).
        \end{equation*}
       Since $\E\, m^{a_0}(t) = 0$, we see that \eqref{eq:defmain} does not change if we consider $\psi + k$ with $k$ being a constant, which can depend on the initial condition $x$. 
        Therefore using \eqref{eq: girsanov expectation} we have, if we choose $k = -\psi(x)$,
        \begin{align*}
            \E \left[ \frac{\psi(X_a(t, x))  -\psi(x) + \psi(x) - \psi(X_{a_0}(t, x))}{a - a_0} - \left( \psi(X_{a_0}(t, x)) - \psi(x) \right)  m^{a_0}(t)\right]   = \\
        \E \left[ (\psi(X_{a_0}(t, x))-\psi(x)) S(t,a) \right],
        \end{align*}
        where
        \begin{equation}
            \label{eq: def S(h)}
            S(t,a) =  \tfrac{1}{a - a_0} \left(\frac{ d\tilde{\bP} }{d\bP}(t,a) - 1 \right) -  m^{a_0}(t).
        \end{equation}
        %
        Since $\|\psi\|_{\td}<\infty$, for any fixed $t>0$, we have, by definition \eqref{ch5:def td} of $\td$ 
        \begin{equation}
        \label{eq535}
            |\psi(X_{a_0}(t)) - \psi(x)|  \leq \| \psi\|_{\td} \, \td(X_{a_0}(t), x) \leq \| \psi\|_{\td}(1 + V(X_{a_0}(t)) + V(x))^{1/2}.
        \end{equation}
        Therefore, by Cauchy-Schwartz inequality
        \begin{equation*}
            \E \left[ | \psi(X_{a_0}(t, x))- \psi(x)| \left| S(t,a) \right| \right] \leq \| \psi\|_{\td}\,\sqrt{\E (1 + V(X_{a_0}(t)) + V(x))  } \sqrt{ \E \left| S(t,a)\right|^2 },
        \end{equation*}
        so that 
        \begin{multline}
        \label{eq:longone}
            \lim_{a \to a_0} \sup_{x\in \cH} \frac{1}{U(x)} \E | \psi(X_{a_0}(t, x)) - \psi(x)| \left| S(t,a) \right| \leq  \\ \| \psi\|_{\td}\,\underbrace{\sup_{x\in \cH} \frac{\left( 1 + V(x) + \E V(X_{a_0}(t))  \right)^{1/2}}{U(x)}}_{(I)} \underbrace{\lim_{a \to a_0} \left( \E \left| S(t,a)\right|^2 \right)^{1/2}}_{(II)}.
        \end{multline}
        Let us examine the terms on the right hand side of this expression. 
        
            \paragraph{(I)} By \autoref{def:LyapunovfctHMS} of the Lyapunov function $V$ there exist positive constants $C_{a_0}$, $\gamma_{a_0}$, $K_{a_0}$ such that
            \begin{equation}
            \label{eq:lyapunovlinear}
            \cP_t^{a_0} V(x) \leq C_{a_0}e^{-\gamma_{a_0} t}V(x) + K_{a_0}.
            \end{equation}
            It follows that (I) can be bounded above by
            \begin{equation*}
            \sup_{x\in \cH}\frac{\left(1 + K_{a_0}+ V(x)(1 + C_{a_0} e^{-\gamma_{a_0} t})\right)^{1/2}}{U(x)} .
            \end{equation*}
            Then, since $U(x)\geq \left(1+ V(x)\right)^{1/2}$ the right hand side stays bounded. 
        \paragraph{(II)} By definition it follows that, almost surely,
        \[
        \lim_{a \to a_0} |S(t,a)| = \lim_{a \to a_0} \left| \tfrac{1}{a - a_0} \left(\frac{ d\tilde{\bP} }{d\bP}(t,a) - 1 \right) -  m^{a_0}(t)\right| =  0 .
        \]
         Then, if we ensure uniform integrability, namely
         \begin{equation}
             \label{eq:uniform_integrability}
            \lim_{c\to 0} \sup_{a \in \Ieps} \E \left[ |S(t,a)|^2\mathbbm{1}_{|S(t,a)|^2\geq c}\right] = 0,
         \end{equation}
        then $|S(t,a)|^2$ converges to zero in expectation for $a \to a_0$, as desired.
         In particular, as
         \begin{equation*}
              \E \left[ |S(t,a)|^2\mathbbm{1}_{|S(t,a)|^2\geq c}\right] \leq \frac{1}{c}\,\E |S(t, a)|^4,
         \end{equation*}
         for \eqref{eq:uniform_integrability} to hold, it is sufficient to show that 
         \(
             \sup_{a\in \Ieps} \E |S(t,a)|^4 < \infty.
         \)
         By the definition \eqref{eq: def S(h)} of $S(t,a)$ and the triangular inequality we have 
         \begin{equation}
         \label{eq1S}
              \E |S(t,a)|^4 \leq 8\, \E \left| \frac{1}{a- a_0}\left(\frac{ d\tilde{\bP} }{d\bP}(t,a) - 1 \right) \right|^4 + 8\, \E |m^{a_0}(t)|^4.
         \end{equation}
         As $m^{a_0}(t)$ is Gaussian with zero mean and variance $t|D_a Q^{-1/2}f(a_0)|^2$, it follows that 
         \begin{equation*}
             \E |m^{a_0}(t)|^4 = 3 t^2 |D_a Q^{-1/2}f(a_0)|^4
         \end{equation*}
         which stays finite by \eqref{eq:conditionDf}. So we are left to show that 
         \begin{equation*}
               \E \left| \frac{1}{a- a_0}\left(\frac{ d\tilde{\bP} }{d\bP}(t,a) - 1 \right) \right|^4 
         \end{equation*}
         is uniformly bounded in $\Ieps$. By the mean value theorem in integral form and the definition of the density \eqref{eq: density girsanov} we have
          \begin{align}
              \E \left| \frac{1}{a- a_0}\left(\frac{ d\tilde{\bP} }{d\bP}(t,a) - 1 \right) \right|^4 = \E \left|\int_0^1 \exp(rM^a(t) - \tfrac{r}{2} \langle M^a\rangle_t )\frac{M^a(t) - \tfrac{1}{2}\langle M^a\rangle_t}{a - a_0}\, dr\right|^4 \nonumber\\
                \leq \E  e^{4|M^a(t)|}\left| \frac{M^a(t) - \tfrac{1}{2}\langle M^a\rangle_t}{a - a_0}\right|^4 \leq \left( \E  e^{8|M^a(t)|}\right)^{1/2} \left( \E \left| \frac{M^a(t)- \tfrac{1}{2}\langle M^a\rangle_t}{a - a_0}\right|^8 \right)^{1/2} \label{eq2}
         \end{align}
         where we have also used Cauchy-Schwartz inequality.
        %
        %
        As $M^a$ has Gaussian distribution with mean zero and variance $t|Q^{-1/2}(f(a) - f(a_0))|^2$, it can be shown that the right hand side of \eqref{eq2} is uniformly bounded by the mean value theorem applied to the variance, and by condition \eqref{eq:conditionDf}.

        We have proved that in \eqref{eq:longone} part (I) stays bounded and part (II) converges to zero. Then Equation~\eqref{eq:longone} ensures that $a \mapsto \cP_t^a\psi$ is differentiable in $a_0$ and its derivative at $a_0$ is 
        \begin{equation*}
            \left. \left( D_a \cP^a \psi \right) \right|_{a = a_0}(x) =  \E\,  \left [\,\psi(X_{a_0}(t, x))  m^{a_0}(t) \, \right].
        \end{equation*}
        
        We are left to show that \eqref{eq:ch5_bound_derivativeII} holds, namely the derivative is bounded as an operator from $\Cd$ to $\CU$, i.e.~there exists $C$ such that 
        \begin{equation*}
            \|\left. \left( D_a \cP^a \psi \right) \right|_{a = a_0}\|_U = \sup_{x} \frac{\left| \, \E\, \left[ \psi(X_{a_0}(t, x)) m^{a_0}(t) \right]\, \right|}{U(x)} \leq C \| \psi \|_{\td}
        \end{equation*}
        for all $\psi \in \Cd$. 
        Again, since $\E\,  m^{a_0}(t) = 0$ we have
        \begin{align*}
            \left|\,\E\, \left[\, \psi(X_{a_0}(t, x))   m^{a_0}(t)\,\right]\,\right|  &\leq \E\, \left|\left(\psi(X_{a_0}(t, x)) - \psi(x)\right)  m^{a_0}(t)\right|\\
             &\leq \,  \E\, \left[ |\psi(X_{a_0}(t, x))- \psi(x)|\left| \int_0^t Q^{-1} D_af(a_0)\, dW_s\right| \right].
        \end{align*}
        Then by \eqref{eq535}, Cauchy-Schwartz inequality and the It\^o isometry we have 
        \begin{align*}
        \leq  \| \psi\|_{\td}  \, \big( 1 + V(x)+ \E\,  V(X_{a_0}(t,x))\big)^{1/2}  |Q^{-1 /2}D_af(a_0)| 
        \end{align*}
        and, by the estimate \eqref{eq:lyapunovlinear}
        \begin{equation*}
            \leq  \| \psi\|_{\td}  \, \left(1 + K_{a_0} + V(x)(1 +C_{a_0}e^{-t\gamma_{a_0}}) \right)^{1/2}  |Q^{-1/2}D_af(a_0)| .
        \end{equation*}
        As $U \geq \left(1+V\right)^{1/2}$, setting
        \begin{equation*}
        C := |Q^{-1/2 }D_af(a_0)| \, t^{1/2}  \sup_{x\in \cH} \frac{\left(1 + K_{a_0} + V(x)(1 +e^{-t\gamma_{a_0}}) \right)^{1/2}}{U(x)} < \infty,
        \end{equation*}
        we have 
        \begin{equation*}
        \|\left. \left( D_a \cP_t^a \psi \right) \right|_{a = a_0}\|_U \leq C \| \psi\|_{\td}  .
        \end{equation*}
    \end{proof}
    
    We are now ready to show the following result: 
    
  \begin{theorem}
    \label{thm:responseU}
      Set $\Ieps = (a_0 - \varepsilon, a_0 + \varepsilon) \subset \R$ and consider the system \eqref{eq:ch5generalsde} with the map $a\mapsto f(a)$ differentiable as a map from $\Ieps$ into $\range Q$ and with $|Q^{-1/2}D_a f(a)|$ uniformly bounded, and let $\mu_a$ be the associated unique invariant measure. Suppose \nameref{asuA} holds for $a = a_0$, with Lyapunov function $V$. 
      Then
      the map $a \mapsto \langle \varphi, \mu_{a}\rangle $ is differentiable at $a = a_0$ for every $\varphi \in \Cd$ and the following identity holds
        \begin{equation*}
             \left. \dv{}{a} \langle \varphi, \mu_a \rangle\right|_{a = a_0} = \langle D_a \cP_t^{a_0} (1 - \cP_t^{a_0})^{-1}(\varphi - \langle \varphi, \mu_{a_0} \rangle), \mu_{a_0}\rangle. 
        \end{equation*}
    \end{theorem}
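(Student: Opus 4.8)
The plan is to verify that the three conditions of \nameref{asuL} hold for the choice $\cO = \Cd$ (turned into a Banach space by passing to the quotient modulo constant functions, as in \autoref{rmk:quotientspace}), the weight $U := \sqrt{1 + V}$, and $\CU := \overline{\Cd}$ in the $\|\cdot\|_U$--norm, and then simply invoke \autoref{thm:general_response_HM}. First I would fix a single time $t > 0$ at which both the spectral gap of \autoref{thm:expconvergence} and the differentiability of the semigroup from \autoref{thm:diffsemigroup} are available. With this $t$ fixed, \autoref{thm:expconvergence}(2) is verbatim condition~\ref{L1}; moreover the same estimate, combined with invariance of $\mu_{a_0}$, shows that $\cP_t^{a_0}$ maps $\Cd$ into $\Cd$, so that $(1 - \cP_t^{a_0})\psi + c \in \Cd$ for every $\psi \in \Cd$, $c \in \R$, which is the structural requirement in the set-up of \nameref{asuL}. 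One also notes that $\CU$ is well defined because $U \geq \sqrt{1+V}$ together with the integrability of $V$ gives $\|\varphi\|_U \leq k\|\varphi\|_{\td}$ on $\Cd$.

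Next I would obtain condition~\ref{L2} directly from \autoref{thm:diffsemigroup}. Its hypotheses are met: the assumption that $a \mapsto f(a)$ is (continuously) differentiable into $\range Q$ with $\sup_{a \in \Ieps}|Q^{-1/2} D_a f(a)| < \infty$ is exactly \eqref{eq:conditionDf}; and \nameref{asuA}, via the consequence \eqref{eq:ourLyapunov}, guarantees that $V$ is a Lyapunov function for $\cP_t^a$ for every $a \in \Ieps$, while $U = \sqrt{1+V}$ satisfies $U \geq \sqrt{1+V}$. Hence \autoref{thm:diffsemigroup} yields that $a \mapsto (1-\cP_t^a)\psi$ has values in $\CU$ and is differentiable at $a = a_0$, with the operator bound $\|D_a \cP_t^a \psi|_{a=a_0}\|_U \leq C\|\psi\|_{\td}$ and the explicit formula $\left.(D_a \cP_t^a\psi)\right|_{a=a_0}(x) = \E[\psi(X_{a_0}(t,x))\,(Q^{-1}D_a f(a_0), W(t))]$.

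For condition~\ref{L3} I would argue uniformly in $a$. Because the constants $\gamma_\varepsilon, K_\varepsilon$ in \eqref{eq:ourLyapunov} are independent of $a$ by \nameref{asuA}, testing that inequality against the invariant measure $\mu_a$ and using invariance, exactly as in the derivation of \eqref{eq:ch4Vintegrable}, gives
\[
\langle V, \mu_a\rangle \leq \frac{K_\varepsilon/\gamma_\varepsilon}{1 - e^{-t\gamma_\varepsilon}} \fa a \in \Ieps,
\]
and then Jensen's inequality yields $\langle U, \mu_a\rangle = \langle \sqrt{1+V}, \mu_a\rangle \leq \sqrt{1 + \langle V, \mu_a\rangle}$, bounded uniformly over $a \in \Ieps$. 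This is \ref{L3}. With \ref{L1}--\ref{L3} in hand, \autoref{thm:general_response_HM} applies and gives differentiability of $a \mapsto \langle \varphi, \mu_a\rangle$ at $a = a_0$ for every $\varphi \in \Cd$, together with the identity
\[
\left. \dv{}{a} \langle \varphi, \mu_a \rangle\right|_{a = a_0} = \langle D_a \cP_t^{a_0} (1 - \cP_t^{a_0})^{-1}(\varphi - \langle \varphi, \mu_{a_0} \rangle), \mu_{a_0}\rangle,
\]
where the resolvent $(1 - \cP_t^{a_0})^{-1}$ on $\ker\mu_{a_0}$ comes from \autoref{prop:spectralgap_invertibility} and $D_a \cP_t^{a_0}$ is the operator identified in \autoref{thm:diffsemigroup}.

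The substantial analytic work (the Girsanov comparison and the uniform integrability of the density derivative) has already been carried out in \autoref{thm:diffsemigroup}, so the proof of this theorem is essentially an assembly of the pieces. The one point that still requires care is the uniform-in-$a$ bound on $\langle U, \mu_a\rangle$ needed for \ref{L3}: this is exactly where the $a$--independence of the Lyapunov constants $\gamma_\varepsilon, K_\varepsilon$ built into \nameref{asuA} is indispensable — without it \ref{L1}--\ref{L2} would only be available at $a_0$, and the weak continuity $\langle \xi, \mu_a\rangle \to \langle \xi, \mu_{a_0}\rangle$ underlying \autoref{thm:general_response_HM} (via \autoref{prop:weak_convergence_CU}) would break down.
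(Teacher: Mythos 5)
Your proposal is correct and follows essentially the same route as the paper's proof: verify \ref{L1} via \autoref{thm:expconvergence}, \ref{L2} via \autoref{thm:diffsemigroup} with $U=\sqrt{1+V}$, \ref{L3} via the uniform Lyapunov bound \eqref{eq:ch4Vintegrable} together with Jensen's inequality, and then invoke \autoref{thm:general_response_HM}. The extra checks you include (that $\cP_t^{a_0}$ preserves $\Cd$, that $\CU$ is well defined, and the role of the $a$--independence of $\gamma_\varepsilon, K_\varepsilon$) are consistent with, and slightly more explicit than, the paper's own argument.
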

    \begin{proof}
        In order to apply \autoref{thm:general_response_HM} we show that \nameref{asuL} holds. Since \nameref{asuA} holds for $a = a_0$, then \autoref{thm:expconvergence} ensures $\cP_t^{a_0}$ exhibits a spectral gap, namely Assumption~\ref{L1}. 
        Next, \autoref{thm:diffsemigroup} with $U = \sqrt{1 + V}$ implies that the map $a \mapsto \cP_t^a \varphi$ satisfies Assumption~\ref{L2}. 
       Therefore we only have to ensure that Assumption~\ref{L3} holds for the choice of $U$.  
       As $V$ is a Lyapunov function for any $a \in \Ieps$, by definition of Lyapunov function and \eqref{eq:ourLyapunov} we have that, as we saw in \eqref{eq:ch4Vintegrable}, 
        \begin{equation*}
            \langle V, \mu_a \rangle \leq \frac{ K_\varepsilon }{\gamma_\varepsilon(1 - e^{-\gamma_\varepsilon t})} \fa t>0,
        \end{equation*}
        and as $\langle \sqrt{1 + V }, \, \mu_a \rangle\leq \sqrt{\langle 1 + V, \, \mu_a \rangle }$ Assumption~\ref{L3} follows.
    \end{proof}
    
   Assumption \ref{L2} and \ref{L3} can be shown for other choices of the semimetric $d$ in the definition of $\td$. In fact \autoref{thm:diffsemigroup} does not rely on the explicit definition \eqref{eq:ch4defdN} of $d_N$, but only on the fact that it is not larger than one. In the proof of \autoref{thm:responseU} we saw $U = \sqrt{1 + V}$ satisfies \ref{L3} thanks solely to the properties of the Lyapunov function. We introduced the semidistance $d_N$ in order to obtain \nameref{asuA} which provides quite general yet \emph{verifiable} conditions for SPDEs like \eqref{eq:ch5generalsde} to have a spectral gap.
    In \autoref{sec:response_NS} and \autoref{sec:response_QG} we will give two examples of application of \autoref{thm:responseU}, namely for the stochastic 2D Navier-Stokes equation with additive noise and the stochastic two--layer quasi--geostrophic model with additive noise on one of the layers. First though we close this section by studying when a SPDE like \eqref{eq:ch5generalsde} exhibits fractional response. 

\subsection{Fractional response}
    \label{subsec:Holder}
   So far we showed that, as a function of the parameter $a$, the invariant measure $\mu_a$ is weakly differentiable for observables in the space $\Cd$ when $f(a)$ is in the range of the noise.
    Under no restrictions on the spatial regularity of the forcing, we will still be able to show that $a\mapsto\mu_a$ is H\"{o}lder continuous as a map from $\Ieps$ into the space of functionals on $\Cd$, namely 
    there is $c = c(\varepsilon)$ such that for all $a_1, a_2\in \Ieps$
    \begin{equation}
    \label{eq:holderweak4}
       |\langle \varphi, \mu_{a_1} - \mu_{a_2}\rangle| \leq c \|\varphi \|_{\td}|a_1- a_2|^\alpha
    \end{equation}
    for an appropriate range of $\alpha\in (0, \alpha_0)$. 
    In order to prove \eqref{eq:holderweak4} we want to show that the conditions of \autoref{thm:generalmuHolder} hold. 
    Here we will provide a set of verifiable assumptions for SPDEs like \eqref{eq:ch5generalsde} to show \nameref{asuF}.
%
    \paragraph{Assumption~H} \label{asuH} 
    Let $\Ieps = ( a_0- \varepsilon, a_0+ \varepsilon )\subset \R$ be an interval and {consider \eqref{eq:ch5generalsde} with $a\mapsto f(a)$ being a $\beta$-H\"{o}lder continuous map from $\Ieps$ into $\cV'$. Furthermore} the following conditions hold:
    \begin{itemize}
         \item[{\crtcrossreflabel{H1}[H1]}] \nameref{asuA} holds for any $a\in \Ieps$;
         \item[{\crtcrossreflabel{H2}[H2]}]  Given $X_{a_1}, X_{a_2}$, solutions of \eqref{eq:ch5generalsde} for the same realisation of the noise and $a_1, a_2\in \Ieps$, there exists for all $t\geq 0$ a positive constant $C$ such that 
        \begin{equation*}
            |X_{a_1}(t)  - X_{a_2}(t)|^2 \leq C|a_1- a_2|^{2\beta} \exp(\kappa_1 \int_0^t \| X_{a_1}(s)\|^2 \, ds),
        \end{equation*}
        where $\kappa_1$ is as in \nameref{asuA}.
         \item[{\crtcrossreflabel{H3}[H3]}] There exists $c= c(a)$ with $\sup_{a\in \Ieps} c(a)< \infty$, $\chi>0$ such that 
          \begin{equation*}
              \E \, \exp(\alpha\upsilon |X_a(t,x)|^2) \leq c(a) \exp(\alpha\upsilon |x|^2 e^{-\chi t}).
          \end{equation*}
        %
            %
        %
    \end{itemize}
    Intuitively it is plausible that \ref{H2}, combined with a bound in $L^2(0,t; \cV)$ of $X_{a}$, implies H\"{o}lder continuity of $X_a$, and consequently of $\cP_t^a\psi$, for $\psi$ regular enough. %
    Less clear is the requirement of \ref{H3}. We will see in \autoref{thm:SPDEHolder} below that this bound implies condition \ref{F3}, thanks to the following lemma:
    \begin{lemma}
    \label{lemma:intexpmu_a}
       Let $X_a(t,x)$ be the solution of \eqref{eq:ch5generalsde} and suppose \nameref{asuA} holds. Then \nameref{H3} implies 
          \begin{equation}
          \label{eq:lemmaresult}
             \int \exp(\alpha\upsilon |x|^2)\, \mu_a(dx) < c(a).
          \end{equation}
    \end{lemma}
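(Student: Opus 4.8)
The plan is to recognise that condition~\ref{H3} is, up to an elementary convexity estimate, precisely the statement that $\Phi(x):=\exp(\alpha\upsilon|x|^2)$ is a Lyapunov function for the semigroup $\cP_t^a$ (with $\upsilon=\kappa_1/\kappa_2$ as in~\eqref{eq:QGupsilonalpha0}), and then to reproduce essentially verbatim the invariance computation that produced the bound~\eqref{eq:ch4Vintegrable} for $V$, now applied to $\Phi$ in place of $V$.

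\emph{Step 1: from \ref{H3} to a drift inequality.} Since $s\mapsto e^{s}$ is convex and $e^{-\chi t}\in(0,1]$ for $t\ge 0$, the chord inequality between the points $0$ and $\alpha\upsilon|x|^2$ gives
\begin{equation*}
\exp\!\big(\alpha\upsilon\, e^{-\chi t}|x|^2\big)\le e^{-\chi t}\,\Phi(x)+\big(1-e^{-\chi t}\big)\fa x\in\cH,\ t\ge 0.
\end{equation*}
Inserting this into~\ref{H3} yields, for every $t\ge 0$ and $x\in\cH$,
\begin{equation*}
\cP_t^a\Phi(x)=\E\exp\!\big(\alpha\upsilon|X_a(t,x)|^2\big)\le c(a)\,e^{-\chi t}\,\Phi(x)+c(a)\big(1-e^{-\chi t}\big),
\end{equation*}
which is finite for each $x$ (so all expressions below are well defined) and is exactly the Lyapunov bound~\eqref{eq:ch4lyapunovcondition} of \autoref{def:LyapunovfctHMS} with constants $C=c(a)$, $\gamma=\chi$, $K=c(a)$.

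\emph{Step 2: a priori integrability.} By \cite[Lemma~4.1]{butkovsky2014} --- the same result already invoked above to integrate $V$ --- a Lyapunov function is integrable against any invariant measure of its semigroup; hence $\langle\Phi,\mu_a\rangle=\int\exp(\alpha\upsilon|x|^2)\,\mu_a(dx)<\infty$. I expect this to be the only genuinely delicate point: without it the invariance identity used in Step~3 would merely assert $\infty\le\infty$. (That lemma supplies finiteness through a truncation and monotone-convergence argument that one could also carry out directly for $\Phi$, using that $\cP_t^a(\Phi\wedge M)\le\cP_t^a\Phi$ is controlled by Step~1, and then letting $M\to\infty$.)

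\emph{Step 3: closing the loop.} With finiteness secured, integrate the drift inequality of Step~1 against $\mu_a$ and use its invariance: for any $t$ large enough that $c(a)e^{-\chi t}<1$,
\begin{equation*}
\langle\Phi,\mu_a\rangle=\langle\cP_t^a\Phi,\mu_a\rangle\le c(a)\,e^{-\chi t}\,\langle\Phi,\mu_a\rangle+c(a)\big(1-e^{-\chi t}\big),
\end{equation*}
so that $\displaystyle\langle\Phi,\mu_a\rangle\le\frac{c(a)\big(1-e^{-\chi t}\big)}{1-c(a)e^{-\chi t}}$, exactly as in the derivation of~\eqref{eq:ch4Vintegrable}. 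Since $\sup_{a\in\Ieps}c(a)<\infty$ by~\ref{H3}, one may fix such a $t$ uniformly in $a$, which gives~\eqref{eq:lemmaresult} with a right-hand side controlled by $c(a)$ (absorbing the bounded factor $(1-e^{-\chi t})/(1-c(a)e^{-\chi t})$ into the constant if one insists on the stated form). In short, the only real work is the convexity trick of Step~1 and the application of \cite[Lemma~4.1]{butkovsky2014} in Step~2; Step~3 is the one-line computation copied from~\eqref{eq:ch4Vintegrable}.
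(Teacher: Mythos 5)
Your proof is correct but follows a genuinely different route from the paper's. The paper truncates: it sets $\varphi_n(x)=\chi_n(|x|^2)e^{\alpha\upsilon|x|^2}$ with smooth cut-offs, uses monotone convergence, writes $\langle\varphi_n,\mu_a\rangle$ in terms of $\langle\varphi_n,P^a_{s_n}(y,\cdot)\rangle$ plus an error controlled by $\|\varphi_n\|_{\td}\,W_{\td}(\mu_a,P^a_{s_n}(y,\cdot))$, invokes the exponential stability of \autoref{thm:expconvergence} (hence \nameref{asuA}), and then balances the growth $\|\varphi_n\|_{\td}\lesssim n^{1/2}e^{\eta n}$ against the decay $e^{-rs_n}$ by choosing $s_n\sim 2\eta n/r$. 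You instead observe, via the convexity (chord) inequality for the exponential, that \ref{H3} is literally a Lyapunov drift condition for $\Phi(x)=\exp(\alpha\upsilon|x|^2)$ with constants $C=K=c(a)$, $\gamma=\chi$, and then repeat the invariance computation that gave \eqref{eq:ch4Vintegrable}. This is shorter, avoids the delicate growth-versus-decay balancing, and --- notably --- does not use \nameref{asuA} or the Wasserstein stability at all, only \ref{H3} and the existence of $\mu_a$. The price is that the entire weight of the argument rests on \cite[Lemma~4.1]{butkovsky2014} in your Step~2: without the a priori finiteness of $\langle\Phi,\mu_a\rangle$, Step~3 reads $\infty\le\infty$. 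You flag this correctly, and since the paper invokes the very same lemma in the very same way to integrate $V$, the reliance is legitimate; still, the paper's truncation argument is precisely the self-contained substitute for that citation, which is presumably why the authors chose it. A final cosmetic point: like the paper's own proof (which concludes $\lim_n\langle\varphi_n,\mu_a\rangle\le c(a)$), your argument yields $\le c(a)$ (by letting $t\to\infty$ in the bound $c(a)(1-e^{-\chi t})/(1-c(a)e^{-\chi t})$) rather than the strict inequality displayed in the lemma; this discrepancy is inherited from the statement, not introduced by you, and is harmless since only $\sup_{a\in\Ieps}\langle\Phi,\mu_a\rangle<\infty$ is used downstream.
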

    
    %
    We postpone the proof of this technical lemma to the end of this section and now see how to apply it, together with \nameref{asuH} to show \nameref{F1}, \nameref{F2'} and \nameref{F3}. 
    \begin{theorem}
    \label{thm:SPDEHolder}
        Suppose \nameref{asuH} holds. Then for all $\alpha\in (0, \alpha_0)$, with $\alpha_0$ as in \eqref{eq:QGupsilonalpha0}, there exists $c = c(\varepsilon)>0$ such that 
        \begin{equation*}
             |\langle \varphi, \mu_{a_1} - \mu_{a_2} \rangle | \leq c\|\varphi\|_{\td} |a_1 - a_2|^\alpha \fa \varphi \in \Cd
        \end{equation*}
        for all $a_1, a_2\in \Ieps$, i.e.\ $\mu_a$ is locally $\alpha$-H\"{o}lder continuous with respect to the parameter $a$. 
     \end{theorem}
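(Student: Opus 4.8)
The plan is to verify the three hypotheses \ref{F1}, \ref{F2'}, \ref{F3} of \autoref{thm:generalmuHolder} for the family $\cP_t^a$ on $(\Cd,\|\cdot\|_{\td})$ with $\td$ as in \eqref{ed:deftd}, taking as weight $U(x) := \sqrt{1+V(x)}\,\exp(\tfrac12\alpha\upsilon|x|^2)$ (any $U$ dominating both $\sqrt{1+V}$ and the exponential weight $\exp(\tfrac12\alpha\upsilon|\cdot|^2)$ that appears in $\theta_\alpha$ would serve equally well). Condition \ref{F1} is immediate: \ref{H1} provides \nameref{asuA} for every $a\in\Ieps$, so \autoref{thm:expconvergence} supplies a spectral gap for each $\cP_t^a$ in $\|\cdot\|_{\td}$, with rate $\rho<1$ independent of $a$ since the constants in \nameref{asuA} are. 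It therefore remains to check \ref{F3} and \ref{F2'}, the latter being the core of the argument.

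For \ref{F3} I would bound $\langle U,\mu_a\rangle \le \sqrt{\langle 1+V,\mu_a\rangle}\,\sqrt{\langle \exp(\alpha\upsilon|\cdot|^2),\mu_a\rangle}$ by Cauchy--Schwarz. The first factor is uniformly bounded in $a$ by the Lyapunov estimate \eqref{eq:ch4Vintegrable}, whose constants come from \nameref{asuA} and do not depend on $a$. The second factor is exactly what \autoref{lemma:intexpmu_a} controls: \ref{H3} yields $\int\exp(\alpha\upsilon|x|^2)\,\mu_a(dx) < c(a)$ with $\sup_{a\in\Ieps} c(a) < \infty$ by the same hypothesis. Hence $\sup_{a\in\Ieps}\langle U,\mu_a\rangle < \infty$; moreover $U\ge\sqrt{1+V}$, so the space $\CU$ is well defined and contains $\Cd$.

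For \ref{F2'} I would use the synchronous coupling: run $X_{a_1}$ and $X_{a_2}$ from the same $x$ driven by the same Wiener path, so that $W_{\td}(P_t^{a_1}(x,\cdot),P_t^{a_2}(x,\cdot)) \le \E\,\td(X_{a_1}(t),X_{a_2}(t))$. Inside $\td$, estimate $d_N(X_{a_1}(t),X_{a_2}(t)) \le N\theta_\alpha(X_{a_1}(t),X_{a_2}(t)) = N|X_{a_1}(t)-X_{a_2}(t)|^{2\alpha}\exp(\alpha\upsilon|X_{a_1}(t)|^2)$, and feed in \ref{H2} to extract the factor $|a_1-a_2|^{2\alpha\beta}$, leaving $\exp(\alpha\kappa_1\int_0^t\|X_{a_1}(s)\|^2\,ds)$. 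Because $\upsilon=\kappa_1/\kappa_2$, the a priori bound \ref{ref:A2} lets the energy integral and the endpoint norm cancel,
\[
\alpha\kappa_1\!\int_0^t\!\|X_{a_1}(s)\|^2\,ds + \alpha\upsilon|X_{a_1}(t)|^2 \le \alpha\upsilon\big(|x|^2 + \kappa_\varepsilon t + \Xi^{a_1}_\gamma\big).
\]
Taking expectations (the $|x|^2$ term is deterministic) and applying Cauchy--Schwarz to the remaining random factors leaves $\E\exp(\alpha\upsilon\Xi^{a_1}_\gamma)$ — finite because the sub-exponential tail in \ref{ref:A2} together with $\alpha<\alpha_0$ in \eqref{eq:QGupsilonalpha0} forces $\alpha\upsilon<2\gamma$ — and $\E\big(1+V(X_{a_1}(t))+V(X_{a_2}(t))\big)$, bounded by $C(1+V(x))$ via the Lyapunov estimate from \ref{ref:A4} (and \ref{H3} where $V$ must be compared to the exponential weight). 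Collecting the powers of $|a_1-a_2|$ and taking the outer square root gives $W_{\td}(P_t^{a_1}(x,\cdot),P_t^{a_2}(x,\cdot)) \le C(t)\,|a_1-a_2|^{\alpha\beta}\,U(x)$, which is \ref{F2'} (the exponent being $\alpha$ when $f$ is Lipschitz in $a$, i.e. $\beta=1$). With \ref{F1}, \ref{F2'}, \ref{F3} in hand, \autoref{thm:generalmuHolder} yields the assertion, modulo the proof of \autoref{lemma:intexpmu_a}, which is given below.

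The main obstacle I expect is the simultaneous control of the exponential moments in the \ref{F2'} step: the factors $\exp(\alpha\kappa_1\int\|X\|^2)$, $\exp(\alpha\upsilon|X(t)|^2)$, $\exp(\alpha\upsilon\Xi_\gamma)$ and the $V$-weight must all be integrable against the \emph{same} coupling and reassemble into a single weight $U$ that is uniformly $\mu_a$-integrable. This is precisely what dictates the restriction $\alpha<\alpha_0$ and the particular $\sqrt{d_N\,(1+V(x)+V(y))}$ shape of $\td$: the condition $\kappa_0\kappa_2>\kappa_1\kappa_\varepsilon$ and the tail estimate in \ref{ref:A2}, matched against $\upsilon=\kappa_1/\kappa_2$, are exactly what make the cancellation in the exponent above legitimate. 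By contrast, feeding in the Hölder-in-$a$ input \ref{H2} and the spectral gap \ref{F1} is routine.
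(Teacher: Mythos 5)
Your proposal is correct and follows essentially the same route as the paper: verify \ref{F1} via \ref{H1} and \autoref{thm:expconvergence}, establish \ref{F2'} by the synchronous coupling, the bound $d_N\le N\theta_\alpha$, \ref{H2}, and the cancellation $\alpha\upsilon|X_{a_1}(t)|^2+\alpha\kappa_1\int_0^t\|X_{a_1}\|^2\,ds\le\alpha\upsilon(|x|^2+\kappa_\varepsilon t+\Xi^{a_1}_\gamma)$ from \ref{ref:A2} with $\upsilon=\kappa_1/\kappa_2$, and obtain \ref{F3} from the Lyapunov estimate together with \autoref{lemma:intexpmu_a}. Your remark that the exponent is really $\alpha\beta$ (reducing to $\alpha$ only for $\beta=1$) is accurate and consistent with what the paper's own proof actually delivers.
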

    
    \begin{proof}
   We want to apply \autoref{thm:generalmuHolder}. Thanks to \nameref{asuH} and \autoref{thm:expconvergence} we know $\cP_t^{a_1}$ exhibits a spectral gap and hence \ref{F1} holds.
   
     Next we show \ref{F2'}. By definition of the Wasserstein semidistance $W_{\td}$ we have 
        \begin{equation}
        \label{eq:ch5W<Ed}
            W_{\td}(P^{a_1}_t(x, \cdot), P^{a_2}_t(x, \cdot)) \leq \E \, \td(X_{a_1}(t), X_{a_2}(t))
        \end{equation} 
        and, thanks to the definition of $\td$ and the Cauchy-Schwartz inequality,
        \begin{equation}
            \E \, \td(X_{a_1}(t), X_{a_2}(t))
            \leq \sqrt{\E d_N(X_{a_1}(t), X_{a_2}(t))} \sqrt{1 + \E\, V(X_{a_1}(t)) + \E\, V(X_{a_2}(t)) }\label{eq:ch5general_EtildedafterCS}.
        \end{equation}
    Let us first bound $\E\, d_N(X_{a_1}(t), X_{a_2}(t))$. By definition of $d_N$ \eqref{eq:ch4defdN} and $\theta_\alpha$ \eqref{eq:ch4 theta(u,v)}
        \begin{equation*}
            d_N(X_{a_1}(t), X_{a_2}(t)) \leq  N|X_{a_1}(t) - X_{a_2}(t)|^{2\alpha} e^{\alpha\upsilon |X_{a_1}(t)|^2}
        \end{equation*}
    and, thanks to \ref{H2},
     \begin{equation}
     \label{eq:boundholderdN}
            d_N(X_{a_1}(t), X_{a_2}(t)) \leq  N C^\alpha |a_1- a_2|^{2\alpha \beta} \exp(\alpha\upsilon |X_{a_1}(t)|^2 + \alpha\kappa_1 \int_0^t \| X_{a_1}(s)\|^2 \, ds ).
        \end{equation}
    From \nameref{asuA}, part \ref{ref:A2} gives
        \begin{equation*}
            |X_{a_1}(t)|^2 + \kappa_2\int_0^t  \| X_{a_1}(s) \|^2 \, ds \leq  |x|^2 + \kappa_{\varepsilon} t + \Xi^{a_1}_\gamma, \quad t\geq 0.
        \end{equation*}
        Then, since $\upsilon = \kappa_1 /\kappa_2$, from \eqref{eq:boundholderdN} it follows 
       \begin{equation*}
          \E\, d_N(X_{a_1}(t), X_{a_2}(t)) \leq N C^\alpha |a_1 -a_2|^{2\alpha \beta} \exp(\alpha \upsilon |x|^2 + \alpha \upsilon \kappa_{\varepsilon} t) \E\, \exp(\alpha \upsilon \Xi^{a_1}_\gamma).
        \end{equation*}
        Thanks to the bound \eqref{eq:A2 martingale estimate} for $\Xi_\gamma^{a_1}$ it can be shown that
        \begin{equation*}
            \E\, \exp(\alpha \upsilon \Xi^{a_1}_\gamma)  \leq \frac{2\gamma}{2\gamma - \alpha \upsilon} =: C_\Xi
        \end{equation*}
        which is well defined since we consider $\alpha\in (0, \alpha_0)$ where $\alpha_0$ is as in \eqref{eq:QGupsilonalpha0}, i.e.
        \[ 
            \alpha_0 = \frac{1}{2}\wedge \frac{2\gamma}{\upsilon + 2\gamma }< \frac{2\gamma}{\upsilon}.
        \]
%
        Looking back at \eqref{eq:ch5general_EtildedafterCS} we have found that 
        \begin{equation*}
             \E \, \td(X_{a_1}(t), X_{a_2}(t)) \leq C_N |a_1 - a_2|^{\alpha \beta} e^{ \alpha\upsilon(|x|^2+ \kappa_3 t)/2 } \sqrt{1 + \E V(X_{a_1}(t)) + \E V(X_{a_2}(t))}.
        \end{equation*}
        Next, thanks to \ref{H1}, namely \ref{ref:A4}, for all $a\in \Ieps$, it holds 
        \begin{equation*}
           \E V(X_{a_1}(t)) + \E V(X_{a_2}(t)) \leq 2e^{-\gamma_\varepsilon t}V(x) + \tfrac{2K_\varepsilon}{\gamma_\varepsilon} \leq 2 V(x)+ \tfrac{2K_\varepsilon}{\gamma_\varepsilon}.
        \end{equation*}
        Then by \eqref{eq:ch5W<Ed} we have showed 
        \begin{equation*}
            W_{\td}(P_t^{a_1}(x, \cdot), P_t^{a_2}(x, \cdot))\leq \E \, \td(X_{a_1}(t), X_{a_2}(t)) \leq |a_1 - a_2|^{\alpha \beta}C(t)U(x)
        \end{equation*}
        with
        \begin{equation*}
            C(t) = C_N e^{ \alpha \upsilon \kappa_\varepsilon t/2 }\quand U(x) = e^{ \alpha \upsilon |x|^2/2}\left( 1 + \tfrac{2K_\varepsilon}{\gamma_\varepsilon} + 2V(x)\right)^{1/2}.
        \end{equation*}
        Last, we have to show that \ref{F3} holds, namely that
        \begin{equation*}
           \langle U, \mu_a \rangle = \int e^{ \alpha \upsilon |x|^2/2}\left( 1 + \tfrac{2K_\varepsilon}{\gamma_\varepsilon} + 2V(x)\right)^{1/2} \, \mu_{a}(dx) < \infty
        \end{equation*}
        uniformly in $a\in \Ieps$. 
        By Cauchy-Schwartz inequality we have 
        \begin{equation*}
           \langle U, \mu_a \rangle \leq 
          \left(  1 + \tfrac{2K_\varepsilon}{\gamma_\varepsilon} + 2\langle V, \mu_{a}\rangle \right) \int e^{ \alpha \upsilon |x|^2}  \mu_{a}(dx) .
        \end{equation*}
        The Lyapunov function is integrable against the invariant measure and, by \eqref{eq:ch4Vintegrable} and \eqref{eq:ourLyapunov} we have for all $a\in \Ieps$
        \begin{equation*}
            \langle V, \mu_a\rangle \leq \frac{K_\varepsilon}{\gamma_\varepsilon(1 - e^{-\gamma_\varepsilon t})}.
        \end{equation*}
        Finally, thanks to \ref{H3} and \autoref{lemma:intexpmu_a}, it follows that 
        \[ 
            \sup_{a\in \Ieps} \int e^{ \alpha \upsilon |x|^2}  \mu_{a}(dx) < \infty
        \]
        and \ref{F3} holds.
        
    \end{proof}
     We close the section by proving \autoref{lemma:intexpmu_a}. 
     \begin{proof}[Proof of \autoref{lemma:intexpmu_a}]
        Define the function \( \varphi(x) := \exp(\eta|x|^2) \) with $\eta := \alpha \upsilon$, and introduce an increasing sequence of cut-off functions $\chi_n\in [0, 1]$, i.e.~smooth functions supported on $[-n , n ]$ with $\chi_n = 1$ over $[-n+1, n -1]$ and $\chi_n \to 1 $ for $n \to \infty$. Then define the series of functions 
        \[
            \varphi_{n} (x) := \chi_n(|x|^2)\varphi(x), \quad n \in \N
        \]
        so that $\lim_{n\to \infty} \varphi_n = \varphi$.
        By the monotone convergence theorem we have 
        \begin{equation}
            \label{eq:monotoneconvergence}
           \lim_{n\to \infty} \, \langle \varphi_n , \mu_a \rangle = \langle \lim_{n\to \infty}\, \varphi_n , \mu_a \rangle = \langle \varphi , \mu_a \rangle,
        \end{equation}
        therefore we want to show that 
        \begin{equation*}
           \lim_{n\to \infty} \, \langle \varphi_n , \mu_a \rangle < \infty
        \end{equation*}
        uniformly in $a$. 
        We can write for any $y\in \cH$ and any $s_n>0$ 
        \begin{align*}
          \langle \varphi_n, \mu_a\rangle &=  \langle \varphi_n, \mu_a\rangle - \langle \varphi_{ n  ,t}, P^a_{s_n} (y, \cdot)\rangle + \langle \varphi_n, P^a_{ s_n } (y, \cdot)\rangle.
        \end{align*}
        We will show at the end of the proof that $\varphi_n$ are such that for some $C_1>0$
        \begin{equation}
        \label{eq:boundvarphi_nt}
             \| \varphi_n \|_{\td} \leq C_1 n^{1/2}\exp(\eta n ).
        \end{equation}
        %
                %
        Then from relation \eqref{eq: sup <f, mu> leq W_d}, since $\| \varphi_n \|_{\td}< \infty$ we have the following bound
        \begin{equation}
        \label{eq:NSholderproof4}
             \langle \varphi_n, \mu_a\rangle \leq \| \varphi_n \|_{\td}\, W_{\td}(\mu_a, P^a_{s_n } (y, \cdot)) + \langle \varphi_n, P^a_{ s_n } (y, \cdot)\rangle.
        \end{equation}
        Thanks to \nameref{asuA}, one has that \autoref{thm:expconvergence} holds and in particular there exists $r, C, t_0>0$ such that  
        \begin{equation*}
            W_{\td}(\mu_a, P^a_{s_n}(y, \cdot)) \leq C(1 + V(y))e^{-rs_n} \fa s_n \geq t_0.
        \end{equation*}
       Further, using \eqref{eq:boundvarphi_nt} and adjusting appropriately the constant $C$, from \eqref{eq:NSholderproof4} we have 
        \begin{equation}
             \label{eq:proofU_2}
             \langle \varphi_n, \mu_a\rangle \leq C n^{1/2} \exp(\eta n - r s_n )(1+ V(y)) + \langle \varphi_n, P^a_{s_n} (y, \cdot)\rangle.
        \end{equation}
       The first term converges to zero if we choose $s_n := 2\eta n/r \vee t_0$.
        For the second term on the right hand side of \eqref{eq:proofU_2}, the definition of $\varphi_n$ and \ref{H3} give
        \begin{align*}
            \langle \varphi_n, P^a_{s_n} (y, \cdot)\rangle
            &\leq \E\, \exp(\eta|X_a(s_n,y)|^2) \leq c(a)\exp(\eta e^{- \chi s_n} |y|^2).
        \end{align*}
        which converges to $c(a)$.
        It follows immediately that 
        \begin{equation}
        \label{eq:proofU_3}
            \lim_{n \to \infty} \langle \varphi_n, \mu_a\rangle \leq c(a)
        \end{equation} where, recall $c(a)$ is assumed to be uniformly bounded in $\Ieps$.
        In summary, given \eqref{eq:monotoneconvergence} and \eqref{eq:proofU_3}, one has the desired result \eqref{eq:lemmaresult}.
     
        We conclude the proof by showing that the estimate \eqref{eq:boundvarphi_nt} for $\| \varphi_n\|_{\td}$ holds.
        By the mean value theorem, given $z\in [x,y]$,
         \begin{align*}
            |\varphi_n(x) - \varphi_n(y)| &\leq \| D_x \varphi_n(z)\| |x-y| 
        \end{align*}
        and so if $\| D_x \varphi_n(z)\|$ is bounded uniformly in $z$, we have the following bound
        \begin{equation}
        \label{eq:varphint}
             \| \varphi_n \|_{\td} \leq \left(\sup_{z\in H}\| D_x \varphi_n(z)\|\right)\left( \sup_{x\neq y} \frac{|x-y|}{\td(x,y)}\right).
        \end{equation}
        Focusing on the derivative of the functions $\varphi_n$ with respect to $x$ it is easy to see that
        \begin{equation*}
           \| D_x \varphi_n(z)\| \leq 2|z| \varphi(z)  \left|\chi_n'(|z|^2) + \chi_n(|z|^2) \eta \right|.
        \end{equation*}
        The smooth cut--off function 
        \[
        \chi_n(x)= \left\{ \begin{array}{ll} 1 & \mbox{for } x < n-1 \\
        \chi_1(x-n+1) & \mbox{for } n-1 \leq x \leq n \\
        0 & \mbox{for } x > n\end{array} \right.
        \]
       hence its derivative $\chi_n'(z)$ is well defined for any choice of $n$ and has support $[n-1, n]$ and is uniformly bounded in $n$. Therefore
        \begin{align*}
            \sup_{z\in \cH} \| D_x \varphi_n(z)\| &\leq \sup_{|z|^2 \leq n} 2|z| \exp(\eta |z|^2 ) |  \chi_n'(|z|^2) + \eta \chi_n(|z|^2) | \\
            &\leq  2n^{1/2} \exp(\eta n ) \left( \sup_{z\in \cH} | \chi_n'(|z|^2)|  + \eta \right) . 
        \end{align*}
        %
        %
        Therefore we showed that there exists a positive constant 
        \[ 
            C_1 := 2 \left( \sup_{t \in [0, 1]} | \chi_1'(t)|+ \eta\right)
        \]
        such that the derivative of $\varphi_n$ satisfies
        \begin{equation*}
             \sup_{z\in \cH} \| D_x \varphi_n(z)\| \leq C_1 n^{1/2} \exp(\eta n).
        \end{equation*}
        Finally from \eqref{eq:varphint} we see
        \begin{equation*}
             \| \varphi_n \|_{\td} \leq C_1 n^{1/2} \exp(\eta n )  \sup_{x\neq y} \frac{|x-y|}{\td(x,y)}.
        \end{equation*}
        By the definition of the semimetric $\td$ in \eqref{eq:ch4defdN} we have
        \begin{equation*}
            \sup_{x\neq y} \frac{|x-y|}{\td(x,y)} < \infty
        \end{equation*}
        so that, relabelling $C_1$ appropriately, the desired result holds.
    \end{proof}
    
%
\section{Stochastic Navier--Stokes equations}
\label{sec:response_NS}
Let $\cD= [0,L]\times [0,L]\subset \R^2$ with $L>0$  and consider the two--dimensional (2D) stochastic Navier--Stokes equation on $\cD$ 
\begin{equation}
    \label{eq:SNS_strong}
    \begin{split}
         &d u + (u \cdot \nabla) u \; dt =  \left(\nu \Delta u  - \nabla p + f(a)\right) \, dt + dW  \\
       & \operatorname{div} u = 0 \\
        &u(0,x) = u_0 .
    \end{split}
\end{equation}
 Here $u = u(t,x)$ is the velocity of an incompressible fluid, $\nu$ is the viscosity,  $p(t,x)$ the pressure of the fluid, $f(a)$ is a time-independent deterministic forcing depending on a parameter $a\in \R$, and $W$ is a $Q$-Wiener process on a probability space $(\Omega, \mathcal{F}, \mathbb{P})$ with the covariance operator $Q$. We consider \eqref{eq:SNS_strong} with periodic boundary conditions and we assume that the average flow vanishes, namely 
\[ \int_\cD u(t,x) \, dx = 0  \fa t \geq 0.\]

In this section, after setting the necessary notations, we discuss a spectral gap results available in the literature for this model and apply the methodology developed in \autoref{subsec:linear} and \autoref{subsec:Holder} to establish linear and fractional response. 

\subsection{Mathematical set up}
Let $L^2(\cD)$ and $H^k(\cD)$, $k \in \N$, be the Sobolev spaces of $L$-periodic functions such that 
\[ \int \varphi(x) \, dx = 0\]
and $H^{-k}$, $k\in \N$ the dual space of $H^k$. 
%
As the velocity $u = u(t,x)$ is two--dimensional, it is natural to introduce the following product spaces
\begin{equation*}
    \bLtwo(\cD)= \left[ L^2(\cD)\right]^2 \quad \text{and} \quad \bH^k(\cD)= \left[ H^k(\cD)\right]^2 
\end{equation*}
and we consider the Hilbert spaces 
\begin{align*}
\begin{split}
       \cH &= \lbrace u\in \mathbf{L}^2(\cD) \, : \, \operatorname{div} u = 0 \, \text{ in } \, \cD \rbrace \\
        \cV &= \lbrace u\in \mathbf{H}^1(\cD) \, : \, \operatorname{div} u = 0 \, \text{ in } \, \cD \rbrace
\end{split}
\end{align*}
with norms $|\cdot|$ and $\| \cdot\| $. Elements of $\cH$ and $\cV$ then satisfy the divergence free condition and the boundary conditions by definition.

Let $A$ denote the Stokes operator which we consider as an operator on $\cH$; it can be shown that $\cV = D(A^{1/2})$ and $\| u \| = |A^{1/2}u|$. 
 Since we consider periodic boundary conditions, we have that $A u = - \Delta u $ for all $u\in D(A)$. Moreover, the operator $A$ is a self-adjoint positive operator on $\cH$, and we denote by $\lbrace \lambda_k \rbrace$ its eigenvalues and by $\lbrace e_k \rbrace$ a corresponding complete orthonormal system of eigenvectors. 

Denote by $\cV'$ the dual of $\cV$, then we have 
\begin{equation*}
    D(A) \subset \cV \subset \cH \subset \cV',
\end{equation*}
where the inclusions are continuous and each space is dense in the following one. 
The covariance operator $Q$ is a nonnegative, symmetric and trace class operator in $L^2$. We assume that $Q$ and the Stokes operator $A$ commute.

Crucial part of the study of the Navier--Stokes equations is the treatment of the advection term
\begin{equation*}
     \langle B(u,v), w \rangle = b (u,v,w):= \int_\cD w(x) \cdot (u(x) \cdot \nabla) v(x) \, dx.
\end{equation*}
 It is easily seen that the trilinear form $b$ satisfies 
    \begin{equation*}
       b(u,v,v) = 0 \fa u\in \cH, \; v\in \cV
    \end{equation*}
and thanks to Ladyzhenskaya's inequalities there exists a positive constant $k_0$ so that
    \begin{equation}
    \label{eq:NS_lady}
        |b(u, v, u)|\leq k_0|u| \| u\| \|v\| \fa u,\,  v \in \cV.
    \end{equation}
By classical arguments (see e.g.~\cite{temam2001navier,Robinson}), the following weak formulation of \eqref{eq:SNS_strong} is obtained
\begin{equation}
\label{eq:SNS_weak}
    du + \left(\nu Au + B(u,u)\right)\, dt = f(a) \, dt + dW,\quad u(0,x) = u_0(x).
\end{equation}
%
%
Given $f(a) \in \cV'$ and $u_0\in \cH$, for any $T>0$ there exists a unique solution $u = u(t, \omega ; u_0, a)$ in $C([0, T]; \cH) \cap L^2([0,T],\cV)$ for almost all $\omega$ such that \eqref{eq:SNS_weak} holds in $\cV'$ and the associated Markov semigroup $\cP_t^a$ on $\cH$ is Feller (see e.g.~\cite{Flandoli94,Ferrario2001}).
%
\subsection{Spectral gap}
\label{subsec:NSspectral}
    From the literature it is known that this model exhibits exponential convergence of transition probabilities, and \cite{KulikSch18, butkovsky2020} in particular showed exponential convergence with respect to the Lipschitz seminorm $\| \cdot \|_{\td}$ with Lyapunov function $V(x) = |x|^2$, namely 
    \begin{equation*}
        \td(x,y) ^2 =\left( N |x - y|^{2\alpha}e^{\alpha \upsilon|x|^2}\wedge  N |x - y|^{2\alpha}e^{\alpha \upsilon|y|^2} \wedge 1\right)\left( 1 + |x|^2 + |y|^2\right).
    \end{equation*}
    More specifically, for a fixed parameter $a$ \nameref{asuA} has been shown to hold in \cite{butkovsky2020} (and in \cite[Section~4.2]{thesis} using the same framework and notation as in the present paper). There the chosen controlled equation is
    \begin{equation*}
        d\tu + \left(\nu A\tu + B(\tu,\tu)\right)\, dt = \left(f(a) + \tfrac{\nu \lambda_n}{2}\Pi_n \left(u - \tu\right)\right)\, dt + dW, \quad \tu(0,x) = \tu_0(x)\neq u_0(x)
    \end{equation*}
    where $\Pi_n$ is the projection onto the first $n$ eigenfunctions of the Stokes operator for an appropriate $n\in \N$ to be taken large enough. As the dependence on the forcing $f(a)$ of the estimates is explicit there, one can easily create a uniform version of the estimates, hence we do not repeat the full argument here.
        %
    %
    Table~\ref{tab:parametertranslate} provides a summary of the parameters mentioned in \nameref{asuA} together with expressions in terms of parameters appearing the stochastic Navier--Stokes equation which will be used in the results of this section.
    %
    %
    \begin{table}[ht]
    \begin{center}
    \begin{tabular}{ll}
    \nameref{asuA} & Navier--Stokes \\[1ex] 
    $\kappa_0$ &  $\nu\lambda_n$\\ 
    $\kappa_1$ & $k_0^2/\nu$ \\ 
    $\kappa_2$ & $\nu- \gamma \lambda_1^{-1}\Tr Q $  \\ 
    $\kappa_\varepsilon$ & $\Tr Q + \sup_{|a - a_0|< \varepsilon}\| f(a)\|_{-1}^2/\nu$  \\ 
      $\gamma_{\varepsilon}$ & $\nu \lambda_1$  \\ 
          $K_\varepsilon$ & $\Tr Q +\sup_{|a - a_0|< \varepsilon} \| f(a)\|_{-1}^2/\nu $  
    \end{tabular} 
    \end{center}
    \caption{\label{tab:parametertranslate}
    Parameters from \nameref{asuA} and their values in terms of parameters from the Navier--Stokes equations.
    Here $\lambda_n$ is the $n$-th eigenvalue of $-\Delta$, $k_0$ appears in the estimates of the trilinear form~\eqref{eq:NS_lady}, $\nu$ is the viscosity, $\gamma$ is chosen in such a way that $\kappa_2$ is positive, $Q$ is the covariance operator of the noise and $f$ is the deterministic external forcing.
}
    \end{table}
    %
    %
    \subsection{Linear and fractional response}
    \label{subsec:NSHolder}
   Given the general framework obtained in \autoref{thm:responseU}, we have the following result on linear response for the 2D stochastic Navier--Stokes equations:
    \begin{theorem}
    \label{thm:NSresponse}
        Set $\Ieps = (a_0 - \varepsilon, a_0+ \varepsilon)\subset\R$, $a_0\in \R$, $\varepsilon>0$. Consider the Navier--Stokes equation \eqref{eq:SNS_weak} with deterministic forcing the map $a\mapsto f(a)$ continuously differentiable as a map from $\Ieps$ into $\range Q$ with $|D_a Q^{-1/2}f(a)|$ uniformly bounded in $a$. Let $\mu_a$ be the associated unique invariant measure.
        Then the map $a \mapsto\langle \varphi, \mu_a\rangle $ is differentiable at $a = a_0 $ for every $\varphi\in \Cd$ and the following identity holds
        \begin{equation*}
             \left. \dv{}{a} \langle \varphi, \mu_a \rangle\right|_{a = a_0} = \langle D_a \cP_t^{a_0} (1 - \cP_t^{a_0})^{-1}(\varphi - \langle \varphi, \mu_{a_0} \rangle), \mu_{a_0}\rangle. 
        \end{equation*}
    \end{theorem}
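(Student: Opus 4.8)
The plan is to obtain \autoref{thm:NSresponse} as a direct application of the abstract SPDE result \autoref{thm:responseU}. The weak formulation \eqref{eq:SNS_weak} is of the form \eqref{eq:ch5generalsde} with $\cH,\cV$ as in the mathematical set-up, $A$ replaced by the scaled Stokes operator $\nu A$, nonlinearity $F(u) = -B(u,u)\colon\cV\to\cV'$, additive $Q$-Wiener noise with $Q$ trace class and commuting with $A$, and forcing $f(a)\in\cV'$; existence, uniqueness, the Feller property of $\{\cP^a_t\}$ on $\cH$ and the existence of $\mu_a$ hold by the classical theory cited in the set-up. The forcing hypothesis of the theorem---$a\mapsto f(a)$ continuously differentiable into $\range Q$ with $|D_a Q^{-1/2}f(a)|$ uniformly bounded on $\Ieps$---is exactly the hypothesis required by \autoref{thm:responseU} and \autoref{thm:diffsemigroup}, so nothing has to be done there; in particular continuity of $f$ into $\range Q\subset\cH\subset\cV'$ gives $\sup_{a\in\Ieps}\|f(a)\|_{-1}<\infty$, which is what makes the a priori estimates uniform.

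The substantive point is to verify \nameref{asuA} on $\Ieps$ with the data recorded in Table~\ref{tab:parametertranslate}. I would take $\cH_n=\Span\{e_1,\dots,e_n\}$ spanned by the first $n$ Stokes eigenfunctions, $\Pi_n$ the associated orthogonal projection, and control $G(u,\tu)=\tfrac{\nu\lambda_n}{2}\Pi_n(u-\tu)$, so that the controlled equation is the one displayed in \autoref{subsec:NSspectral}; since $Q$ commutes with $A$ it commutes with $\Pi_n$, and $Q_n=\Pi_n Q$ is invertible on $\cH_n$ under the standing non-degeneracy of the noise on the low modes, as needed for the Girsanov step behind \autoref{thm:expconvergence}. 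Then \ref{ref:A1} follows from an energy estimate for $u-\tu$ using $b(v,w,w)=0$ and the Ladyzhenskaya bound \eqref{eq:NS_lady}, the control providing damping $\nu\lambda_n$ on $\Pi_n(u-\tu)$ and viscosity providing it on $(1-\Pi_n)(u-\tu)$, giving $\kappa_0=\nu\lambda_n$, $\kappa_1=k_0^2/\nu$, both independent of $a$. Condition \ref{ref:A2} is the standard energy balance for $u$: Itô's formula, $\langle B(u,u),u\rangle=0$ and Young's inequality give $|u(t)|^2+\kappa_2\int_0^t\|u(s)\|^2\,ds\le|x|^2+\kappa_\varepsilon t+\Xi^a_\gamma$ with $\kappa_2=\nu-\gamma\lambda_1^{-1}\Tr Q$, $\kappa_\varepsilon=\Tr Q+\sup_{a\in\Ieps}\|f(a)\|_{-1}^2/\nu$, and $\Xi^a_\gamma$ the shifted rescaled stochastic integral, whose exponential tail \eqref{eq:A2 martingale estimate} comes from the exponential martingale inequality applied to $\int_0^t\langle u,dW\rangle$ using $\Tr Q<\infty$. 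Conditions \ref{ref:A3}---$|G(u,\tu)|^2=\tfrac{\nu^2\lambda_n^2}{4}|\Pi_n(u-\tu)|^2\le c|u-\tu|^2$---and \ref{ref:A4} with $V(u)=|u|^2$ (the same computation in expectation, yielding $\gamma_\varepsilon=\nu\lambda_1$, $K_\varepsilon=\Tr Q+\sup\|f(a)\|_{-1}^2/\nu$, with $|u|^2$ trivially bounded on $\{V\le M\}$) are immediate. Throughout, the forcing enters only through $\|f(a)\|_{-1}$, which is why the constants are uniform over $\Ieps$; since these estimates are essentially already in \cite{butkovsky2020,thesis}, I would cite them and only highlight the uniform version rather than reproduce the computations.

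The one genuine choice, and the place I expect the only real obstacle, is the structural inequality $\kappa_0\kappa_2>\kappa_1\kappa_\varepsilon$ needed in \ref{ref:A2}. Written out, one needs $\nu\lambda_n\,(\nu-\gamma\lambda_1^{-1}\Tr Q)>(k_0^2/\nu)\,(\Tr Q+\sup_{a\in\Ieps}\|f(a)\|_{-1}^2/\nu)$: first fix $\gamma>0$ small enough that $\kappa_2>0$ (possible since $\nu$ is fixed), then, since the right-hand side is a finite constant independent of $n$ while $\lambda_n\to\infty$, choose $n$ large enough that the left-hand side exceeds it. With this $n$, all of \nameref{asuA} holds on $\Ieps$ with Lyapunov function $V(u)=|u|^2$.

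Finally, with \nameref{asuA} in hand, \autoref{thm:expconvergence} gives the spectral gap \ref{L1}; \autoref{thm:diffsemigroup}, using the forcing hypothesis and $U=\sqrt{1+V}$, gives the differentiability of $a\mapsto\cP_t^a$ into $\CU$ together with the derivative bound \eqref{eq:ch5_bound_derivativeII}, i.e.\ \ref{L2}; and the Lyapunov estimate \eqref{eq:ourLyapunov} with \eqref{eq:ch4Vintegrable} and Cauchy--Schwarz gives $\sup_{a\in\Ieps}\langle\sqrt{1+V},\mu_a\rangle<\infty$, i.e.\ \ref{L3}. Hence \nameref{asuL} holds, and \autoref{thm:general_response_HM}---equivalently \autoref{thm:responseU}---yields both the differentiability of $a\mapsto\langle\varphi,\mu_a\rangle$ at $a=a_0$ for every $\varphi\in\Cd$ and the stated response formula, completing the proof.
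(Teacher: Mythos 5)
Your proposal is correct and follows exactly the paper's route: the paper's own proof of Theorem~\ref{thm:NSresponse} is a two-line appeal to the fact that \nameref{asuA} holds with $V(x)=|x|^2$ (deferring the verification to \cite{butkovsky2020} and \cite[Section~4.2]{thesis}, as summarised in Section~\ref{subsec:NSspectral} and Table~\ref{tab:parametertranslate}) followed by an application of Theorem~\ref{thm:responseU}. You simply spell out the deferred verification of \nameref{asuA} (controlled equation, energy estimates, choice of $n$ so that $\kappa_0\kappa_2>\kappa_1\kappa_\varepsilon$, uniformity in $a$ via $\sup_{a\in\Ieps}\|f(a)\|_{-1}$), all of which matches the cited sources and the table.
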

    \begin{proof}
    As discussed in \autoref{subsec:NSspectral}, \nameref{asuA} holds for the Navier-Stokes equations \eqref{eq:SNS_weak} with Lyapunov function $V(x) = |x|^2$. Then \autoref{thm:responseU} applies as the forcing $f(a)$ is assumed in the range $Q$.
    \end{proof}
    Next, we consider a deterministic forcing $f(a)$ being $\beta$-H\"{o}lder continuous in the parameter $a\in \R$ as a function into $\cV'$. By showing that conditions of \nameref{asuH} holds we will ensure the model admits fractional response.
    \begin{theorem}
    \label{thm:NSHolder}
         Let $\Ieps = (a_0- \varepsilon, a_0 + \varepsilon)$ be an arbitrary neighbourhood of $a_0\in \R$.
        Consider the Navier--Stokes equation \eqref{eq:SNS_weak} with 
        $f(a)$ locally $\beta$-H\"{o}lder continuous in $a$ as a function into $\cV'$, namely for every $\varepsilon>0$ there exists $C_f = C_f(a_0, \varepsilon)$ such that 
        \begin{equation}
        \label{eq:holder_f}
            \|f(a_1) - f(a_2)\|_{-1}\leq C_f|a_1 - a_2|^\beta \fa a_1, a_2 \in \Ieps.
        \end{equation}
        Then for all $\alpha \in (0, \alpha_0)$, with $\alpha_0$ as in \eqref{eq:QGupsilonalpha0}, for every $\varepsilon$ there exists $c = c(\varepsilon)$ such that for all $a_1, a_2 \in\Ieps$
        \begin{equation*}
            |\langle \varphi, \mu_{a_1} - \mu_{a_2}\rangle | \leq c\|\varphi\|_{\td}|a_1 - a_2|^{\alpha\beta} \fa \varphi\in \Cd.
        \end{equation*}
    \end{theorem}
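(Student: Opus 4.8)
The plan is to deduce the statement from \autoref{thm:SPDEHolder} by checking that \nameref{asuH} holds for the Navier--Stokes equation \eqref{eq:SNS_weak} with Lyapunov function $V(x)=|x|^2$. The exponent $\alpha\beta$ in the conclusion is then produced automatically: in the proof of \autoref{thm:SPDEHolder} the factor $|a_1-a_2|^{2\beta}$ coming from \ref{H2} is halved by the Cauchy--Schwarz step, so the output is $\alpha\beta$--H\"older continuity whenever $f$ is $\beta$--H\"older. Condition \ref{H1} is already available: as recalled in \autoref{subsec:NSspectral}, \nameref{asuA} holds for each $a\in\Ieps$ with $V(x)=|x|^2$, and since the dependence of all the estimates of \cite{butkovsky2020} on the forcing is explicit, the constants $\kappa_0,\dots,K_\varepsilon$ can be chosen uniformly over $a\in\Ieps$ using that $\sup_{a\in\Ieps}\|f(a)\|_{-1}<\infty$, which follows from \eqref{eq:holder_f}. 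The requirement in \ref{ref:A4} that $x\mapsto|x|^2$ be bounded on $\{V\le M\}$ is trivial here. So the work is to verify \ref{H2} and \ref{H3}.

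For \ref{H2} I would take $X_{a_1},X_{a_2}$ driven by the \emph{same} Wiener process and set $w:=X_{a_1}-X_{a_2}$; then $w(0)=0$, the noise cancels, and decomposing $B(X_{a_1},X_{a_1})-B(X_{a_2},X_{a_2})=B(w,X_{a_1})+B(X_{a_2},w)$ and using $b(X_{a_2},w,w)=0$ gives the pathwise identity $\ddt|w|^2+2\nu\|w\|^2 = -2\,b(w,X_{a_1},w)+2\scp{f(a_1)-f(a_2),w}$. Bounding the trilinear term by $k_0|w|\,\|w\|\,\|X_{a_1}\|$ via \eqref{eq:NS_lady}, bounding the forcing term by \eqref{eq:holder_f}, and using Young's inequality to absorb the $\|w\|^2$--contributions into $2\nu\|w\|^2$, one gets $\ddt|w|^2 \le \tfrac{k_0^2}{\nu}\|X_{a_1}\|^2|w|^2 + \tfrac{C_f^2}{\nu}|a_1-a_2|^{2\beta}$, and Gr\"onwall (with $w(0)=0$) yields $|w(t)|^2\le \tfrac{tC_f^2}{\nu}|a_1-a_2|^{2\beta}\exp\!\big(\tfrac{k_0^2}{\nu}\int_0^t\|X_{a_1}(s)\|^2\,ds\big)$, i.e.\ \ref{H2} with $\kappa_1=k_0^2/\nu$ exactly as in Table~\ref{tab:parametertranslate}. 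This step is routine.

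The substantial step is \ref{H3}, the uniform--in--time exponential moment bound $\E\exp(\alpha\upsilon|X_a(t,x)|^2)\le c(a)\exp(\alpha\upsilon|x|^2 e^{-\chi t})$ with $\sup_{a\in\Ieps}c(a)<\infty$. I would apply It\^o's formula to $|X_a(t)|^2$, use $b(X_a,X_a,X_a)=0$, the Poincar\'e inequality $\|X_a\|^2\ge\lambda_1|X_a|^2$ and Young's inequality on $2\scp{f(a),X_a}$ to obtain $\mathrm{d}|X_a|^2 + \nu\lambda_1|X_a|^2\,\mathrm{d}t \le K_a\,\mathrm{d}t + 2\scp{X_a,\mathrm{d}W}$ with $K_a=\Tr Q+\tfrac{2}{\nu}\|f(a)\|_{-1}^2$ uniformly bounded on $\Ieps$. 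Then one runs the classical time--dependent exponent argument: for fixed $t$, choose $\eta(\cdot)$ on $[0,t]$ solving $\dot\eta=\eta(\nu\lambda_1-2\Tr Q\,\eta)$ with $\eta(t)=\alpha\upsilon$, so that $\exp\!\big(\eta(s)|X_a(s)|^2-\int_0^s\eta(r)K_a\,\mathrm{d}r\big)$ is a supermartingale (here the quadratic variation of the stochastic integral is controlled by $4\Tr Q\,|X_a|^2\,\mathrm{d}t$, which is where trace--classness of $Q$ is essential); this gives $\eta(0)\le\alpha\upsilon e^{-\chi t}$ with $\chi=\nu\lambda_1-2\Tr Q\,\alpha\upsilon>0$ and $\int_0^t\eta(r)\,\mathrm{d}r$ bounded uniformly in $t$, hence the claim. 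The main obstacle is exactly this estimate: the a priori bound \ref{ref:A2} alone is not enough (it grows linearly in $t$), so one genuinely has to exploit the finite trace of $Q$ to get the contraction in the initial condition, and one needs $\alpha\upsilon$ below an explicit threshold of order $\nu\lambda_1/\Tr Q$ --- which is compatible with $\alpha\in(0,\alpha_0)$ after choosing the free parameters in \nameref{asuA} (the dimension $n$, hence $\kappa_0=\nu\lambda_n$, and $\gamma$, hence $\alpha_0$ in \eqref{eq:QGupsilonalpha0}) suitably. This is precisely the type of estimate carried out in \cite{butkovsky2020, carigi2022exponential}.

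With \ref{H1}, \ref{H2}, \ref{H3} established, \nameref{asuH} holds and \autoref{thm:SPDEHolder} applies, giving for every $\alpha\in(0,\alpha_0)$ a constant $c=c(\varepsilon)$ with $|\scp{\varphi,\mu_{a_1}-\mu_{a_2}}|\le c\,\|\varphi\|_{\td}\,|a_1-a_2|^{\alpha\beta}$ for all $\varphi\in\Cd$ and $a_1,a_2\in\Ieps$, which is the assertion.
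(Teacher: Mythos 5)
Your proposal is correct and follows the same overall route as the paper: reduce to \autoref{thm:SPDEHolder} by verifying \nameref{asuH}, take \ref{H1} from the uniform-in-$a$ version of \nameref{asuA} discussed in \autoref{subsec:NSspectral}, and your verification of \ref{H2} (same noise realisation, decomposition of the nonlinearity, the bound \eqref{eq:NS_lady}, Young, Gr\"onwall) is line-for-line the paper's argument, with the same $\kappa_1=k_0^2/\nu$.

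The one substep you handle differently is \ref{H3}. The paper obtains the exponential moment bound by applying \autoref{lemma: Lemma5.1HMatt08} (Lemma~5.1 of \cite{HMatt08}) with $M(t)=\eta|u_t|^2$ and $Z(t)=\eta\lambda_1^{-1}\|u_t\|^2$, checking its three hypotheses and landing on the threshold $\eta<\nu\lambda_1/(4\Tr Q)$; you instead re-derive the same estimate by hand via the time-dependent-exponent supermartingale argument, solving $\dot\eta=\eta(\nu\lambda_1-2\Tr Q\,\eta)$ backwards from $\eta(t)=\alpha\upsilon$, which yields the contraction $\eta(0)\leq\alpha\upsilon e^{-\chi t}$ and a uniformly bounded $\int_0^t\eta\,dr$, hence the same conclusion with threshold $\alpha\upsilon<\nu\lambda_1/(2\Tr Q)$. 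The two are really the same exponential-martingale technique (the cited lemma is itself proved this way); yours is self-contained and gives a marginally better constant, while the paper's is shorter and reuses the identical lemma in the 2LQG section. Your final reconciliation of the threshold with $\alpha_0\upsilon$ via the choice of $\gamma$ matches the paper's. The only points your sketch glosses over are the standard localisation needed to make the supermartingale argument rigorous, and the fact that the drift inequality must be checked with the correct quadratic-variation bound $4\Tr Q\,|X_a|^2\,dt$ --- both of which you state correctly, so there is no gap.
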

    \begin{proof}
        Given \autoref{thm:SPDEHolder} we have to ensure that the solution of \eqref{eq:SNS_weak} satisfies \nameref{asuH}. As discussed in \autoref{subsec:NSspectral}, \nameref{asuA} holds for the Navier-Stokes equations \eqref{eq:SNS_weak} with Lyapunov function $V(x) = |x|^2$, and the parameters in \autoref{tab:parametertranslate} are bounded uniformly for $a\in \Ieps$, so \ref{H1} is satisfied.
        \paragraph{Proof of \ref{H3}} 
        Let $\eta>0$ and take the $\cH$ product of \eqref{eq:SNS_weak} with $\eta u$ itself to get 
        \begin{equation*}
            d(\eta |u(t)|^2) = \eta \left( 2 \langle f(a), u \rangle + \Tr Q  - 2\nu \|u\|^2 \right) \, dt + 2\eta \langle u, \cdot \rangle dW(t).
        \end{equation*}
    To this stochastic differential equation we apply the following result~\cite[Lemma~5.1]{HMatt08}
      \begin{lemma}
        \label{lemma: Lemma5.1HMatt08}
        Let $M$ be a real-valued semimartingale 
        \begin{equation*}
            dM(t, \omega) = F(t, \omega)\, dt + G(t, \omega)\, dB
        \end{equation*}
        where $B$ is a standard Brownian motion. Assume there exists a process $Z$ and positive constants $b_1, b_2, b_3$ with $b_2> b_3$, such that 
        \begin{enumerate}[label = (\roman*)]
            \item $F\leq b_1 - b_2 Z$ a.s.,
            \item  $M\leq Z$ a.s.,
            \item $G^2\leq b_3 Z$ a.s.
        \end{enumerate}
        Then the bound 
        \begin{equation*}
            \E \, \exp(M(t) + \frac{b_2 e^{-b_2t/4}}{4}\int_0^t Z(s) \, ds ) \leq \frac{b_2 \exp(\frac{2b_1}{b_2})}{b_2 - b_3} \exp(M(0)e^{-b_2t/2})
        \end{equation*}
        holds for any $t\geq 0$. 
    \end{lemma}

         We plan to apply \autoref{lemma: Lemma5.1HMatt08} with $M(t) := \eta |u_t|^2$ and $Z(t) := \eta \lambda_1^{-1} \|u_t\|^2$.
        We establish the three conditions of the lemma in turn: %
        \begin{enumerate}[label = (\roman*)]
            \item The first condition of the lemma is satisfied for $b_1 = \eta\left( \|f(a)\|_{-1}^2/\nu + \Tr Q  \right)$ and  $b_2 = \nu \lambda_1 $ as 
                \begin{align*}
                    F(t) &= \eta \left( 2  \langle f(a), u \rangle + \Tr Q  - 2\nu \|u\|^2 \right)\\
                    &\leq \eta \left( \frac{ \|f(a)\|_{-1}^2}{\nu} + \nu \|u\|^2 + \Tr Q  - 2\nu \|u\|^2 \right) = b_1 - b_2 Z(t).
                \end{align*}
            \item The second condition follows simply by Poincar\'{e}'s inequality $M(t) = \eta |u|^2 \leq \eta \lambda_1^{-1} \|u\|^2 = Z(t)$.
            \item For the third condition note that we can write
            \[
               2\eta\langle u, \cdot\rangle\, dW = 2\eta\|\langle u, \cdot \rangle\|_{L_2^0} \frac{\langle u, \cdot\rangle}{\|\langle u, \cdot \rangle\|_{L_2^0}}\, dW = 2\eta\|\langle u, \cdot \rangle\|_{L_2^0} \, dB
               \]
              where $B$ defined as above is a standard real-valued Brownian motion. Then
               \[
               G(t, \omega)^2 = 4\eta^2 \|\langle u, \cdot \rangle\|_{L_2^0}^2 = 4 \eta^2 \sum_{k\in \N} \left| \left( u, Q^{1/2}e_k\right)\right|^2 \leq 4\eta^2 \Tr Q  |u_t|^2 \leq 4\eta^2 \Tr Q  \lambda_1^{-1}\|u_t\|^2
            \]
            and $b_3 = 4\eta \Tr Q$.
        \end{enumerate}
        To ensure that $b_2> b_3$ i.e.~$4 \eta \Tr Q  < \nu \lambda_1 $ we take 
        \begin{equation}
        \label{eq:NSboundeta1}
              \eta < \nu \lambda_1 / 4\Tr Q =: \eta_1.
        \end{equation}
        Then \autoref{lemma: Lemma5.1HMatt08} gives
        \begin{equation*}
            \E \exp( \eta |u_t|^2 + \frac{\nu \eta  e^{- \nu \lambda_1 t /4}}{4}\int_0^t \|u_s\|^2 \, ds ) \leq 
            c(a) \exp(\eta|u_0|^2 e^{- \nu \lambda_1 t /2})
        \end{equation*}
        with 
        \begin{equation*}
            c(a) =  \frac{\nu \lambda_1 \exp(\frac{2\eta( \Tr Q +  \| f(a)\|_{-1}^2/\nu )}{\nu \lambda_1})}{\nu \lambda_1 - 4\eta \Tr Q }
        \end{equation*}
       which stays uniformly bounded for all $a\in \Ieps$. 
        Consequently we have
        \begin{equation*}
            \E \exp( \eta |u_t|^2 )\leq c(a) \exp(\eta|u_0|^2 e^{- \nu \lambda_1 t /2}) 
        \end{equation*}
        for all $\eta \in (0 , \eta_1)$. 
          Then we only have to make sure we can take $\eta = \alpha \upsilon$.
        First note that given the definition \eqref{eq:QGupsilonalpha0} of $\alpha_0$ we have 
        \begin{equation*}
            \alpha_0 \upsilon = \frac{\upsilon}{2}\wedge \frac{2\gamma \upsilon}{2\gamma + \upsilon} < 2\gamma
        \end{equation*}
        where $\gamma>0$ is an arbitrary parameter smaller than $\nu \lambda_1 /\Tr Q $.
        Therefore if we choose 
        \begin{equation*}
            0< 2\gamma< \eta_1 = \frac{\nu \lambda_1}{ 4 \Tr Q },
        \end{equation*}
         we have \( \alpha_0 \upsilon < \eta_1 \) as desired.
        
    \paragraph{Proof of \ref{H2}}
      Set $u(t) := u(t; u_0, a_1)$, $v(t):=u(t; u_0, a_2)$ and $  w := u - v $. Then $w$ must satisfy the following equation 
    \begin{equation}
    \label{eq:NS eq for w}
        \dfrac{dw}{dt} + \nu Aw + B(w,u) + B(v,w)=  f(a_1) - f(a_2), \quad w(0) = 0.
    \end{equation}
    Take the $\cH$ scalar product of \eqref{eq:NS eq for w} with $w$ 
    \begin{equation*}
        \frac{1}{2}\frac{d|w|^2}{dt} + \nu \|w\|^2 + (B(w,u),w) = \langle  f(a_1) - f(a_2), w\rangle, 
    \end{equation*}
    where we have used that $ (B(v,w), w) = 0$. Using the estimate \eqref{eq:NS_lady} for the trilinear form and Cauchy-Schwartz inequality, we get
    \begin{equation*}
        \frac{1}{2}\frac{d|w|^2}{dt} + \nu \|w\|^2 \leq k_0 |w| \|w\| \|u\| +  \| f(a_1) - f(a_2)\|_{-1}\|w\|,
    \end{equation*}
    and by Young's inequality and the H\"{o}lder continuity of $f$ \eqref{eq:holder_f}
    \begin{equation*}
        \frac{1}{2}\frac{d|w|^2}{dt} \leq \frac{k_0^2}{2\nu} \|u\|^2|w|^2 +  \frac{C_f^2}{\nu}| a_1 - a_2|^{2\beta}. 
    \end{equation*}
    By Gronwall's inequality we get
        \begin{equation*}
        \begin{split}
            |w(t)|^2 \leq \tfrac{2C_f^2}{\nu}|a_1 - a_2|^{2 \beta } \int_0^t\exp(\tfrac{k_0^2}{\nu} \int_s^t \| u(r,a_1) \|^2 \, dr) \, ds\\
            \leq  \tfrac{2C_f^2}{\nu}|a_1 - a_2|^{2\beta}\, t \exp(\tfrac{ k_0^2}{\nu} \int_0^t \| u(r,a_1) \|^2 \, dr)
        \end{split}
        \end{equation*}
        so that \ref{H2} is satisfied with $C =2 t C_f^2/\nu$, $\kappa_1 = k_0^2/ \nu$.
    \end{proof}
\section[Stochastic two--layer quasi--geostrophic model]{Stochastic two--layer quasi--geostrophic \\model}
\label{sec:response_QG}
 The 2LQG equations model mid-latitude atmosphere and ocean dynamics at large scale. The model describes two layers of fluid one on top of the other with mean height $h_1$ for the top layer and $h_2$ for the bottom one, and with density respectively $\rho_1$ and $\rho_2$ with $\rho_1< \rho_2$. 
We consider the so-called $\beta$-plane approximation to the Coriolis effect (see \cite[Section~2.3.2]{Vallis06}).
We assume that the forcing acts only on the top layer and has a non-trivial stochastic part 
which accounts for example for the effect of the wind on the upper ocean. For  a more detailed exposition of the mathematical description see \cite{carigi2022exponential} and references therein.

Let $\cD$ be a squared domain $\cD = [0,L]\times [0,L]\subset \R^2$. Consider the following equations 
\begin{align}
\label{eq:QG_stochastic}
\begin{split}
    &d q_1 + J(\psi_1, q_1 + \beta y ) \, dt = \left(\nu\Delta^2\psi_1 \,+  f(a) \right) dt + d W\\
    & \partial_t q_2 + J(\psi_2, q_2 + \beta y ) = \nu\Delta^2\psi_2 - r\Delta \psi_2,
\end{split}
\end{align}
 where $\x = (x, y)\in \cD$, $\bfpsi(t, \x) = (\psi_1(t, \x), \psi_2(t, \x))^t$ is the streamfunction of the fluid, and $\q(t,\x) = (q_1(t, \x), q_2(t, \x))^t $ is the so-called quasi--geostrophic potential vorticity. Vorticity and streamfunction are related through
 \begin{equation}
\label{eq:simple_relation_q_psi}
\begin{split}
     q_1 = \Delta \psi_1 + F_1(\psi_2 - \psi_1) \\
     q_2 = \Delta \psi_2 + F_2(\psi_1 - \psi_2),
\end{split}
\end{equation}
where $F_1, F_2$ are positive constants. 
Moreover, $J$ is the Jacobian operator $J(a,b) = \nabla^{\perp}a \cdot \nabla b$, $W$ is a Wiener process on a probability space $(\Omega, \mathcal{F}, \bP)$, with covariance operator $Q$. Furthermore we assume periodic boundary conditions for $\bfpsi$ in both directions with period $L$ and we impose that 
\begin{equation}
\label{eq:ch1zeromeanvalue}
    \int_\cD \bfpsi(t, \x) \, d\x = 0 \fa t\geq 0.
    \end{equation}
The model includes a deterministic forcing on the top layer $f(a)$ (time-independent) as well 
with zero spatial averages, i.e.
\begin{equation*}
    \int_\cD f(\x, a) \, d\x = 0 .
\end{equation*}
The constants $F_1, F_2$ are such that 
\begin{equation*}
    h_1 F_1 = h_2F_2 =: p.
\end{equation*} 
The model \eqref{eq:QG_stochastic} includes dissipation generated by the eddy viscosity on both layers modeled by the terms $\nu\Delta^2\psi_i$ and by the friction with the bottom modeled by $r\Delta \psi_2$.
We can write \eqref{eq:QG_stochastic} in vectorial formulation introducing 
\begin{equation}
    \label{eq: def B(U,V)}
  B(\bfpsi, \bfxi) = \left( \begin{array}{r}J( \psi_1, \Delta \xi_1) + F_1J(\psi_1, \xi_2) \\ J( \psi_2, \Delta \xi_2) + F_2J(\psi_2, \xi_1) \end{array} \right) .
\end{equation} 
Then, using the fact that $J(\psi,\psi) =0$, \eqref{eq:QG_stochastic} becomes
\begin{equation}
\label{eq:QG_stoc_vec}
    d \q + \left( B(\bfpsi, \bfpsi) + \beta \partial_x\bfpsi \right) \, dt= \nu \Delta^2\bfpsi \,dt+  \binom{f(a)}{- r \Delta \psi_2} \, dt+ d\mathbf{W}
\end{equation}
where $\mathbf{W} = (W, 0)^t$, and $\Delta \bfpsi = (\Delta \psi_1, \Delta \psi_2)^t$. Moreover, we can express the relation \eqref{eq:simple_relation_q_psi} between the streamfunctions and the vorticities as
\begin{equation*}
    \q = (\Delta + M )\bfpsi \quad \text{with } 
    M = \begin{pmatrix} 
    - F_1 & F_1 \\F_2 & - F_2
    \end{pmatrix}.
\end{equation*}
Next we set the notations for the mathematical setup of two--layer quasi--geostrophic model used in this work following \cite{carigi2022exponential}.

\subsection{Mathematical set up}
Let $(L^2(\cD), \|\cdot\|_0)$, $(H^k(\cD), \| \cdot \|_k)$, $k\in \R$ be the standard Sobolev spaces of $L$-periodic functions satisfying \eqref{eq:ch1zeromeanvalue}. Denote by $(\cdot , \cdot)_k$ the associated scalar product. 
We introduce the product spaces to deal with our coupled system
\begin{equation*}
    \bLtwo(\cD) = \left[ L^2(\cD) \right]^2 \quand \bH^k(\cD) = \left[ H^k(\cD)\right]^2
\end{equation*}
with the weighted scalar product and norm
\begin{align*}
    (\bfpsi, \bfxi)_k := h_1(\psi_1, \xi_1)_{k} + h_2(\psi_2, \xi_2)_{k} \nonumber\\
    \|\bfpsi\|_k^2 := h_1\|\psi_1\|_{k}^2 + h_2\|\psi_2\|_{k}^2 
\end{align*}
for $\bfpsi$ and $\bfxi$ elements of $H^k \times H^k$, $k>0$ or $L^2 \times L^2$ for $k = 0$.
Further denote with $\bH^{-k}$ the dual space of $\bH^k$, $k>0$. 

We take the covariance operator $Q$ to be nonnegative, symmetric and trace class in $L^2$. We also assume that $Q$ and $-\Delta$ commute.

Define the operator $\tA: \mathbf{H}^{k+2} \to \mathbf{H}^{k}$, $k\in \R$, connecting the streamfunction with the quasi--geostrophic potential vorticity
\begin{equation*}
    \tA\bfpsi = - (\Delta + M)\bfpsi, \quad \bfpsi \in \mathbf{H}^{k+2}.
\end{equation*}
It is easy to see that $\tA$ is an unbounded non--negative self--adjoint operator in $\bH^k$ with respect to the weighted scalar product $(\cdot , \cdot)_k$
and thanks to \eqref{eq:ch1zeromeanvalue}, $\tA$ has a bounded inverse which is bounded as function $\mathbf{H}^{k} \to \mathbf{H}^{k+2}$, that is, for each $\q\in \bH^k$ there exists a unique $\bfpsi\in \mathbf{H}^{k+2}$ such that $\q = - \tA \bfpsi$.
%
%
\begin{remark}
    Since the $\bLtwo$ and $\bH^1$ norms and the $\bLtwo$ scalar product are the most used throughout this section, for the sake of simplifying notation we denote them as follows
    \begin{align*}
        |\psi| := \| \psi\|_0 &\quand \| \psi \| := \|\psi\|_1 \\
        |\bfpsi| = h_1|\psi_1| + h_2|\psi_2| := \| \bfpsi\|_0 &\quand \| \bfpsi \| = h_1 \| \psi_1\| + h_2 \|\psi_2\| := \| \bfpsi\|_1\\
        (\psi,\xi) := (\psi,\xi)_0 &\quand (\bfpsi, \bfxi) = h_1(\psi_1, \xi_1) + h_2(\psi_2, \xi_2)
    \end{align*}
\end{remark}
Finally, we introduce two new norms on the level of the potential vorticities. 
For $\q\in \Hmo$ there exists $\bfpsi\in \bH^1$ such that $\q = -\tA\bfpsi$, and we can define the norm on $\Hmo$
\begin{equation*}
    \vertiii{\q}_{-1}^2 := \| \bfpsi\|^2 + p | \psi_1 - \psi_2|^2
\end{equation*}
and, for $\q\in \bLtwo$ with $\bfpsi \in \bH^2$ define the norm on $\bLtwo$
\begin{equation*}
    \vertiii{\q}_0^2 :=|\Delta \bfpsi|^2 + p \|\psi_1 - \psi_2\|^2.
\end{equation*}
Note that by Poincar\'{e} inequality one has
\begin{align*}
\begin{split}
   \vertiii{\q(t)}^2_{-1} &= \| \bfpsi\|^2 + p |\psi_1 - \psi_2|^2 \\
   &\leq \lambda_1^{-1}\left( |\Delta \bfpsi|^2 + p \| \psi_1 - \psi_2\|^2\right) = \lambda_1^{-1} \vertiii{\q(t)}_0^2.
  \end{split}
\end{align*}
Furthermore, these norms are equivalent to $\| \cdot \|_{-1}$ and $\|\cdot \|_{0}$ respectively and have a series of useful properties: 

\begin{lemma}[{\cite[Lemma~2.2]{carigi2022exponential}}]
\label{lemma:ch1propertiesnorms}
    Consider $\q\in \Hmo$ and $\bfpsi \in \mathbf{H}^1$ such that $\q = -\tA\bfpsi $. Then the following relations hold:
    \begin{align}
        -( \q, \bfpsi) = \vertiii{\q}_{-1}^2 \label{eq:ch1(u,v)=|u|*} \\
        \|\bfpsi \|^2 \leq \vertiii{\q}_{-1}^2 \leq c_0 \|\bfpsi \|^2 \label{eq:ch1*normpoincare}
    \end{align}
    for $c_0 = 1 + 2\lambda_1^{-1}\max(F_1, F_2)$.
    For $\q\in \bLtwo$ and $\bfpsi \in \mathbf{H}^2$ such that $\q = -\tA\bfpsi $, we have:
    \begin{align}
        ( \q, \Delta \bfpsi) =  \vertiii{\q}_0^2  \label{eq:ch1(u,v)=|u|2}\\
      |\Delta \bfpsi|^2 = |\Delta \bfpsi|^2\leq \vertiii{\q}_{0}^2 \leq c_0 |\Delta \bfpsi|^2 . \label{eq:ch12normpoincare}
    \end{align}
\end{lemma}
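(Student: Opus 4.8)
The plan is to establish the four relations in turn: \eqref{eq:ch1(u,v)=|u|*} and \eqref{eq:ch1(u,v)=|u|2} are exact identities obtained by pairing $\q = -\tA\bfpsi = (\Delta + M)\bfpsi$ against $\bfpsi$ and against $\Delta\bfpsi$ in the weighted $\bLtwo$ scalar product and integrating by parts, while \eqref{eq:ch1*normpoincare} and \eqref{eq:ch12normpoincare} then follow from those identities together with the Poincar\'e inequality.

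For \eqref{eq:ch1(u,v)=|u|*} I would write $-(\q,\bfpsi) = (\tA\bfpsi,\bfpsi) = -(\Delta\bfpsi,\bfpsi) - (M\bfpsi,\bfpsi)$. The first term equals $\|\bfpsi\|^2$ after one integration by parts, using the periodic boundary conditions and \eqref{eq:ch1zeromeanvalue}. For the second term, $M\bfpsi = (F_1(\psi_2-\psi_1),\, F_2(\psi_1-\psi_2))^t$, and pairing this against $\bfpsi$ in the \emph{weighted} product $(\cdot,\cdot) = h_1(\cdot,\cdot)_0 + h_2(\cdot,\cdot)_0$ produces $h_1F_1(\psi_2-\psi_1,\psi_1) + h_2F_2(\psi_1-\psi_2,\psi_2)$; because $h_1F_1 = h_2F_2 = p$ this collapses to $p(\psi_2-\psi_1,\psi_1-\psi_2) = -p|\psi_1-\psi_2|^2$, so $-(M\bfpsi,\bfpsi) = p|\psi_1-\psi_2|^2$ and adding the two contributions gives $\vertiii{\q}_{-1}^2$. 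Identity \eqref{eq:ch1(u,v)=|u|2} is obtained in exactly the same way by pairing $\q$ against $\Delta\bfpsi$: the diagonal part contributes $|\Delta\bfpsi|^2$, and the $M$-part, again using $h_1F_1=h_2F_2=p$ and one integration by parts, yields $p(\psi_2-\psi_1,\Delta(\psi_1-\psi_2)) = p\|\psi_1-\psi_2\|^2$.

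For the two-sided bounds, the lower inequalities in \eqref{eq:ch1*normpoincare} and \eqref{eq:ch12normpoincare} are immediate since $p>0$ and the extra terms $p|\psi_1-\psi_2|^2$ and $p\|\psi_1-\psi_2\|^2$ are nonnegative. For the upper bounds I would use $|\psi_1-\psi_2|^2 \leq 2|\psi_1|^2 + 2|\psi_2|^2$ combined with the Poincar\'e inequality $|\psi_i|^2 \leq \lambda_1^{-1}\|\psi_i\|^2$ (and likewise $\|\psi_1-\psi_2\|^2 \leq 2\|\psi_1\|^2 + 2\|\psi_2\|^2$ with $\|\psi_i\|^2 \leq \lambda_1^{-1}|\Delta\psi_i|^2$ for the second estimate), and then the bound $p = h_iF_i \leq h_i\max(F_1,F_2)$, which turns $p(\|\psi_1\|^2 + \|\psi_2\|^2)$ into $\max(F_1,F_2)\,\|\bfpsi\|^2$ (resp.\ $p(|\Delta\psi_1|^2 + |\Delta\psi_2|^2) \leq \max(F_1,F_2)\,|\Delta\bfpsi|^2$). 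This produces precisely the additive constant $c_0 - 1 = 2\lambda_1^{-1}\max(F_1,F_2)$, giving $\vertiii{\q}_{-1}^2 \leq c_0\|\bfpsi\|^2$ and $\vertiii{\q}_0^2 \leq c_0|\Delta\bfpsi|^2$.

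There is no substantive obstacle here; the lemma is essentially a bookkeeping exercise. The one point that needs care is correctly propagating the weights $h_1,h_2$ through the pairings involving $M$, and recognising that the balance condition $h_1F_1 = h_2F_2 = p$ is exactly what makes the off-diagonal terms symmetrise into a perfect square in $\psi_1 - \psi_2$; everything else is integration by parts and Poincar\'e.
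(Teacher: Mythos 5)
Your proof is correct. The paper itself does not prove this lemma --- it is imported verbatim from \cite[Lemma~2.2]{carigi2022exponential} --- but your argument is the standard direct verification one would expect there: expanding $\q=(\Delta+M)\bfpsi$ in the weighted pairing, using $h_1F_1=h_2F_2=p$ to symmetrise the off-diagonal terms into $p|\psi_1-\psi_2|^2$ (resp.\ $p\|\psi_1-\psi_2\|^2$), and then Poincar\'e together with $p\|\psi_i\|^2\leq h_i\max(F_1,F_2)\|\psi_i\|^2$ to recover exactly $c_0=1+2\lambda_1^{-1}\max(F_1,F_2)$. All steps check out, including the sign bookkeeping in the integration by parts for \eqref{eq:ch1(u,v)=|u|2}.
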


\autoref{tab:norms} contains a summary of the spaces and relative norms used throughout this work. 
\begin{table}[ht]
    \centering
    \begin{tabular}{ll}
    Space & Norm \\[1ex] 
     $\bH^{k}$ & $\|\bfpsi\|_k^2 = h_1\|\psi_1\|_k^2 + h_2\| \psi_2\|_k^2$ \\ 
     $\bLtwo$ = $\bH^0$ & $|\bfpsi|^2= h_1|\psi_1|^2 + h_2|\psi_2|^2$ \\ 
     $\bH^1 $& $\|\bfpsi\|^2 = h_1\|\psi_1\|^2 + h_2\| \psi_2\| ^2$ \\[1ex]
      $\bLtwo$
      & $\vertiii{\q}_0^2 = | \Delta\bfpsi|_2^2 + p \| \psi_1 - \psi_2\|^2 $\\ 
      $\Hmo$
      & $\vertiii{\q}_{-1}^2 = \| \bfpsi \|^2 + p | \psi_1 - \psi_2|^2$
      
    \end{tabular} 
    \caption{\label{tab:norms} Notations for the two--layer quasi--geostrophic model. 
            Rows~1--3 will be used for the streamfunctions $\bfpsi$. 
            Rows~4,5 will be mainly used for the potential vorticities $\q$, that is $\q\in \bLtwo$, $\Hmo$ respectively and $\bfpsi$ is such that $\q = -\tA \psi$.}
\end{table}

     Finally, standard bounds on  the Jacobian (see for example \cite[Lemma~3.1]{Chueshov01_Proba}), yield the following bound for the bilinearity $B$:
    \begin{lemma}[{\cite[Lemma~1.3.4]{thesis}}]
    \label{lemma:ch1propertiesB}
    Let $B$ be the bilinear operator defined in \eqref{eq: def B(U,V)}, then for  $\bfpsi,\bfxi, \bfphi\in \bH^2$ 
    \begin{align}
        (B(\bfpsi,\bfxi), \bfphi) & = - (B(\bfphi,\bfxi), \bfpsi), \label{eq: (B(U,V), W)= - (B(W,V), U) } \\
       (B(\bfpsi,\bfxi), \bfpsi) & = 0. \label{eq: (B(U,V), U) = 0} 
    \end{align}
    Moreover, for $\bfpsi, \bfxi \in \bH^2$, there exists positive constant $k_0$ such that
    \begin{equation}
        |(B(\bfpsi,\bfpsi),\bfxi)|\leq k_0 \| \bfpsi\| |\Delta \bfpsi|  |\Delta \bfxi| .
        \label{eq:bound(B(u,u),v)}
    \end{equation}
    \end{lemma}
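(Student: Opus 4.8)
The plan is to reduce all three assertions to the corresponding elementary properties of the \emph{scalar} Jacobian $J(a,b) = \nabla^\perp a\cdot\nabla b$ on the periodic box $\cD$, and then assemble them componentwise through the weighted inner product. First I would record the two facts about $J$ that follow from integration by parts, using periodicity and the fact that $\nabla^\perp f$ is divergence free: for $L$-periodic zero-mean functions one has $(J(f,g),h) = -(J(h,g),f)$ and, as the special case $h=f$, $(J(f,g),f) = 0$. By density these extend to the Sobolev regularity in play here.

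For the identity \eqref{eq: (B(U,V), W)= - (B(W,V), U) } I would expand $(B(\bfpsi,\bfxi),\bfphi)$ using the definition \eqref{eq: def B(U,V)} and the weighted scalar product $(\cdot,\cdot) = h_1(\cdot,\cdot) + h_2(\cdot,\cdot)$; this produces a sum of four scalar terms of the shape $h_i\,(J(\psi_i,\,\cdot\,),\phi_i)$, with the ``$\cdot$'' slot being $\Delta\xi_i$ or (up to a constant $F_i$) $\xi_j$. Applying $(J(\psi_i,\,\cdot\,),\phi_i) = -(J(\phi_i,\,\cdot\,),\psi_i)$ term by term and re-collecting yields exactly $-(B(\bfphi,\bfxi),\bfpsi)$. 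The identity \eqref{eq: (B(U,V), U) = 0} is then immediate, either as the special case $\bfphi=\bfpsi$ of the first (which forces $(B(\bfpsi,\bfxi),\bfpsi) = -(B(\bfpsi,\bfxi),\bfpsi)$), or directly from $(J(\psi_i,\,\cdot\,),\psi_i)=0$ applied to each of the four terms.

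For the bound \eqref{eq:bound(B(u,u),v)} I would split $(B(\bfpsi,\bfpsi),\bfxi)$ into its two ``diagonal'' contributions $h_i\,(J(\psi_i,\Delta\psi_i),\xi_i)$ and two ``coupling'' contributions $h_i F_i\,(J(\psi_i,\psi_j),\xi_i)$. For a diagonal term I would integrate by parts to shift the Laplacian-derivative off $\psi_i$, writing $(J(\psi_i,\Delta\psi_i),\xi_i) = -(\Delta\psi_i,\,J(\psi_i,\xi_i))$, bound this by Cauchy--Schwarz by $|\Delta\psi_i|\,\|J(\psi_i,\xi_i)\|_{L^2}$, then estimate $\|J(\psi_i,\xi_i)\|_{L^2}\le \|\nabla\psi_i\|_{L^4}\|\nabla\xi_i\|_{L^4}$ and invoke the 2D Ladyzhenskaya/Gagliardo--Nirenberg inequality $\|\nabla w\|_{L^4}^2\le c\,\|w\|_{H^1}\|w\|_{H^2}$ together with the equivalence $\|w\|_{H^2}\simeq|\Delta w|$ on zero-mean periodic functions; this gives a bound of the form $c\,\|\psi_i\|\,|\Delta\psi_i|\,|\Delta\xi_i|$. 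The coupling terms are of lower order and are controlled the same way (using $\|\psi_j\|_{H^1}\le\|\psi_j\|$ and Poincar\'e to absorb the missing derivative). Summing the four pieces and absorbing the weights $h_i$, the constants $F_i$, $\lambda_1$, and the interpolation constants into a single $k_0$ gives \eqref{eq:bound(B(u,u),v)}. This is entirely parallel to the standard estimates for the Navier--Stokes-type Jacobian (cf.\ \cite[Lemma~3.1]{Chueshov01_Proba}), so I would keep the write-up terse and quote those estimates where convenient.

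I expect the estimate to be the only genuinely non-routine part, and the delicate point there is choosing the integration by parts so that no more than two derivatives ever land on a single factor, plus the bookkeeping of the interpolation inequalities and the weighted norms; the two algebraic identities follow immediately from the skew-symmetry of $J$ and require no real work.
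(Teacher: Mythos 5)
First, a point of reference: the paper does not prove this lemma at all --- it is imported verbatim from \cite[Lemma~1.3.4]{thesis} and attributed to standard Jacobian estimates as in \cite[Lemma~3.1]{Chueshov01_Proba} --- so there is no in--paper proof to compare against. Your treatment of the two algebraic identities is correct and is the standard argument: expand the weighted inner product into the four scalar terms, apply $(J(f,g),h)=-(J(h,g),f)$ (valid by periodicity and $\operatorname{div}\nabla^\perp f=0$) to each term, and recollect; the vanishing identity follows by polarisation.

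The estimate \eqref{eq:bound(B(u,u),v)}, however, has a genuine gap in your handling of the diagonal terms. After the single integration by parts $(J(\psi_i,\Delta\psi_i),\xi_i)=-\int\Delta\psi_i\,\nabla^\perp\psi_i\cdot\nabla\xi_i\,dx$, your H\"older $(2,4,4)$ plus Ladyzhenskaya chain gives
\begin{equation*}
|\Delta\psi_i|\,\|\nabla\psi_i\|_{L^4}\|\nabla\xi_i\|_{L^4}\;\leq\; c\,|\Delta\psi_i|\cdot\|\psi_i\|^{1/2}|\Delta\psi_i|^{1/2}\cdot\|\xi_i\|^{1/2}|\Delta\xi_i|^{1/2}\;\leq\; c'\,\|\psi_i\|^{1/2}\,|\Delta\psi_i|^{3/2}\,|\Delta\xi_i|,
\end{equation*}
which is strictly weaker than the claimed $\|\psi_i\|\,|\Delta\psi_i|\,|\Delta\xi_i|$: converting $\|\psi_i\|^{1/2}|\Delta\psi_i|^{3/2}$ into $\|\psi_i\|\,|\Delta\psi_i|$ would require a reverse Poincar\'e inequality, which fails. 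The precise distribution of derivatives (one full $H^1$ factor and one full $H^2$ factor on $\bfpsi$) is not cosmetic --- it is what makes the Young/Gronwall step in the verification of \ref{H2} for the 2LQG model work. The missing idea is a \emph{second} integration by parts combined with a cancellation: writing $\int\Delta\psi_i\,\nabla^\perp\psi_i\cdot\nabla\xi_i\,dx=-\int\partial_j\psi_i\,\partial_j\bigl(\nabla^\perp\psi_i\cdot\nabla\xi_i\bigr)\,dx$, the contribution in which both derivatives land on $\psi_i$ equals $-\tfrac12\int\nabla^\perp\bigl(|\nabla\psi_i|^2\bigr)\cdot\nabla\xi_i\,dx$, which vanishes identically (it is $\int J\bigl(|\nabla\psi_i|^2,\xi_i\bigr)\,dx$, a total divergence), while the surviving term $-\int\partial_j\psi_i\,\nabla^\perp\psi_i\cdot\nabla\partial_j\xi_i\,dx$ is bounded by $\|\nabla\psi_i\|_{L^4}^2\,\|D^2\xi_i\|_{L^2}\leq c\,\|\psi_i\|\,|\Delta\psi_i|\,|\Delta\xi_i|$ by Ladyzhenskaya. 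With that repair, and your (correct) observation that the coupling terms $h_iF_i(J(\psi_i,\psi_j),\xi_i)$ are lower order, the lemma follows.
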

The deterministic version of \eqref{eq:QG_stoc_vec} has been shown to be well--posed in \cite{Bernier94}. For the stochastic model \eqref{eq:QG_stoc_vec}, for $f\in H^{-2}$, $\q_0\in \Hmo$ and $T>0$ there exists a pathwise unique solution $\q(t, \omega;\q_0, a)$ in $ C([0,T]; \Hmo) \cap L^2(0,T; \bLtwo) $ for almost all $\omega$ and the associated Markov semigroup $\cP_t^a$ is Feller as $\q$ is a continuous function of the initial condition $\q_0$ as a function in $\Hmo$. For a complete proof of these result we refer to \cite[Section~2]{thesis} and references therein.
    \subsection{Spectral gap}\label{subsec:QGspectralgap}
    In \cite{carigi2022exponential} it is shown that \eqref{eq:QG_stoc_vec} exhibits a spectral gap as it satisfies \autoref{thm:expconvergence}, 
    provided the parameter $r$ is large enough.
    More specifically, it is demonstrated for a fixed $a$ \nameref{asuA} holds with Lyapunov function $V(\q) = \vertiii{\q}_{-1}^2$. The chosen controlled equation is
    \begin{equation*}
        d\tq + \left( B(\tpsi, \tpsi) + \beta \partial_x \tpsi \right)\, dt = \nu \Delta^2 \tpsi + \binom{f(a) + r\Pi_n(\psi_1 - \tilde{\psi}_1)}{- r\Delta \tilde{\psi}_2}
    \end{equation*}
    where $\Pi_n$ is the projection onto the first eigenfunctions of $-\Delta$ for an appropriate $n$. Similarly to the Navier--Stokes equation, the dependence on the forcing $f$ is explicit in the estimates needed for \nameref{asuA}, hence the calculations in \cite{carigi2022exponential} are easily extendable to estimates with parameters independent from $a$ as desired in this context. Table~\ref{tab:parametertranslateQG} shows the parameters mentioned in \nameref{asuA} and how they relate to the parameters appearing in the stochastic 2LQG~equations.
    \begin{table}
    \begin{center}
    \begin{tabular}{ll}
    \nameref{asuA} & 2LQG \\ [1ex]
    $\kappa_0$ &  $r$ \\ 
    $\kappa_1$ &  $k_0^2/\nu$ \\ 
    $\kappa_2$ &  $\nu - 2\gamma \Tr Q/\lambda_1^2$\\ 
    $\kappa_\varepsilon$ &  $T_Q + h_1 \sup_{|a-a_0|< \varepsilon}\|f(a)\|_{-2}^2/\nu$\\ 
    $\gamma_\varepsilon$ &  $\nu\lambda_1/c_0 $\\ 
    $K_\varepsilon$ &  $T_Q + h_1 \sup_{|a-a_0|< \varepsilon}\|f(a)\|_{-2}^2/\nu$
    \end{tabular} 
    \end{center}
    \caption{\label{tab:parametertranslateQG}
    Parameters from \nameref{asuA} and their values in terms of the parameters appearing in the stochastic 2LQG~equations.
        Here $k_0$ appears in the estimates of the trilinear form in \autoref{lemma:ch1propertiesB}, $\nu$ the viscosity, $\gamma$ is chosen so that $\kappa_2$ is positive, $\lambda_1$ is the smallest eigenvalue of $-\Delta$, $Q$ the covariance operator of the noise, $T_Q = \Tr (Q^{1/2})^* \tA^{-1} Q^{1/2}$, $f(a) $ is the deterministic external forcing and $h_1$ is the height of the top layer.
    }
    \end{table}
    Finally, the required lower bound on the parameter $r$ is  
    \begin{equation}
    \label{eq:condition_r}
         r   > \tfrac{2k_B}{ \nu } \left(\tfrac{h_1}{\nu}\|f(a)\|_{-2}^2 + T_Q\right) .
    \end{equation}

    \subsection{Linear and fractional response}
    \label{subsec:QGHolder}
     We start with the weak differentiability or linear response for the two--layer quasi-geostrophic model.
     This was the main motivation for developing the general methodology presented above which now affords a very concise proof.
    \begin{theorem}
    \label{thm:QGresponse}
        Set $\Ieps = (a_0 - \varepsilon, a_0 + \varepsilon)\subset \R$ and consider the two--layer quasi--geostrophic equation \eqref{eq:QG_stoc_vec} with parameters satisfying \eqref{eq:condition_r}. Suppose $a\mapsto f(a)$ is continuously differentiable as a map from $\Ieps$ into $\range Q$, with $|D_a Q^{-1/2}f |$ uniformly bounded in $a$, and let $\mu_a$ be the associated unique invariant measure. Then the map $a \mapsto\langle \varphi, \mu_a\rangle $ is differentiable at $a = a_0$ for every $\varphi\in \Cd$ with 
         \begin{equation*}
            \left. \dv{}{a} \langle \varphi, \mu_a \rangle\right|_{a = a_0} = \langle D_a \cP_t^{a}|_{a = a_0} (1 - \cP_t^{a_0})^{-1}(\varphi - \langle \varphi, \mu_{a_0}\rangle)  , \mu_{a_0}\rangle. 
        \end{equation*}
    \end{theorem}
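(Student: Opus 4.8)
The plan is to check that the two--layer quasi--geostrophic equation \eqref{eq:QG_stoc_vec} falls within the abstract SPDE setting of \autoref{thm:responseU} and then apply that theorem directly; the argument is structurally identical to the proof of \autoref{thm:NSresponse}. Two things need to be verified: that \nameref{asuA} holds with a suitable Lyapunov function, and that the forcing satisfies the range condition $f(a) \in \range Q$ together with the bound on $|Q^{-1/2} D_a f(a)|$.

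For the first point I would invoke the results of \cite{carigi2022exponential}, recalled in \autoref{subsec:QGspectralgap}: for the 2LQG model \nameref{asuA} holds with Lyapunov function $V(\q) = \vertiii{\q}_{-1}^2$, the constants being those listed in \autoref{tab:parametertranslateQG}, provided $r$ satisfies the lower bound \eqref{eq:condition_r} (which is exactly what guarantees $\kappa_0 > 0$, $\kappa_2 > 0$ and the product condition $\kappa_0\kappa_2 > \kappa_1\kappa_\varepsilon$). The point requiring some care is uniformity over $a \in \Ieps$: the only $a$--dependence in the relevant constants enters through $\sup_{|a - a_0| < \varepsilon}\|f(a)\|_{-2}^2$ in $\kappa_\varepsilon$ and $K_\varepsilon$, so continuity of $a \mapsto f(a)$ makes these finite on the bounded interval, and \eqref{eq:condition_r} continues to hold with the supremum in place of $\|f(a)\|_{-2}^2$ (shrinking $\varepsilon$ if necessary). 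Since the dependence on $f$ of the a priori estimates in \cite{carigi2022exponential} is explicit, the argument there carries over verbatim with these uniform constants.

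For the second point, the hypotheses of the theorem are precisely those of \autoref{thm:responseU}: $a \mapsto f(a)$ is continuously differentiable into $\range Q$ and $|Q^{-1/2} D_a f(a)|$ is locally uniformly bounded. Hence \autoref{thm:responseU} applies with $\cH = \Hmo$, $\cP_t^a$ the 2LQG semigroup, $\mu_a$ its unique invariant measure and observable space $\Cd$ associated with the semimetric $\td$ from \eqref{ed:deftd}, yielding the differentiability of $a \mapsto \langle \varphi, \mu_a\rangle$ at $a = a_0$ for every $\varphi \in \Cd$ and the claimed response formula. The main (essentially the only) obstacle is the uniform--in--$a$ version of \nameref{asuA} described above; once that bookkeeping is done the conclusion is immediate.
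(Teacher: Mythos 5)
Your proposal is correct and follows the same route as the paper: the published proof likewise observes that condition \eqref{eq:condition_r} gives \nameref{asuA} (as recalled in \autoref{subsec:QGspectralgap}, with Lyapunov function $V(\q)=\vertiii{\q}_{-1}^2$) and then applies \autoref{thm:responseU} directly. Your extra remarks on the uniformity in $a$ of the constants in \autoref{tab:parametertranslateQG} are consistent with the paper's discussion and do not change the argument.
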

    \begin{proof}
      Since \eqref{eq:condition_r} is satisfied then, as discussed in \autoref{subsec:QGspectralgap}, \nameref{asuA} holds. Then the thesis follows from \autoref{thm:responseU}.
    \end{proof}
   For this model the requirement for the forcing to be in the range of the noise implies that the forcing has to act on the same layer where the noise does. This is a natural assumption in some applications for example if the random term accounts for changes in the intensity of the average wind forcing on the upper ocean. However, the presented methodology cannot deal with the response to changing forcings in the bottom layer.
    Yet for such forcings that are not necessarily in the range of the noise, we can nevertheless show fractional response.
    \begin{theorem}
    \label{thm:QGHolder}
        Consider \eqref{eq:QG_stoc_vec} with parameters satisfying \eqref{eq:condition_r} and parameter $a$ in the interval $\Ieps = (a_0 - \varepsilon, a_0 + \varepsilon)\subset \R$. Suppose $f(a)$ is $\beta$--H\"{o}lder continuous as a function from $\Ieps$ into $H^{-2}$, namely there exists $C_f= C_f(a_0, \varepsilon)$ such that 
        \begin{equation*}
            \|f(a_1) - f(a_2) \|_{-2}\leq C_f |a_1 - a_2|^\beta \fa a_1, a_2 \in \Ieps.
        \end{equation*}
        Then for all $\alpha \in (0, \alpha_0)$, with $\alpha_0$ as in \eqref{eq:QGupsilonalpha0}, there exists $c=c(\varepsilon)$ such that 
        \begin{equation*}
            |\langle \varphi, \mu_{a_1} - \mu_{a_2}\rangle |\leq c\|\varphi\|_{\td} |a_1 - a_2|^{\alpha\beta} \fa \varphi\in \Cd
        \end{equation*}
        for all $a_1, a_2\in \Ieps$, namely the map $a \mapsto  \mu_a $ is locally $\alpha\beta$-H\"{o}lder continuous.
    \end{theorem}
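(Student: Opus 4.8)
The plan is to follow the template of \autoref{thm:NSHolder}: check that the 2LQG system \eqref{eq:QG_stoc_vec} satisfies \nameref{asuH} and then invoke \autoref{thm:SPDEHolder}, with the role of the $\cH$--norm now played by $\vertiii{\cdot}_{-1}$ and that of the $\cV$--norm by $\vertiii{\cdot}_0$. Condition \ref{H1} is essentially free: \autoref{subsec:QGspectralgap} recalls that \eqref{eq:condition_r} yields \nameref{asuA} at a fixed parameter with Lyapunov function $V(\q) = \vertiii{\q}_{-1}^2$, and since the dependence of the constants on the forcing is explicit (\autoref{tab:parametertranslateQG}) and $a \mapsto f(a)$ is locally bounded (being H\"{o}lder), one replaces $\|f(a)\|_{-2}$ by $\sup_{a\in\Ieps}\|f(a)\|_{-2}$ to get \nameref{asuA} uniformly on $\Ieps$.

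For \ref{H2} I would take two solutions $\q^{(1)},\q^{(2)}$ of \eqref{eq:QG_stoc_vec} driven by the same noise with parameters $a_1,a_2$, set $\mathbf{w}:=\bfpsi^{(1)}-\bfpsi^{(2)}$ so that $\q^{(1)}-\q^{(2)}=-\tA\mathbf{w}$, subtract the equations (the noise cancels), and test with $-\mathbf{w}$. Using \autoref{lemma:ch1propertiesnorms}, the dissipativity of $\nu\Delta^2$ and of the bottom friction, the antisymmetry identities and the bound \eqref{eq:bound(B(u,u),v)} from \autoref{lemma:ch1propertiesB} (which kill every term quadratic in $\mathbf{w}$ except a single remainder), and Young's inequality, I expect a differential inequality of the shape
\begin{equation*}
    \ddt \vertiii{\q^{(1)}(t)-\q^{(2)}(t)}_{-1}^2 \leq \tfrac{k_0^2}{\nu}\,\vertiii{\q^{(1)}(t)}_0^2\,\vertiii{\q^{(1)}(t)-\q^{(2)}(t)}_{-1}^2 + C\,\|f(a_1)-f(a_2)\|_{-2}^2 ;
\end{equation*}
since the difference vanishes at $t=0$, Gronwall's lemma together with the $\beta$--H\"{o}lder continuity of $f$ gives \ref{H2} with $\kappa_1=k_0^2/\nu$, as required.

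For \ref{H3} I would apply It\^{o}'s formula to $\eta\vertiii{\q(t)}_{-1}^2$: the transport term $\beta\partial_x\bfpsi$ integrates to zero, the nonlinearity drops out since $(B(\bfpsi,\bfpsi),\bfpsi)=0$, viscosity and bottom friction are dissipative, the forcing is absorbed by Young's inequality, and the It\^{o} correction equals $T_Q=\Tr((Q^{1/2})^*\tA^{-1}Q^{1/2})<\infty$. Feeding this into \autoref{lemma: Lemma5.1HMatt08} with $M(t)=\eta\vertiii{\q(t)}_{-1}^2$ and $Z(t)=M(t)$ (bounding the drift from below via \eqref{eq:ch1*normpoincare}) yields
\begin{equation*}
    \E\exp\!\big(\eta\vertiii{\q(t)}_{-1}^2\big) \leq c(a)\exp\!\big(\eta\vertiii{\q_0}_{-1}^2\, e^{-\chi t}\big),\qquad \chi=\tfrac{\nu\lambda_1}{2c_0},
\end{equation*}
with $c(a)$ bounded uniformly on $\Ieps$, provided $\eta$ lies below a threshold $\eta_1\sim \nu\lambda_1^2/(c_0 h_1\Tr Q)$ forced by hypothesis (iii) of the lemma. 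It remains to note that $\eta=\alpha\upsilon$ is admissible for every $\alpha\in(0,\alpha_0)$: by \eqref{eq:QGupsilonalpha0} one has $\alpha_0\upsilon<2\gamma$, and $\gamma$ is free apart from $\kappa_2=\nu-2\gamma\Tr Q/\lambda_1^2>0$, so it can be shrunk until $2\gamma<\eta_1$, whence $\alpha\upsilon<\eta_1$ and \ref{H3} holds; \autoref{lemma:intexpmu_a} then turns \ref{H3} into $\int e^{\alpha\upsilon|x|^2}\mu_a(dx)<c(a)$, i.e.\ condition \ref{F3}.

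With \nameref{asuH} established, \autoref{thm:SPDEHolder} applies, and tracking the H\"{o}lder exponent $\beta$ of the forcing through its proof (as in \autoref{thm:NSHolder}) gives $|\langle\varphi,\mu_{a_1}-\mu_{a_2}\rangle|\leq c\|\varphi\|_{\td}\,|a_1-a_2|^{\alpha\beta}$ for all $\varphi\in\Cd$, which is the claim. I expect the main obstacle to be the bookkeeping in \ref{H2} --- verifying that the cross terms of $B$ in the difference equation collapse to a single remainder controllable by \eqref{eq:bound(B(u,u),v)} with exactly the constant $\kappa_1=k_0^2/\nu$ needed to match \nameref{asuA} --- together with the It\^{o}--Lyapunov estimate for \ref{H3}, where the noise covariance must be pushed correctly through $\tA^{-1}$ (so that $T_Q$, not $\Tr Q$, enters the dissipation balance) and one must confirm $\alpha_0\upsilon<\eta_1$.
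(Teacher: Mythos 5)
Your proposal is correct and follows essentially the same route as the paper: verify \nameref{asuH} (uniformity of \nameref{asuA} over $\Ieps$ via the explicit forcing dependence, an energy estimate on the difference of solutions tested against the streamfunction difference plus Gronwall for \ref{H2}, and an It\^{o}--exponential-martingale bound via \autoref{lemma: Lemma5.1HMatt08} for \ref{H3} with the check $\alpha_0\upsilon<\eta_1$ by shrinking $\gamma$), then invoke \autoref{thm:SPDEHolder}. The only cosmetic deviation is your choice $Z=M$ in the lemma, whereas the paper takes $Z(t)=\eta\lambda_1^{-1}\vertiii{\q(t)}_0^2$; both satisfy the lemma's hypotheses with the same constants $b_2=\nu\lambda_1/c_0$ and $b_3=4\eta h_1\lambda_1^{-1}\Tr Q$ by \autoref{lemma:ch1propertiesnorms} and Poincar\'{e}.
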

    \begin{proof}
    We want to ensure that \nameref{asuH} holds to apply \autoref{thm:SPDEHolder}. By the results in the literature (\cite{carigi2022exponential}) we know \nameref{asuA} holds, as discussed in \autoref{subsec:QGspectralgap}, and hence \ref{H1} holds. 
    \paragraph{Proof of \ref{H3}} As seen for the Navier--Stokes equation it is sufficient to check that the conditions of \autoref{lemma: Lemma5.1HMatt08} are met.
        Taking the $\bLtwo$ product of \eqref{eq:QG_stoc_vec} with $\eta \bfpsi = - \eta \tA^{-1}\q$, using \eqref{eq:ch1(u,v)=|u|*} and \eqref{eq: (B(U,V), U) = 0} we have 
        \begin{equation*}
            d(\eta \vertiii{\q}_{-1}^2) =  - 2 \eta \left(  \nu |\Delta \bfpsi|^2 + h_1 \langle f(a), \psi_1\rangle + rh_2\| \psi_2\|^2 \right) \,dt \\  +\eta T_Q \, dt -  2\eta h_1\left( \psi_1, dW(t)\right) .
        \end{equation*}
       where $T_Q = \Tr (Q^{1/2})^* \tA^{-1} Q^{1/2}$. 
       We check the conditions of \autoref{lemma: Lemma5.1HMatt08} with $M(t) := \eta \vertiii{\q(t)}^2_{-1}$, $Z(t) := \eta \lambda_1^{-1} \vertiii{\q(t)}_0^2$.
    \begin{enumerate}[label=(\roman*)]
        \item Set the function $F(t)$ to be 
        \begin{equation*}
             F(t) := - 2 \eta \left(h_1 \langle f(a), \psi_1\rangle + rh_2\| \psi_2\|^2 + \nu |\Delta \bfpsi|^2  \right) + \eta T_Q. 
        \end{equation*}
       By Cauchy-Schwartz, Young and Poincar\'{e} inequalities and dropping the term $rh_2\|\phi_2\|$ we get
        \begin{align*}
           F(t) & \leq  \frac{\eta h_1}{\nu}\| f(a) \|^2_{-2} + \eta \nu h_1|\Delta \psi_1| ^2   - 2 \eta \nu |\Delta \bfpsi|^2 + \eta T_Q\\
           & \leq \eta \left(\frac{ h_1}{\nu }\| f(a) \|_{-2}^2 +  T_Q \right) - \eta \nu |\Delta \bfpsi|^2 .
        \end{align*}
        Then estimating $|\Delta \bfpsi|^2$ by \eqref{eq:ch12normpoincare}, i.e.~$\vertiii{\q}_0^2 \leq c_0 |\Delta \bfpsi|^2$, 
        \begin{equation*}
            F(t) \leq b_1 - b_2 \eta \lambda_1^{-1}\vertiii{\q(t)}_0^2
        \end{equation*}
        with 
        \[ 
        b_1 = \eta \left(\frac{ h_1}{\nu }\| f(a) \|_{-2}^2 +  T_Q \right) \quand b_2 =   \frac{\nu\lambda_1}{c_0}. 
        \]
        \item  By Poincar\'{e} inequality $M(t) = \eta \vertiii{\q(t)}^2_{-1} \leq \eta \lambda_1^{-1} \vertiii{\q(t)}_0^2 = Z(t)$.
        \item It is easy to see that there is a standard real-valued Wiener process $B(t, \omega)$ such that 
        \begin{equation*}
            -2\eta h_1(\psi_1(t), dW(t)) = - 2 \eta h_1 \| (\psi_1(t), \cdot )\|_{L_2^0}^2 \, dB(t) =: G(t)\, dB. 
        \end{equation*}
        We then have to ensure that there exists $b_3 \in (0, b_2)$ such that $G^2 \leq b_3 Z$ almost surely i.e.~
       \begin{equation}
        \label{eq:ch5proofqgiii}
             4\eta^2 h_1^2\| (\psi_1, \, \cdot ) \|_{L_2^0}^2 \leq b_3 \eta \lambda_1^{-1} \vertiii{\q(t)}_0^2.
        \end{equation}
        By definition of $L_2^0$ we have 
        \begin{equation*}
             \|(\psi_1, \cdot) \|_{L_2^0}^2 =\sum_{k\in \N} |(\psi_1, Q^{1/2}e_k)|^2.
        \end{equation*}
       By Cauchy-Schwartz inequality and Poincar\'{e} inequality it follows
        \begin{equation*}
            \|(\psi_1, \cdot) \|_{L_2^0}^2 \leq |\psi_1|^2 \sum_{k\in \N}  |Q^{1/2}e_k|^2 = \Tr Q  |\psi_1|^2 \leq  \frac{ \Tr Q}{\lambda_1^{2}h_1} |\Delta \bfpsi|^2
        \end{equation*}
         and consequently by \eqref{eq:ch12normpoincare}, \eqref{eq:ch5proofqgiii} holds setting $b_3 = 4 \eta h_1 \lambda_1^{-1} \Tr Q$. 
    \end{enumerate}
    As we require $b_2> b_3$ i.e.~
    \begin{equation*}
       \frac{\nu\lambda_1}{c_0} > 4 \eta h_1 \lambda_1^{-1} \Tr Q 
    \end{equation*}
    we get that for all 
    \begin{equation*}
        0< \eta < \frac{\nu\lambda_1^2}{4 c_0 h_1 \Tr Q}=:\eta_1,
    \end{equation*}
    the hypothesis of \autoref{lemma: Lemma5.1HMatt08} hold, giving 
        \begin{equation*}
              \E \exp(\eta \vertiii{\q(t)}_{-1}^2) \leq c(a) \exp(\eta \vertiii{\q(0)}_{-1}^2e ^{-b_2 t/2})
        \end{equation*}
        for all $\eta < \eta_1$ and all $t > 0$ with 
        \[ 
            c(a) = \frac{b_2 \exp(\frac{2b_1}{b_2})}{b_2 - b_3} = \frac{\frac{\nu\lambda_1}{c_0} \exp(\frac{2c_0 \eta}{\nu\lambda_1}\left(\frac{ h_1}{\nu \lambda_1}\|f(a) \|^2_{-1} + T_Q \right))}{\frac{\nu\lambda_1}{c_0} - 4 \eta h_1 \lambda_1^{-1} \Tr Q}.
        \]
        which stays uniformly bounded for all $a\in \Ieps$. We have only to show that $\eta = \alpha_0 \upsilon < \eta_1$.
        By definition of $\alpha_0$ we have that
        \begin{equation*}
           \alpha_0 \upsilon = \upsilon \left( \frac{1}{2} \wedge \frac{2\gamma}{2\gamma + \upsilon}\right)< 2\gamma.
        \end{equation*}
        Recall that $\gamma$ is an arbitrary parameter introduced so that $\kappa_2$ stays positive, i.e.
        \begin{equation*}
            \gamma < \frac{\nu \lambda_1^2}{2\Tr Q}.
        \end{equation*}
        Then picking 
        \begin{equation*}
            2\gamma< \frac{\nu \lambda_1^2}{2\Tr Q}\left( 1 \wedge (2c_0h_1)^{-1}\right)\leq \eta_1,
        \end{equation*}
        it is ensured that $\alpha_0\upsilon< \eta_1$.
        
    \paragraph{Proof of \ref{H2}} Consider $\q(t)$ and $\tq(t)$, unique solutions of \eqref{eq:QG_stoc_vec} respectively with parameter $a_1$ and $a_2$, and same realization of the noise. Then the difference $\bfxi :=\q- \tq$, with corresponding streamfunction $\bfphi := \bfpsi - \tilde{\bfpsi}$, satisfies the following equation 
        \begin{align*}
            \begin{split}
            &\dv{\bfxi}{t} + B(\bfphi, \bfpsi) + B(\tilde{\bfpsi}, \bfphi) + \beta \partial_1 \bfphi= \nu \Delta^2\bfphi + \left( \begin{array}{c} f(a_1) - f(a_2) \\ -r\Delta \phi_2 \end{array} \right)\\
            & \bfxi = \left( \Delta + M \right)\bfphi,\\
            &\bfxi(0) = 0 
        \end{split}
        \end{align*}
        To bound $\vertiii{\bfxi}_{-1}$ we take the $\bLtwo$ product with $\bfphi$: by the properties of the nonlinearity and Green theorem we get  
        \begin{equation*}
            \tfrac{1}{2}\ddt\vertiii{\bfxi}_{-1}^2 + \nu |\Delta\bfphi|^2 + rh_2\|\phi_2\|^2 = (B(\tpsi, \bfphi), \bfphi) - h_1 \langle f(a_1) - f(a_2) , \phi_1 \rangle  .
        \end{equation*}
        By the bound on the bilinearity \eqref{eq:bound(B(u,u),v)} and Young's inequality we have 
        \begin{equation*}
             |(B(\tpsi, \bfphi), \bfphi)|\leq \tfrac{k_0^2}{2\nu}|\Delta \tpsi|^2\|\bfphi\|^2 + \tfrac{\nu}{2}|\Delta \bfphi|^2,
        \end{equation*}
        and by Young's inequality and the H\"{o}lder continuity of $f$
        \begin{align*}
            h_1 |\langle f(a_1) - f(a_2), \phi_1 \rangle | 
            &\leq \tfrac{h_1}{\nu}\|f(a_1) - f(a_2)\|^2_{-2} + \tfrac{\nu  h_1}{4}| \Delta \phi_1 |^2 \\
             &\leq \tfrac{h_1C_f^2}{\nu}|a_1 - a_2|^{2 \beta}+ \tfrac{\nu  h_1}{4}| \Delta \phi_1 |^2
        \end{align*}
        so that 
        \begin{equation*}
            \tfrac{1}{2}\ddt\vertiii{\bfxi}_{-1}^2 +  \tfrac{\nu}{2} |\Delta \bfphi|^2 + rh_2\|\phi_2\|^2\leq  \tfrac{k_0^2}{2\nu}|\Delta \tpsi|^2 \|\bfphi\|^2  +  \tfrac{h_1C_f^2}{\nu} |a_1 - a_2|^{2 \beta}+ \tfrac{\nu  h_1}{4}| \Delta \phi_1 |^2.
        \end{equation*}
        Finally, rearranging and using \eqref{eq:ch1*normpoincare} we have dropping the second and third term on the left hand side
         \begin{equation*}
            \ddt\vertiii{\bfxi}_{-1}^2 \leq  \tfrac{k_0^2}{\nu}|\Delta \tpsi|^2  \vertiii{\bfxi}_{-1}^2  + \tfrac{h_1C_f^2}{\nu}|a_1 - a_2|^{2\beta},
        \end{equation*}
        and by Gronwall's lemma 
        \begin{equation*}
            \vertiii{\bfxi(t)}_{-1}^2 \leq   \tfrac{h_1C_f^2}{\nu}|a_1 - a_2|^{2\beta} \int_0^t \exp\left( \tfrac{k_0^2}{\nu}\int_s^t |\Delta \tpsi|^2 \, d\tau\right) \, ds .
        \end{equation*}
         Therefore \ref{H2} is satisfied with $C= C(t) = t h_1  C_f^2/\nu  $. 
    \end{proof}



\paragraph{} Dipartimento di Ingegneria e Scienze dell’Informazione e Matematica, Universit\`{a} degli Studi dell’Aquila, 67100 L’Aquila, Italy\\
Centre for the Mathematics of Planet Earth, University of Reading, Reading, RG6 6AX, UK.\\
\textit{Email:} giulia.carigi@univaq.it

\paragraph{} Dipartimento di Ingegneria e Scienze dell’Informazione e Matematica, Universit\`{a} degli Studi dell’Aquila, 67100 L’Aquila, Italy\\
Centre for the Mathematics of Planet Earth, University of Reading, Reading, RG6 6AX, UK.\\
\textit{Email:} tobias.kuna@univaq.it

\paragraph{} Department of Mathematics and Statistics, Department of Meteorology,\\ and Centre for the Mathematics of Planet Earth,\\ University of Reading, Reading, RG6 6AX, UK.\\
\textit{Email:} j.broecker@reading.ac.uk

\end{document}